\pgfplotsset{compat=1.14}
\newtheorem{thm}{Theorem}
\newtheorem{defn}{Definition}
\newtheorem{lem}{Lemma}
\newtheorem{cor}{Corollary}
\newtheorem{rem}{Remark}
\newtheorem{clm}{Claim}
\newtheorem{prob}{Problem}
\title{Reverse Prevention Sampling for Misinformation Mitigation in Social Networks\footnote{A preliminary version of this work will appear in \emph{ICDT 2020}.}}
\author{Michael~Simpson}{Department of Computer Science, University of Victoria, Canada}{}{}{}
\author{Venkatesh~Srinivasan}{Department of Computer Science, University of Victoria, Canada}{}{}{}
\author{Alex~Thomo}{Department of Computer Science, University of Victoria, Canada}{\{simpsonm,srinivas,thomo\}@uvic.ca}{}{}
\authorrunning{M. Simpson and V. Srinivasan and A. Thomo}
\keywords{Graph Algorithms, Social Networks, Misinformation Prevention}
\begin{document}

\maketitle

\begin{abstract}
\sloppy In this work, we consider misinformation propagating through a social network and study the problem of its prevention. In this problem, a ``bad'' campaign starts propagating from a set of seed nodes in the network and we use the notion of a limiting (or ``good'') campaign to counteract the effect of misinformation. The goal is to identify a set of $k$ users that need to be convinced to adopt the limiting campaign so as to minimize the number of people that adopt the ``bad'' campaign at the end of both propagation processes.

This work presents \emph{RPS} (Reverse Prevention Sampling), an algorithm that provides a scalable solution to the misinformation mitigation problem. Our theoretical analysis shows that \emph{RPS} runs in $O((k + l)(n + m)(\frac{1}{1 - \gamma}) \log n / \epsilon^2 )$ expected time and returns a $(1 - 1/e - \epsilon)$-approximate solution with at least $1 - n^{-l}$ probability (where $\gamma$ is a typically small network parameter and $l$ is a confidence parameter). The time complexity of \emph{RPS} substantially improves upon the previously best-known algorithms that run in time $\Omega(m n k \cdot POLY(\epsilon^{-1}))$. We experimentally evaluate \emph{RPS} on large datasets and show that it outperforms the state-of-the-art solution by several orders of magnitude in terms of running time. This demonstrates that misinformation mitigation can be made practical while still offering strong theoretical guarantees.
\end{abstract}

\section{Introduction}

Social networks allow for widespread distribution of knowledge and information in modern society as they have rapidly become a place to hear the news and discuss social topics. Information can spread quickly through the network, eventually reaching a large audience, especially so for influential users. While the ease of information propagation in social networks can be beneficial, it can also have disruptive effects. In recent years, the number of high profile instances of misinformation causing severe real-world effects has risen sharply. These examples range across a number of social media platforms and topics \cite{telegraph2018_foster, independent2018_oppenheim, telegraph2018_graham, cnet2018_hautala, guardian2018_solon, aljazeera2018_abeshouse}. For example, a series of bogus tweets from a trusted news network referring to explosions at the White House caused immediate and extensive repercussions in the financial markets \cite{telegraph2018_foster}. During a recent shooting at YouTube's headquarters, and before police managed to secure the area, a wave of misinformation and erroneous accusations were widely disseminated on Twitter causing panic and confusion \cite{independent2018_oppenheim, telegraph2018_graham}. Finally, there has been much discussion on the role misinformation and fake news played in the 2016 U.S.\ presidential election with sites such as Reddit and Facebook being accused of harbouring and spreading divisive content and misinformation \cite{cnet2018_hautala, guardian2018_solon, aljazeera2018_abeshouse}. 
Thus, in order for social networks to serve as a reliable platform for disseminating critical information, it is necessary to have tools to limit the spread of misinformation.

Budak et al.\ \cite{budak2011limiting} were among the first to formulate the problem of misinformation mitigation as a combinatorial optimization problem. By building upon the seminal work of Kempe et al.\ \cite{kempe2003} on \emph{influence maximization} to a model that can handle multiple campaigns (``bad'' and ``good''), they present a greedy approach that provides a $(1 - 1/e - \epsilon)$-approximate solution. Unfortunately, the greedy approach of \cite{budak2011limiting} is plagued by the same scaling issues as \cite{kempe2003} when considering large social networks and is further exacerbated by the added complexity of tracking multiple cascades which requires costly shortest path computations. This leads us to the motivating question for this paper: Can we find scalable algorithms for the misinformation mitigation problem introduced in \cite{budak2011limiting}?

The scalability hurdle in the single campaign setting was recently resolved by Borgs et al.\ \cite{borgs2012} when they made a theoretical breakthrough that fundamentally shifts the way in which we view the influence maximization problem. Their key insight was to reverse the question of ``what subset of the network can a particular user influence'' to ``who could have influenced a particular user''. Their sampling method runs in close to linear time and returns a $(1 - 1/e - \epsilon)$-approximate solution with at least $1 - n^{-l}$ probability. 
In addition, Tang et al.\ \cite{Tang2014} presented a significant advance that improved the practical efficiency of Borgs et al. through a careful theoretical analysis that rids their approach of a large hidden constant in the runtime guarantee. Borgs et al.~\cite{borgs2012} leave open the question whether their framework can be extended to other influence propagation models. 

In this work, we resolve the question of \cite{borgs2012} for the misinformation mitigation problem and achieve scalability in the multi-campaign model. We complement our theoretical analysis with extensive experiments which show an improvement of several orders of magnitude over Budak et al.~\cite{budak2011limiting}. Since influence in the single campaign setting corresponds to reachability in the network, our solution requires mapping the concept of reachability to an analogous notion in the multi-campaign model for misinformation mitigation. Our first contribution is to show that reachability alone is not sufficient in determining the ability to save a particular node from the bad campaign. In order to address this challenge, we introduce a crucial notion of ``obstructed'' nodes, which are nodes such that all paths leading to them can be blocked by the bad campaign.

Using our newly defined notion of obstruction, we develop an efficient algorithm for the misinformation mitigation problem that provides much improved scalability over the existing Monte Carlo-based greedy approach of \cite{budak2011limiting}. A novel component of this algorithm is a procedure to compute the set of unobstructed nodes that could have saved a particular node from adopting the misinformation. We obtain theoretical guarantees on the expected runtime and solution quality for our new approach and show that its expected runtime substantially improves upon the expected runtime of \cite{budak2011limiting}. Additionally, we rule out sublinear algorithms for our problem through a lower bound on the time required to obtain a constant approximation.

Finally, from an experimental point of view, we show that our algorithm gives a significant improvement over the state of the art algorithm and can efficiently handle graphs with more than 50 million edges. In summary, the contributions of this paper are:

\vspace{-5pt}

\begin{enumerate}
\item \sloppy We introduce the concept of \emph{obstructed} nodes that fully captures the necessary conditions for preventing the adoption of misinformation in the multi-campaign model. In the process, we close a gap in the work of \cite{budak2011limiting}.
\item We design and implement a novel procedure for computing the set of nodes that could save a particular user from adopting the misinformation.
\item We propose a misinformation mitigation approach that returns a $(1 - 1/e - \epsilon)$-approximate solution with high probability in the multi-campaign model and show that its expected runtime substantially improves upon that of the algorithm of Budak et al.\ \cite{budak2011limiting}.
\item We give a lower bound of $\Omega(m + n)$ on the time required to obtain a constant approximation for the misinformation mitigation problem.
\item Our experiments show that our algorithm gives an improvement of several orders of magnitude over Budak et al.~\cite{budak2011limiting} and can handle graphs with more than $50$ million edges.
\end{enumerate}    

\section{Related Work}

There exists a large body of work on the Influence Maximization problem first proposed by Kempe et al.\ \cite{kempe2003}. The primary focus of the research community has been related to improving the practical efficiency of the Monte Carlo-based greedy approach under the Independent Cascade (IC) or Linear Threshold (LT) propagation models. These works fall into two categories: heuristics that trade efficiency for approximation guarantees \cite{jung2012irie, wang2012scalable} and practical optimizations that speed up the Monte Carlo-based greedy approach while retaining the approximation guarantees \cite{Leskovec2010, Chen2010, Goyal2013}. Despite these advancements, it remains infeasible to scale the Monte Carlo-based approach to web-scale networks.

Borgs' et al.\ \cite{borgs2012} brought the first asymptotic runtime improvements while maintaining the $(1 - 1/e - \epsilon)$-approximation guarantees with their \emph{reverse influence sampling} technique. Furthermore, they prove their approach is near-optimal under the IC model. Tang et al.\ \cite{Tang2014} presented practical and theoretical improvements to the approach and introduced novel heuristics that result in up to $100$-fold improvements to the runtime.

Incorporating the spread of multiple campaigns is split between two main lines of work: (1) studying influence maximization in the presence of competing campaigns \cite{Bharathi2007, Lin2015, Pathak2010, Li2013} and (2) limiting the spread of misinformation and rumours by launching a truth campaign \cite{budak2011limiting, he2012influence, Fan2013, Nguyen2012containment}. In both cases, existing propagation models (such as IC and LT) are augmented or extended. The work of \cite{budak2011limiting} best captures the idea of preventing the spread of misinformation in a multi-campaign version of the IC model since they aim to minimize the number of users that end up adopting the misinformation. Unfortunately, despite the objective function proving to be monotone and submodular, the Monte Carlo-based greedy solution used in \cite{budak2011limiting} faces the same challenges surrounding scalability as \cite{kempe2003}.

Works \cite{Lin2015, fang2018general} extend the \emph{reverse influence sampling} technique of \cite{borgs2012} to competing campaigns (such as two competing products in \cite{Lin2015} and spreading truth to combat misinformation in \cite{fang2018general}). However, their work differs from ours in an important way: they use a model, different from ours, where the edge probabilities are {\em campaign oblivious}. This alternative model does not capture the notion of misinformation as well as the model we use, but instead is better suited for the influence maximization problem when there are multiple competing campaigns (see \cite{budak2011limiting} for a discussion).

Finally, there exists a large body of work that aims to combat the spread of misinformation from a variety of perspectives beyond those that view the problem through the lens of propagation models such as \cite{shu2017fake, shu2019studying, hassan2017claimbuster, tschiatschek2018fake, kim2018leveraging, pennycook2018crowdsourcing, popat2017truth, jin2016news, wu2018tracing, shiralkar2017finding, simpson2016clearing}.


\section{Preliminaries}

In this section, we formally define the multi-campaign diffusion model, the eventual influence limitation problem presented by Budak et al.\ \cite{budak2011limiting}, and present an overview of the state-of-the-art reverse sampling approach \cite{kempe2003, borgs2012, Tang2014} for the influence maximization problem. 

\vspace{-5pt}

\paragraph*{Diffusion Model} Let $C$ (for ``bad {\em C}ampaign'') and $L$ (for ``{\em L}imiting'') denote two influence campaigns. Let $\mathcal{G} = (V, E, p)$ be a social network with node set $V$ and directed edge set $E$ ($|V| = n$ and $|E| = m$) where $p$ specifies campaign-specific pairwise influence probabilities (or weights) between nodes. That is, $p : E \times Z \rightarrow [0, 1]$ where $Z \in \{ C, L \}$. For convenience, we use $p_Z(e)$ for $p(e, Z)$. Further, let $G = (V, E)$ denote the underlying unweighted directed graph. Given $\mathcal{G}$, the Multi-Campaign Independent Cascade model (MCIC) of Budak et al.\ \cite{budak2011limiting} considers a time-stamped influence propagation process as follows:

\vspace{-5pt}

\begin{enumerate}
\setlength\itemsep{1pt}
\item At timestamp $1$, we \emph{activate} selected sets $A_C$ and $A_L$ of nodes in $\mathcal{G}$ for campaigns $C$ and $L$ respectively, while setting all other nodes \emph{inactive}.
\item If a node $u$ is first activated at timestamp $i$ in campaign $C$ (or $L$), then for each directed edge $e$ that points from $u$ to an inactive neighbour $v$ in $C$ (or $L$), $u$ has $p_C(e)$ (or $p_L(e)$) probability to activate $v$ at timestamp $i + 1$. After timestamp $i + 1$, $u$ cannot activate any node.
\item In the case when two or more nodes from different campaigns are trying to activate $v$ at a given time step we assume that the ``good information" (i.e. campaign $L$) takes effect.
\item Once a node becomes activated in one campaign, it never becomes inactive or changes campaigns.
\end{enumerate}

He et. al.\ \cite{he2012influence} consider the opposite policy to (3) where the misinformation succeeds in the case of a tie-break. We note that our algorithms presented in this work are applicable for both choices of the tie-break policy.

\subsection{Formal Problem Statement}

A natural objective, as outlined in \cite{budak2011limiting}, is ``saving'' as many nodes as possible. That is, we seek to minimize the number of nodes that end up adopting campaign $C$ when the propagation process is complete. This is referred to as the \emph{eventual influence limitation problem (EIL)}.

Let $A_C$ and $A_L$ be the set of nodes from which campaigns $C$ and $L$ start, respectively. Let $I(A_C)$ be the set of nodes that are activated in campaign $C$ in the absence of $L$ when the above propagation process converges and $\pi(A_L)$ be the size of the subset of $I(A_C)$ that campaign $L$ prevents from adopting campaign $C$. We refer to $A_L$ and $A_C$ as the \emph{seed sets}, $I(A_C)$ as the \emph{influence} of campaign $C$, and $\pi(A_L)$ as the \emph{prevention} of campaign $L$. The nodes that are prevented from adopting campaign $C$ are referred to as \emph{saved}. Note that $\pi(A_L)$ is a random variable that depends on the edge probabilities that each node uses in determining out-neighbors to activate.

Budak et al.\ \cite{budak2011limiting} present a simplified version of the problem that captures the idea that it may be much easier to convince a user of the truth. Specifically, the information from campaign $L$ is accepted by users with probability 1 ($p_L(e) = 1$ if edge $e$ exists and $p_L(e) = 0$ otherwise) referred to as the \emph{high effectiveness property}. In \cite{budak2011limiting} it is shown that even with these restrictions EIL with the high effectiveness property is NP-hard. Interestingly, with the high effectiveness property, the prevention function is submodular and thus a Monte Carlo-based greedy approach (referred to here as \emph{MCGreedy}) yields approximation guarantees. 

We motivate the high effectiveness property with the following two real-world scenarios: (1) the phenomenon of ``death hoaxes'' (where celebrities or other notable figures are claimed to have died) have a strong corrective measure when the victim, or a close relative, makes an announcement on their personal account that contradicts the rumour and (2) false reporting of natural disasters can be countered by trusted news organizations providing coverage of the location of the purported scene. In both cases, the sharing of links to strong video, photographic, or text evidence that is also coming from a credible source lends itself to a scenario following the high effectiveness property. In addition to the scenarios we have outlined, the model is attractive because this assumption leads to interesting theoretical guarantees. Budak et al.\ study and obtain results for EIL with the high effectiveness property and is the problem that we consider in this work.



\begin{prob}
Given $\mathcal{G}$, seed set $A_C$, and a positive integer $k$, the eventual influence limitation (EIL) problem asks for a size-$k$ seed set $A_L$ maximizing the value of $\mathbb{E}[\pi(A_L)]$ under the MCIC model with the high effectiveness property.
\end{prob}

\vspace{-10pt}

\paragraph*{Possible Worlds Interpretation} To facilitate a better understanding of MCIC, we define a \emph{Possible World (PW) model} that provides an equivalent view of the MCIC model and follows a widely used convention when studying IM and related problems \cite{kempe2003, budak2011limiting, Chen2010, Goyal2013, Lin2015, he2012influence, budak2011limiting, Fan2013, Nguyen2012containment}. Given a graph $\mathcal{G} = (V, E, p)$ and the MCIC diffusion model, a possible world $X$ consists of two \emph{deterministic graphs}, one for each campaign, sampled from a probability distribution over $\mathcal{G}$. The stochastic diffusion process under the MCIC model has the following equivalent description: we can interpret $\mathcal{G}$ as a distribution over unweighted directed graphs, where each edge $e$ is independently realized with probability $p_C(e)$ (or $p_L(e)$). Observe, given the high effectivness property, the deterministic graph that defines the possible world for campaign $L$ is simply the underlying unweighted graph $G$. Then, if we realize a graph $g$ according to the probability distribution given by $p_C(e)$, we can associate the set of saved nodes in the original process with the set of nodes which campaign $L$ reaches before campaign $C$ during a \emph{deterministic} diffusion process in $g \sim \mathcal{G}$ by campaign $C$ and in $G$ by campaign $L$. That is, we can compute the set of saved nodes with a deterministic cascade in the resulting possible world $X = (g, G)$. The following theorem from \cite{Chen2013} establishes the equivalence between this possible world model and MCIC. This alternative PW model formulation of the EIL problem under the MCIC model will be used throughout the paper.

\begin{thm}[\cite{Chen2013}]
For any fixed seed sets $A_C$ and $A_L$, the joint distributions of the sets of $C$-activated nodes and $L$-activated nodes obtained (i) by running a MCIC diffusion from $A_C$ and $A_L$ and (ii) by randomly sampling a possible world $X = (g, G)$ and running a deterministic cascade from $A_C$ in $g$ and $A_L$ in $G$, are the same.
\end{thm}

\subsection{Reverse Sampling for Influence Maximization}

In this section we review the state-of-the-art approach to the well studied \emph{influence maximization problem (IM)}. This problem is posed in the popular Independent Cascade model (IC) which, unlike the MCIC model, only considers a single campaign. The goal here is to compute a seed set $S_{IM}$ of size $k$ that maximizes the influence of $S_{IM}$ in $\mathcal{G}$. In a small abuse of notation, this section refers to a possible world as the single deterministic graph $g \sim \mathcal{G}$ where each edge in $\mathcal{G}$ is associated with a single influence probability $p(e)$.


Borgs et al.\ \cite{borgs2012} were the first to propose a novel method for solving the IM problem under the IC model that avoids the limitations of the original Monte Carlo-based solution \cite{kempe2003}. Their approach, which was later refined by Tang et al.\ \cite{Tang2014}, is based on the concept of \emph{Reverse Reachable (RR) sets} and is orders of magnitude faster than the greedy algorithm with Monte Carlo simulations, while still providing approximation guarantees with high probability. We follow the convention of \cite{Tang2014} and refer to the method of \cite{borgs2012} as \emph{Reverse Influence Sampling (RIS)}. To explain how \emph{RIS} works, Tang et al.\ \cite{Tang2014} introduce the following definitions:

\begin{defn}[Reverse Reachable Set]
The reverse reachable set for a node $v$ in $g \sim \mathcal{G}$ is the set of nodes that can reach $v$. (That is, for each node $u$ in the RR set, there is a directed path from $u$ to $v$ in $g$.)
\end{defn}

\begin{defn}[Random RR Set]
A random RR set is an RR set generated on an instance of $g \sim \mathcal{G}$, for a node selected uniformly at random from $g$.
\end{defn}

Note, a random RR set encapsulates two levels of randomness: (i) a deterministic graph $g \sim \mathcal{G}$ is sampled where each edge $e \in E$ is independently removed with probability $(1 - p(e))$, and (ii) a ``root'' node $v$ is randomly chosen from $g$. The connection between RR sets and node activation is formalized in the following crucial lemma. 

\begin{lem}
\label{lem:borgs} {\em \cite{borgs2012}}
For any seed set $S$ and node $v$, the probability that an influence propagation process from $S$ can activate $v$ equals the probability that $S$ overlaps an RR set for~$v$.
\end{lem}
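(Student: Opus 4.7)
The plan is to exploit the equivalent ``realized-graph'' description of the IC model already introduced in the preliminaries: rather than viewing the propagation as a sequential stochastic process, I would first sample a directed subgraph $g \sim \mathcal{G}$ by including each edge $e$ independently with probability $p(e)$, and then carry out the cascade deterministically on $g$. Standard results (e.g.\ Kempe et al.\ \cite{kempe2003}) guarantee that the distribution over the set of activated nodes is identical under both viewpoints, so I can reason entirely in terms of $g$ and take probabilities at the end.

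First, I would observe that once $g$ has been realized, the set of nodes activated by a propagation starting at $S$ is exactly $R_g(S)$, the set of nodes reachable from $S$ in $g$. In particular, a fixed node $v$ is activated if and only if there is some $u \in S$ and a directed path from $u$ to $v$ in $g$. Second, I would invoke the definition of the RR set: $\mathrm{RR}_g(v)$ is precisely the set of nodes $u$ such that a directed path from $u$ to $v$ exists in $g$. Combining the two, the existence of an activating path from $S$ to $v$ in $g$ is logically equivalent to the condition $S \cap \mathrm{RR}_g(v) \neq \emptyset$, i.e.\ that $S$ overlaps the RR set of $v$.

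Since these two events coincide pointwise on every realization $g$, they coincide as events on the underlying probability space over $\mathcal{G}$. Taking probabilities gives
\[
\Pr[\,v \text{ is activated from } S\,] \;=\; \Pr[\,S \cap \mathrm{RR}(v) \neq \emptyset\,],
\]
which is exactly the claim of the lemma.

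There is no real technical obstacle here beyond carefully invoking the equivalence between the sequential IC cascade and the live-edge/realized-graph formulation; once that equivalence is in hand, the lemma reduces to a tautology about reachability in a directed graph (``$u$ can reach $v$'' and ``$v$ can be reached from $u$'' describe the same edge-path). I would therefore keep the argument short, making sure only to cite the live-edge equivalence and to state the reachability bijection explicitly so that the reader sees why no further probabilistic computation is needed.
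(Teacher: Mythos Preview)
The paper does not actually prove this lemma; it is stated as a cited result from \cite{borgs2012} with no accompanying argument. Your proposal is correct and is essentially the standard proof: couple the IC process to the live-edge graph $g\sim\mathcal{G}$, observe that activation of $v$ from $S$ in $g$ is exactly reachability $S\to v$ in $g$, and note this is the same event as $S\cap\mathrm{RR}_g(v)\neq\emptyset$. For what it is worth, the paper's own proof of the analogous Lemma~\ref{lem:equiv} (the RRC/prevention version) follows precisely this template, so your approach is fully in line with how the authors reason in the multi-campaign case as well.
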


\noindent Based on this result, the \emph{RIS} algorithm runs in two steps:

\vspace{-5pt}

\begin{enumerate}
\item Generate random RR sets from $\mathcal{G}$ until a threshold on the total number of steps taken has been reached.
\item Consider the maximum coverage problem of selecting $k$ nodes to cover the maximum number of RR sets generated. Use the standard greedy algorithm for the problem to derive a $(1 - 1/e)$-approximate solution $S_k^{*}$. Return $S_k^{*}$ as the seed set to use for activation.
\end{enumerate}

The rationale behind \emph{RIS} is as follows: if a node $u$ appears in a large number of RR sets it should have a high probability to activate many nodes under the IC model; hence, $u$'s expected influence should be large. As such, we can think of the number of RR sets $u$ appears in as an estimator for $u$'s expected influence. By the same reasoning, if a size-$k$ node set $S_k^{*}$ covers most RR sets, then $S_k^{*}$ is likely to have the maximum expected influence among all size-$k$ node sets in $\mathcal{G}$ leading to a good solution to the IM problem. As shown in \cite{Tang2014}, Lemma \ref{lem:borgs} is the key result that underpins the approximation guarantees of \emph{RIS}. 

The main contribution of Borgs et al.\ is an analysis of their proposed threshold-based approach: \emph{RIS} generates RR sets until the total number of nodes and edges examined during the generation process reaches a pre-defined threshold $\Gamma$. Importantly, $\Gamma$ must be set large enough to ensure a sufficient number of samples have been generated to provide a good estimator for expected influence. They show that when $\Gamma$ is set to $\Theta((m+n)k \log n / \epsilon^2)$, \emph{RIS} runs in near-optimal time $O((m+n)k \log n / \epsilon^2)$, and it returns a $(1 - 1/e - \epsilon)$-approximate solution to the IM problem with at least constant probability.

Due to the more complex dynamics involved in propagation under the MCIC model, adapting the reverse sampling approach to solve EIL is far from trivial.

\section{New Definitions}
\label{sec:def}

\begin{table}
\caption{Frequently used notation.}
\label{tbl:notation}
\centering
\begin{tabular}{ |c|m{0.75\textwidth}| } \hline 
\textbf{Notation}&       \textbf{Description} \\ \hline
$\mathcal{G}$&  a social network represented as a weighted directed graph $\mathcal{G}$ \\ \hline
$G$, $G_T$&     the underlying unweighted graph $G$ and its transpose $G_T$ constructed by reversing the direction of each edge \\ \hline
$g$&            a possible world for campaign $C$ obtained by sampling each edge $e \in \mathcal{G}$ independently with probability $p_C(e)$ \\ \hline
$n$, $m$&       the number of nodes and edges in $\mathcal{G}$ respectively \\ \hline
$k$&            the size of the seed set for misinformation mitigation \\ \hline
$C$, $L$&       the misinformation campaign $C$ and the limiting campaign $L$ \\ \hline
$p_C(e)$, $p_L(e)$& the propagation probability on an edge $e$ for campaigns $C$ and $L$ respectively \\ \hline
$\pi(S)$&       the prevention of a node set $S$ in a misinformation propagation process on $\mathcal{G}$ (see Section \ref{sec:def}) \\ \hline
$\omega(R)$, $\omega_{\pi}(R)$&    the number of edges considered in generating an RRC set and that originate from nodes in an RRC set $R$ (see Equation \ref{eqn:width}) \\ \hline
$\mathcal{R}$&  the set of all RRC sets generated by Algorithm \ref{alg:nodeselection} \\ \hline
$\mathcal{F}_{\mathcal{R}}(S)$&     the fraction of RRC sets in $\mathcal{R}$ that are covered by a node set $S$ \\ \hline
$EPT$&      the expected width of a random RRC set \\ \hline
$OPT_L$&    the maximum $\pi(S)$ for any size-$k$ seed set $S$ \\ \hline
$\lambda$&  see Equation \ref{eqn:lambda} \\ \hline
\end{tabular}
\end{table}

In this section we introduce new definitions that are crucial to the development of our approach. In particular, we formalize the notion of \emph{obstructed} nodes which is required to capture the necessary conditions for saving a node.

\vspace{-5pt}

\paragraph*{Identifying Saved Nodes} Given set $A_L$ of vertices and (unweighted) directed graph $g \sim \mathcal{G}$, write $cl_g(A_L)$ for the set of nodes closer to $A_L$ in $G$ than to $A_C$ in $g$. That is, a node $w \in cl_g(A_L)$ if there exists a node $v$ such that $v \in A_L$ and $|SP_G(v,w)| \le |SP_g(A_C,w)|$ where $SP_H(v,w)$ denotes a shortest path from node $v$ to $w$ in graph $H$ and $SP_H(S,w)$ for a set $S$ denotes the shortest path from any node $v \in S$ to $w$ in graph $H$. When $g$ is drawn from $\mathcal{G}$ this is a necessary, but not sufficient\footnote{In Budak et al.'s work, the set of nodes closer to $A_L$ than $A_C$ is established as a necessary and sufficient condition to \emph{save} a node in the MCIC model, but we note that this should be revised to include our \emph{obstructed} condition due to a gap in the proof of Claim 1 in \cite{budak2011limiting}.}, condition for the set of nodes \emph{saved} by $A_L$. We also require that the nodes in $cl_g(A_L)$ not be \emph{obstructed} by the diffusion of campaign $C$ in $g$.

\begin{defn}[Obstructed Nodes]
\label{defn:blocked}
A node $w \in cl_g(A_L)$ is \emph{obstructed} and cannot be saved by $A_L$ if for every path $p$ from $A_L$ to $w$ there exists a node $u$ on $p$ such that $|SP_g(A_C,u)| < |SP_G(A_L,u)|$.
\end{defn}

Let $obs_g(A_L)$ be the set of obstructed nodes for $A_L$. Conceptually, the nodes in $obs_g(A_L)$ are cutoff because some node on the paths from $A_L$ is reached by campaign $C$ before $L$ which stops the diffusion of $L$.

To help illustrate the concept of obstructed nodes, consider the graph presented in Figure \ref{fig:blocked} and the following possible world instance. Assume that the solid lines are \emph{live} edges that make up the deterministic graph $g \sim \mathcal{G}$ for campaign $C$ in the influence propagation process. The dashed lines are edges that were not realized for campaign $C$. The adversary campaign $C$ starts from $v_c$ while the limiting campaign $L$ starts from $v$. Recall, the deterministic graph $G$ for campaign $L$ in this possible world instance is comprised of \emph{both} the solid and dashed edges due to the high effectiveness property. Observe that $| SP_G(v,w) | = 4$ and $| SP_g(A_C,w) | = 5$. However, $w$ cannot be saved in the resulting cascade since at timestamp 1 the node $u$ will adopt campaign $C$. This intersects the shortest path from $v$ to $w$ and therefore campaign $L$ will not be able to reach node $w$ since a node never switches campaigns. Thus, we say that node $w$ is \emph{obstructed} by $C$.

\begin{figure}
\centering
\begin{tikzpicture}

\tikzset{node/.style={circle,draw,minimum size=0.5cm,inner sep=0pt},}
\tikzset{edge/.style={->,> = latex'},}

\node[node] (v) {$v$};
\node[node] (1) [left = 1cm of v] {};
\node[node] (u) [left = 2.5cm of v] {$u$};
\node[node] (2) [below = 0.5cm of u] {};
\node[node] (w) [below = 0.5cm of 2] {$w$};
\node[node] (v_c) [left = 1cm of u] {$v_c$};
\node[node] (3) [left = 1cm of v_c] {};
\node[node] (4) [below = 0.5cm of 3] {};
\node[node] (5) [below = 0.5cm of 4] {};
\node[node] (6) [left = 1cm of w] {};

\path[draw,thick,->,> = latex']
(v) edge [dashed] node {} (1)
(1) edge [dashed] node {} (u)
(u) edge [dashed] node {} (2)
(2) edge [dashed] node {} (w)
(v_c) edge node {} (u)
(v_c) edge node {} (3)
(3) edge node {} (4)
(4) edge node {} (5)
(5) edge node {} (6)
(6) edge node {} (w);
\end{tikzpicture}
\caption{An example illustrating the concept of obstructed nodes where the possible world graph for campaign $C$ is made up of the solid edges and the possible world for campaign $L$ is made up of both solid and dashed lines.}
\label{fig:blocked}
\vspace{-10pt}
\end{figure}
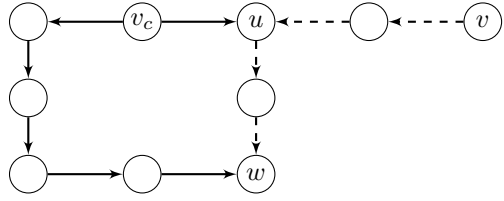

\vspace{-5pt}

\paragraph*{Prevention \& Saviours} Next, we formally define the prevention, $\pi(A_L)$, which corresponds to the number of nodes saved by $A_L$. That is, $\pi(A_L) = | R_g(A_C) \cap ( cl_g(A_L) \setminus obs_g(A_L) ) |$ where $R_H(S)$ is the set of nodes in graph $H$ that are \emph{reachable} from set $S$ (a node $v$ in $H$ is reachable from $S$ if there exists a directed path in $H$ that starts from a node in $S$ and ends at $v$). We write $\mathbb{E}[\pi(A_L)] = \mathbb{E}_{g \sim \mathcal{G}}[\pi(A_L)]$ for the expected prevention of $A_L$ in $\mathcal{G}$. Finally, let $OPT_L = max_{S: |S| = k}\{ \mathbb{E}[\pi(S)] \}$ be the maximum expected prevention of a set of $k$ nodes. 

We refer to the set of nodes that could have saved $u$ as the \emph{saviours} of $u$. A node $w$ is a candidate saviour for $u$ if there is a directed path from $w$ to $u$ in $G$ (i.e.\ reverse reachability). Then, $w$ is a saviour for $u$ subject to the additional constraint that $w$ would not be cutoff by the diffusion of $A_C$ in $g$. That is, a candidate saviour $w$ would be cutoff and cannot be a saviour for $u$ if for every path $p$ from $w$ to $u$ there exists a node $v_b$ such that $|SP_g(A_C,v_b)| < |SP_{G}(w,v_b)|$. We refer to the set of candidate saviours for $u$ that are cutoff as $\tau_g(u)$. Thus, we can define the saviours of $u$ as the set $R_{G^T}(u) \setminus \tau_g(u)$. Therefore, we have:

\begin{defn}[Reverse Reachability without Cutoff Set]
\label{defn:RRC}
The reverse reachability without cutoff (RRC) set for a node $v$ in $g \sim \mathcal{G}$ is the set of saviour nodes of $v$, i.e.\ the set of nodes that can save $v$. (That is, for each node $u$ in the RRC set, $u \in R_{G^T}(u) \setminus \tau_g(u)$.) If $v \not\in R_g(A_C)$ then we define the corresponding RRC set as empty since $v$ is not eligible to be saved.
\end{defn}

\begin{defn}[Random RRC Set]
\label{defn:rRRC}
A random RRC set is an RRC set generated on an instance of $g \sim \mathcal{G}$, for a node selected uniformly at random from $g$.
\end{defn}

\paragraph*{Closing the Gap} Before presenting our reverse sampling approach, we make the following remark regarding obstruction in the context of prior work. The key observation that lead to our definition of obstructed nodes is that the shortest path condition must hold along the \emph{entire} path. This observation was missed by \cite{budak2011limiting} in the MCIC model. Instead, a correct \emph{recursive} definition was provided for the set of nodes that are saved, but the resulting characterization based on shortest paths misses the crucial case of nodes that are obstructed.

Importantly, the solution in \cite{budak2011limiting} can be recovered with a modified proof for Claim 1 and Theorem 4.2. In particular, the statements must include the notion of obstructed nodes in their \emph{inoculation graph} definition, but a careful inspection shows that their objective function remains submodular after this inclusion. As a result, the greedy approach of \cite{budak2011limiting} still provides the stated approximation guarantees and also allows us to incorporate the ideas of \cite{borgs2012} in our solution (as \cite{borgs2012} requires a submodular objective function as well).

\section{Reverse Prevention Sampling}

This section presents our misinformation prevention method, \emph{Reverse Prevention Sampling (RPS)}. At a high level, \emph{RPS}, in the same spirit as \emph{RIS}, consists of two steps. In the first step it derives a parameter $\theta$ that ensures a solution of high quality will be produced. In the second step, using the estimate $\theta$ from step one, it generates $\theta$ RRC sets and then computes the maximum coverage on the resulting collection. More precisely, the two steps are:

\vspace{-5pt}

\begin{enumerate}
\item \textbf{Parameter Estimation. } Compute a lower-bound for the maximum expected prevention among all possible size-$k$ seed sets for $A_L$ and then use the lower-bound to derive a parameter $\theta$.
\item \textbf{Node Selection. } Sample $\theta$ random RRC sets from $\mathcal{G}$ to form a set $\mathcal{R}$ and then compute a size-$k$ seed set $S_k^{*}$ that covers a large number of RRC sets in $\mathcal{R}$. Return $S_k^{*}$ as the final result.
\end{enumerate}

In the rest of this section, we first tackle the challenging task of correctly generating RRC sets in the Node Selection step under the MCIC model. Next, we identify the conditions necessary for the Node Selection of \emph{RPS} to return a solution of good quality and then describe how these conditions are achieved in the Parameter Estimation phase. Table \ref{tbl:notation} provides a reference to some of the frequently used notation. All proofs are in Appendix~A.

\vspace{-5pt}

\paragraph*{Node Selection} The pseudocode of \emph{RPS}'s Node Selection step is presented in Algorithm \ref{alg:nodeselection}. Given $\mathcal{G}$, $k$, $A_C$, and a constant $\theta$ as input, the algorithm stochastically generates $\theta$ random RRC sets, accomplished by repeated invocation of the prevention of misinformation process, and inserts them into a set $\mathcal{R}$. Next, the algorithm follows a greedy approach for the \emph{maximum coverage problem} to select the final seed set. In each iteration, the algorithm selects a node $v_i$ that covers the largest number of RRC sets in $\mathcal{R}$, and then removes all those covered RRC sets from $\mathcal{R}$. The $k$ selected nodes are put into a set $S_k^{*}$, which is returned as the final result.

\begin{algorithm}
\caption{NodeSelection($\mathcal{G}$,$k$,$A_C$,$\theta$)}
\begin{algorithmic}[1]
	\State $\mathcal{R} \gets \emptyset$
	\State Generate $\theta$ random RRC sets and insert them into $\mathcal{R}$.
	\State Initialize a node set $S_k^{*} \gets \emptyset$
	\For{$i$ = 1,\dots,$k$}
		\State Identify the node $v_i$ that covers
		     the most RRC sets in $\mathcal{R}$
		\State Add $v_i$ into $S_k^{*}$
		\State Remove from $\mathcal{R}$ all RRC sets that are covered by $v_i$
	\EndFor \State
	\Return $S_k^{*}$
\end{algorithmic}
\label{alg:nodeselection}
\end{algorithm}

Lines 4-8 in Algorithm \ref{alg:nodeselection} correspond to a standard greedy approach for a \emph{maximum coverage problem}. The problem is equivalent to maximizing a submodular function with cardinality constraints for which it is well known that a greedy approach returns a $(1 - 1/e)$-approximate solution in linear time~\cite{NemhauserVN}.

\subsection{RRC set generation} 

Next, we describe how to generate RRC sets correctly for the EIL problem under the MCIC model, which is more complicated than generating RR sets for the IC model \cite{Tang2014}. The construction of RRC sets is done according to Definition \ref{defn:RRC}. Recall that in the MCIC model, whether a node can be saved or not is based on a number of factors such as whether $v$ is reachable via a path in $g \sim \mathcal{G}$ from $A_C$ and the diffusion history of each campaign. Our algorithms tackle the complex interactions between campaigns by first identifying nodes that can be influenced by $C$ which reveals important information for generating RRC sets for $L$.

Line 2 generates $\mathcal{R}$ by repeated simulation of the misinformation prevention process. The generation of each random RRC set is implemented as two breath-first searches (BFS) on $\mathcal{G}$ and $G^T$ respectively. The first BFS is a \emph{forward labelling} process from $A_C$ implemented as a forward BFS on $\mathcal{G}$ that computes the influence set of $A_C$ in a possible world. The second BFS on $G^T$ is a novel bounded-depth BFS with pruning that carefully tracks which nodes will become obstructed and is described in detail below.

\vspace{-5pt}

\paragraph*{Forward BFS with Lazy Sampling} We first describe the forward labelling process. As the forward labeling is unlikely to reach the whole graph, we simply reveal edge states on demand (``lazy sampling''), based on the principle of deferred decisions. Given the seed set $A_C$ of campaign $C$, we perform a randomized BFS starting from $A_C$ where each outgoing edge $e$ in $\mathcal{G}$ is traversed with $p_C(e)$ probability. The set of nodes traversed in this manner ($R_g(A_C)$) is equivalent to $I(A_C)$ for $g \sim \mathcal{G}$, due to deferred randomness. Note that in each step of the above BFS we record at each node $w$ the minimum distance from $A_C$ to $w$, denoted $D(w)$, for use in the second BFS.


Given a randomly selected node $u$ in $G$, observe that for $u$ to be able to be saved we require $u \in R_g(A_C)$. Therefore, if the randomly selected node $u \not\in R_g(A_C)$ then we return an empty RRC set. On the other hand, if $u \in R_g(A_C)$, we have $D(u) = |SP_g(A_C,u)|$ as a result of the above randomized BFS which indicates the maximum distance from $u$ that candidate saviour nodes can exist. We run a second BFS from $u$ in $G^T$ to depth $D(u)$ to determine the saviour nodes for $u$ by carefully pruning those nodes that would become obstructed.

\vspace{-5pt}

\paragraph*{Bounded-depth BFS with Pruning} The second BFS on $G^T$, presented in Algorithm \ref{alg:genRRC}, takes as input a source node $u$, the maximum depth $D(u)$, and a directed graph $G^T$. Algorithm \ref{alg:genRRC} utilizes special indicator values associated with each node $w$ to account for potential cutoffs from $C$. Each node $w$ holds a variable, $\beta(w)$, which indicates the distance beyond $w$ that the BFS can go before the diffusion would have been cutoff by $C$ propagating in $g$. The $\beta$ value for each node $w$ is initialized to $D(w)$. In each round, the current node $w$ has an opportunity to update the $\beta$ value of each of its successors only if $\beta(w) > 0$. For each successor $z$ of $w$, we assign $\beta(z) = \beta(w) - 1$ if $\beta(z) = \texttt{null}$ or if $\beta(z) > 0$ and $\beta(w) - 1 < \beta(z)$. In this way, each ancestor of $z$ will have an opportunity to apply a $\beta$ value to $z$ to ensure that if any ancestor has a $\beta$ value then so will $z$ and furthermore, the $\beta$ variable for $z$ will be updated with the smallest $\beta$ value from its ancestors. We terminate the BFS early if we reach a node $w$ with $\beta(w) = 0$.

Figure \ref{fig:genRRC} captures the primary scenarios encountered by Algorithm \ref{alg:genRRC} when initialized at $u$. The enclosing dotted line represents the extent of the influence of campaign $C$ for the current influence propagation process. First, notice that if the BFS moves away from $A_C = \{ v_c \}$, as in the case of node $z$, that, once we move beyond the influence boundary of $C$, there will be no potential for cutoff. As such, the BFS is free to traverse until the maximum depth $D(u)$ is reached. On the other hand, if the BFS moves towards (or perpendicular to) $v_c$ then we must carefully account for potential cutoff. For example, when the BFS reaches $v$, we know the distance from $v_c$ to $v$: $D(v) = SP_g(v_c,v)$. Therefore, the BFS must track the fact that there cannot exist saviours at a distance $D(v)$ beyond $v$. In other words, if we imagine initializing a misinformation prevention process from a node $w$ such that $SP_G(v,w) > D(v)$ then $v$ will adopt campaign $C$ before campaign $L$ can reach $v$. Therefore, at each out-neighbour of $v$ we use the knowledge of $D(v)$ to track the distance beyond $v$ that saviours can exist. This updating process tracks the smallest such value and is allowed to cross the enclosing influence boundary of campaign $C$ ensuring that all potential for cutoff is tracked.

\begin{figure}
\centering
\def\svgwidth{0.6\textwidth}
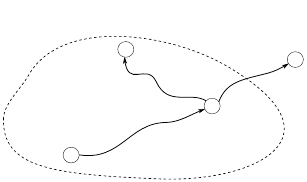
\caption{An overview of the primary scenarios encountered by Algorithm \ref{alg:genRRC}.}
\label{fig:genRRC}
\vspace{-10pt}
\end{figure}

Finally, we collect all nodes visited during the process (including $u$), and use them to form an RRC set. The runtime of this procedure is precisely the sum of the degrees (in $G$) of the nodes in $R_g(A_C)$ plus the sum of the degrees of the nodes in $R_{G^T}(u) \setminus \tau(u)$.

We briefly note another key difference between \emph{RPS} and \emph{RIS} occurs in the RRC set generation step. Unlike in the single campaign setting, generating an RRC set is comprised of two phases instead of just one. First, we are required to simulate the spread of misinformation since being influenced by campaign $C$ is a pre-condition for being saved. As a result, only a fraction of the simulation steps of \emph{RPS} provide signal for the prevention value we are trying to estimate. This difference is made concrete in the running time analysis to follow.

\begin{algorithm}
\caption{generateRRC($u$, $D(u)$, $G^T$)}
\begin{algorithmic}[1]
	\State let $R \gets \emptyset$, $Q$ be a queue and $Q.enqueue(u)$
	\State set $u.depth = 0$ and label $u$ as discovered
	\While{$Q$ is not empty}
		\State $w \gets Q.dequeue()$, $R \gets R \cup \{ w \}$
		\If{$w.depth = D(u)$ OR $\beta(w) = 0$}
			\State \textbf{continue}
		\EndIf
		\ForAll{nodes $z$ in $G^T.adjacentEdges(w)$}
			\If{$\beta(w) > 0$ AND $\beta(z) > 0$}
				\If{$\beta(w) - 1 < \beta(z)$}
					\State $\beta(z) \gets \beta(w) - 1$
				\EndIf
			\ElsIf{$\beta(w) > 0$}
				\State $\beta(z) \gets \beta(w) - 1$
			\EndIf
			\If{$z$ is not labelled as discovered}
				\State set $z.depth = w.depth + 1$, label $z$ as discovered and $Q.enqueue(z)$
			\EndIf
		\EndFor
	\EndWhile
	\State \Return $R$
\end{algorithmic}
\label{alg:genRRC}
\end{algorithm}

\vspace{-10pt}

\subsection{Analysis}

In this section we focus on two parameters: solution quality and runtime. For Algorithm \ref{alg:nodeselection} to return a solution with approximation guarantee, we will provide a lower bound on $\theta$. Then, we will analyze the running time of the algorithm in terms of $\theta$ and a quantity $EPT$ that captures the expected number of edges traversed when generating a random RRC set.

\vspace{-5pt}

\paragraph*{Approximation Guarantee} We begin by establishing the crucial connection between RRC sets and the prevention process on $\mathcal{G}$. That is, the prevention of a set of nodes $S$ is precisely $n$ times the probability that a node $u$, chosen uniformly at random, has a saviour from $S$.

\begin{lem}
\label{lem:equiv}
For any seed set $S$ and any node $v$, the probability that a prevention process from $S$ can save $v$ equals the probability that $S$ overlaps an RRC set for $v$.
\end{lem}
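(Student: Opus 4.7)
The plan is to use the alternative (deferred-randomness) formulation of the MCIC model already stated in the preliminaries: instead of flipping coins on edges as the cascades progress, first sample a realization $g \sim \mathcal{G}$ and then run both cascades deterministically on $g$ (with $L$ using all of $G$ by the high effectiveness property, and $C$ using $g$). Conditioning on $g$, it suffices to prove the pointwise statement that, for every realization $g$, the node $v$ is saved by $S$ in the deterministic process on $g$ if and only if $S$ intersects the RRC set of $v$ computed in $g$. Taking expectations over $g$ then yields the lemma, since on one side this gives $\Pr[v \text{ is saved by } S]$ and on the other $\Pr[S \cap \text{RRC}(v) \neq \emptyset]$ over the common distribution of $g$.

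For the forward direction, suppose $v$ is saved by $S$ in $g$. By the definition of prevention, $v \in R_g(v_c) \cap (cl_g(S) \setminus bl_g(S))$. Since $v \in R_g(v_c)$, the RRC set of $v$ is not forced empty by Definition~\ref{defn:RRC}. Because $v \notin bl_g(S)$, the negation of Definition~\ref{defn:blocked} yields a path $p$ in $G$ from some $w \in S$ to $v$ along which no intermediate node $u$ satisfies $|SP_g(v_c,u)| < |SP_G(w,u)|$, i.e., along $p$ the $L$-cascade is never overtaken by $C$. Reversing $p$ in $G^T$ exhibits $w \in R_{G^T}(v)$, and the reversed path is a witness that $w \notin \tau_g(v)$, so $w$ lies in the RRC set of $v$ and $S$ overlaps it. The reverse direction is symmetric: if $w \in S$ belongs to the RRC set of $v$, then $v \in I(v_c)$ and there is a forward path in $G$ from $w$ to $v$ that is not cutoff, hence starting $L$ from $S \ni w$ allows $L$ to arrive at $v$ no later than $C$ in $g$, placing $v$ in $R_g(v_c) \cap (cl_g(S) \setminus bl_g(S))$ and certifying that $v$ is saved.

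The main obstacle is the correspondence between the ``blocked'' condition on forward paths in Definition~\ref{defn:blocked} and the ``cutoff'' condition on reverse paths used to define $\tau_g$. These conditions are set up to be exact mirror images under path reversal, with matching distance inequalities $|SP_g(v_c,\cdot)|$ vs.\ $|SP_G(\cdot,\cdot)|$, so the core work is checking that ``there exists an unblocked forward path'' transforms without loss into ``there exists an uncut reverse path.'' One must also verify that the endpoint conventions agree — in particular, the roles of $w$ and its first out-neighbour along $p$, and the boundary case $v \notin I(v_c)$ where both sides are vacuous by Definition~\ref{defn:RRC}. Once this symmetry is nailed down, the equivalence is just bookkeeping and the lemma follows by averaging over $g \sim \mathcal{G}$.
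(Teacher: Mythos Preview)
Your proposal is correct and follows essentially the same approach as the paper: couple both events to the same realization $g \sim \mathcal{G}$ via deferred randomness, then argue that ``$v$ is saved by $S$ in $g$'' and ``$S$ meets the RRC set for $v$ in $g$'' are equivalent events by the duality between the blocked and cutoff conditions. The paper's own proof is considerably terser---it simply asserts $\rho_1 = \rho_2$ ``due to the symmetry between the set of saviours for $v$ and the ability to save $v$''---whereas you actually unpack that symmetry into forward and reverse directions and correctly flag the endpoint bookkeeping as the only nontrivial check.
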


\begin{proof}
Let $S$ be a fixed set of nodes, and $v$ be a fixed node. Suppose that we generate an RRC set $R$ for $v$ on a graph $g \sim \mathcal{G}$. Let $\rho_1$ be the probability that $S$ overlaps with $R$ and let $\rho_2$ be the probability that $S$, when used as a seed set, can save $v$ in a prevention process on $\mathcal{G}$. By Definition \ref{defn:RRC}, if $v \in R_g(A_C)$ then $\rho_1$ equals the probability that a node $u \in S$ is a saviour for $v$. That is, $\rho_1$ equals the probability that $G$ contains a directed path from $u \in S$ to $v$ and $u \not\in \tau(v)$ and $0$ if $v \not\in R_g(A_C)$. Meanwhile, if $v \in R_g(A_C)$ then $\rho_2$ equals the probability that a node $u \in S$ can save $v$ (i.e. $v \in ( cl_g(u) \setminus obs_g(u) )$) and $0$ if $v \not\in R_g(A_C)$. It follows that $\rho_1 = \rho_2$ due to the equivalence between the set of saviours for $v$ and the ability to save $v$.
\end{proof}

For any node set $S$, let $F_{\mathcal{R}}(S)$ be the fraction of RRC sets in $\mathcal{R}$ covered by $S$. Then, based on Lemma \ref{lem:equiv}, we can prove that the expected value of $n \cdot F_{\mathcal{R}}(S)$ equals the expected prevention of $S$ in $\mathcal{G}$.

\begin{cor}
\label{cor:equiv}
$\mathbb{E}[n \cdot F_{\mathcal{R}}(S)] = \mathbb{E}[\pi(S)]$
\end{cor}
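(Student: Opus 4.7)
The plan is to reduce the claim to Lemma \ref{lem:equiv} via linearity of expectation on both sides of the identity.

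First I would unpack the left-hand side. By construction $F_{\mathcal{R}}(S) = \frac{1}{|\mathcal{R}|}\sum_{R \in \mathcal{R}} \mathbb{1}[S \cap R \neq \emptyset]$, where each $R \in \mathcal{R}$ is an independent random RRC set (Definition \ref{defn:rRRC}). Linearity gives $\mathbb{E}[F_{\mathcal{R}}(S)] = \Pr[S \cap R \neq \emptyset]$ for a single random RRC set $R$, where the probability is taken jointly over the graph $g \sim \mathcal{G}$ and the uniformly random node $v$ used to generate $R$. Conditioning on $v$, this equals $\frac{1}{n}\sum_{v \in V} \Pr_{g}[S \cap R_v \neq \emptyset]$, where $R_v$ denotes the RRC set produced starting from $v$ in $g$.

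Next I would unpack the right-hand side. By the definition of prevention, $\pi(S) = \sum_{v \in V} \mathbb{1}[v \text{ is saved by } S \text{ in } g]$, so $\mathbb{E}[\pi(S)] = \sum_{v \in V} \Pr_{g}[v \text{ is saved by } S]$. Applying Lemma \ref{lem:equiv} termwise equates $\Pr_{g}[v \text{ is saved by } S]$ with $\Pr_{g}[S \cap R_v \neq \emptyset]$ for each fixed $v$. Combining with the previous display yields
\begin{equation*}
\mathbb{E}[\pi(S)] = \sum_{v \in V} \Pr_{g}[S \cap R_v \neq \emptyset] = n \cdot \frac{1}{n}\sum_{v \in V} \Pr_{g}[S \cap R_v \neq \emptyset] = n \cdot \mathbb{E}[F_{\mathcal{R}}(S)],
\end{equation*}
which is the claim.

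There is no real obstacle here: the work is already done by Lemma \ref{lem:equiv}, and the corollary is essentially a bookkeeping argument that exchanges the uniform average over $v$ (implicit in a random RRC set) with the sum-over-$v$ definition of $\pi(S)$, with an $n$ appearing exactly to compensate for the $1/n$ in the uniform sampling. The only point worth stating carefully is that the edge case $v \notin I(v_c)$ is handled consistently by both sides: Definition \ref{defn:RRC} sets $R_v = \emptyset$ so $S \cap R_v = \emptyset$, and $v$ contributes $0$ to $\pi(S)$ since it never adopts $C$ and hence cannot be saved, so both conditional probabilities vanish.
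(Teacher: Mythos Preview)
Your proof is correct and follows essentially the same route as the paper: both reduce the identity to Lemma~\ref{lem:equiv} by observing that $\mathbb{E}[F_{\mathcal{R}}(S)]$ is the probability that $S$ intersects a random RRC set while $\mathbb{E}[\pi(S)]/n$ is the probability that a uniformly random node is saved by $S$. Your version is simply more explicit about the linearity-of-expectation and conditioning-on-$v$ steps, and your remark on the $v \notin I(v_c)$ edge case is a helpful addition.
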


\begin{proof}
Observe that $\mathbb{E}[F_{\mathcal{R}}(S)]$ equals the probability that $S$ intersects a random RRC set, while $\mathbb{E}[\pi(S)]/n$ equals the probability that a randomly selected node can be saved by $S$ in a prevention process on $\mathcal{G}$. By Lemma \ref{lem:equiv}, the two probabilities are equal, leading to $\mathbb{E}[n \cdot F_{\mathcal{R}}(S)] = \mathbb{E}[\pi(S)]$.
\end{proof}

Corollary \ref{cor:equiv} implies that we can estimate $\mathbb{E}[\pi(S)]$ by estimating the fraction of RRC sets in $\mathcal{R}$ covered by $S$. The number of sets covered by a node $v$ in $\mathcal{R}$ is precisely the number of times we observed that $v$ was a saviour for a randomly selected node $u$. We can therefore think of $n \cdot F_{\mathcal{R}}(S)$ as an estimator for $\mathbb{E}[\pi(S)]$. Our primary task is to show that it is a \emph{good} estimator. Using Chernoff bounds, we show that $n \cdot F_{\mathcal{R}}(S)$ is an accurate estimator of any node set $S$'s expected prevention, when $\theta$ is sufficiently large:

\begin{lem}
\label{lem:bound}
Suppose that $\theta$ satisfies

\vspace{-10pt}

\begin{equation}
\label{eqn:theta}
\theta \ge (8 + 2 \epsilon) n \cdot \frac{l \log n + \log \binom{n}{k} + \log 2}{OPT_L \cdot \epsilon^2}
\end{equation}

Then, for any set $S$ of at most $k$ nodes, the following inequality holds with at least $1 - n^{-l} / \binom{n}{k}$ probability:

\vspace{-5pt}

\begin{equation}
\label{eqn:estimator}
\Big | n \cdot F_{\mathcal{R}}(S) - \mathbb{E}[{\pi(S)}] \Big | < \frac{\epsilon}{2} \cdot OPT_L
\end{equation}
\end{lem}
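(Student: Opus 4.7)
The plan is to apply a Chernoff–Bernstein tail bound to $Y := \sum_{i=1}^{\theta} X_i$, where $X_i \in \{0,1\}$ is the indicator that the fixed seed set $S$ covers the $i$-th random RRC set generated in line~2 of Algorithm~\ref{alg:nodeselection}. Because the RRC sets are independent draws from the same distribution, the $X_i$'s are i.i.d.\ Bernoulli, and by Lemma~\ref{lem:equiv} each has mean $\mathbb{E}[X_i] = \mathbb{E}[\pi(S)]/n$. Hence $\mu := \mathbb{E}[Y] = \theta \cdot \mathbb{E}[\pi(S)]/n$, and since $n \cdot F_{\mathcal{R}}(S) = n Y/\theta$, the target inequality~\eqref{eqn:estimator} is equivalent to $|Y - \mu| < \lambda$ with $\lambda := \theta \epsilon \cdot OPT_L / (2n)$.

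For the deviation I would invoke the two-sided Bernstein form of the Chernoff bound for a sum of independent $\{0,1\}$ variables,
\begin{equation*}
\Pr\bigl[\,|Y - \mu| \ge \lambda\,\bigr] \;\le\; 2\exp\!\Bigl(-\tfrac{\lambda^2}{2\mu + \tfrac{2}{3}\lambda}\Bigr),
\end{equation*}
and then bound the denominator using $\mathbb{E}[\pi(S)] \le OPT_L$, which holds because $|S|\le k$ and $\mathbb{E}[\pi(\cdot)]$ is monotone, so $S$ can always be extended to a size-$k$ set without decreasing the expected prevention. Substituting $\mu \le \theta\,OPT_L/n$ and the chosen $\lambda$ gives
\begin{equation*}
\frac{\lambda^2}{2\mu + \tfrac{2}{3}\lambda} \;\ge\; \frac{\theta\,\epsilon^2\,OPT_L}{(8 + \tfrac{4\epsilon}{3})\,n} \;\ge\; \frac{\theta\,\epsilon^2\,OPT_L}{(8 + 2\epsilon)\,n}.
\end{equation*}
Plugging in the hypothesised lower bound on $\theta$ from~\eqref{eqn:theta} makes the last expression at least $l\log n + \log\binom{n}{k} + \log 2$, so the tail probability is at most $2\exp\!\bigl(-(l\log n + \log\binom{n}{k} + \log 2)\bigr) = n^{-l}/\binom{n}{k}$, which is precisely the bound claimed.

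The main obstacle is choosing the form of concentration inequality that matches the advertised $(8+2\epsilon)$ constant: a vanilla Hoeffding bound would give the much worse scaling $n^2/OPT_L^2$ (since it ignores that each Bernoulli variance is at most its mean), while the textbook multiplicative Chernoff $\exp(-\delta^2\mu/(2+\delta))$ yields a coefficient closer to $12$ rather than $8$. The Bernstein form with the $\lambda/3$ correction in the denominator is exactly what is needed to produce the $(8 + \tfrac{4\epsilon}{3})$ factor that the hypothesis dominates; everything else is routine bookkeeping.
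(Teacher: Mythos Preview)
Your proof is correct and follows essentially the same route as the paper: write $\theta\cdot F_{\mathcal{R}}(S)$ as a sum of i.i.d.\ Bernoulli indicators with mean $\rho=\mathbb{E}[\pi(S)]/n$ (via Corollary~\ref{cor:equiv}), apply a Chernoff-type tail bound, and use $\rho\le OPT_L/n$ to simplify the exponent. The only inaccuracy is in your closing commentary: the multiplicative Chernoff form $\Pr[|Y-\mu|\ge\delta\mu]\le 2\exp(-\delta^2\mu/(2+\delta))$ is precisely what the paper invokes, and with $\delta=\epsilon\,OPT_L/(2n\rho)$ it yields exactly the $(8+2\epsilon)$ constant, not $12$ --- indeed, rewriting it in absolute-deviation form gives $\exp(-\lambda^2/(2\mu+\lambda))$, which differs from your Bernstein bound only in the $\lambda$ versus $\tfrac{2}{3}\lambda$ correction term.
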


\begin{proof}
Let $\rho$ be the probability that $S$ overlaps with a random RRC set. Then, $\theta \cdot F_{\mathcal{R}}(S)$ can be regarded as the sum of $\theta$ i.i.d. Bernoulli variables with a mean $\rho$. By Corollary \ref{cor:equiv},

\begin{equation*}
\rho = \mathbb{E}[F_{\mathcal{R}}(S)] = \mathbb{E}[\pi(S)] / n
\end{equation*}

\noindent Then, we have

\begin{gather}
Pr \bigg [ | n \cdot F_{\mathcal{R}}(S) - \mathbb{E}[{\pi(S)}] | \ge \frac{\epsilon}{2} \cdot OPT_L \bigg ] \nonumber \\
\hspace{5em} = Pr \bigg [ | \theta \cdot F_{\mathcal{R}}(S) - \rho \theta | \ge \frac{\epsilon \theta}{2 n} \cdot OPT_L \bigg ] \nonumber \\
\hspace{6em} = Pr \bigg [ | \theta \cdot F_{\mathcal{R}}(S) - \rho \theta | \ge \frac{\epsilon \cdot OPT_L}{2 n \rho} \cdot \rho \theta \bigg ] \label{eqn:estimator_proof}
\end{gather}

Let $\delta = \epsilon \cdot OPT_L / (2 n \rho)$. By the Chernoff bounds, Equation \ref{eqn:theta}, and the fact that $\rho = \mathbb{E}[\pi(S)] / n \le OPT_L / n$, we have

\begin{align*}
\text{r.h.s. of Eqn.\ \ref{eqn:estimator_proof} } & < 2 exp \Big ( - \frac{\delta^2}{2 + \delta} \cdot \rho \theta \Big ) \\
& = 2 exp \Big ( - \frac{\epsilon^2 \cdot OPT_L^2}{8 n^2 \rho + 2 \epsilon n \cdot OPT_L} \cdot \theta \Big ) \\
& \le 2 exp \Big ( - \frac{\epsilon^2 \cdot OPT_L^2}{8 n \cdot OPT_L + 2 \epsilon n \cdot OPT_L} \cdot \theta \Big ) \\
& = 2 exp \Big ( - \frac{\epsilon^2 \cdot OPT_L}{(8 + 2 \epsilon) \cdot n} \cdot \theta \Big ) \le \frac{1}{\binom{n}{k} \cdot n^l}
\end{align*} 

Therefore, the lemma is proved.
\end{proof}

Based on Lemma \ref{lem:bound}, we prove that if Eqn.\ \ref{eqn:theta} holds, Algorithm \ref{alg:nodeselection} returns a $(1 - 1/e - \epsilon)$-approximate solution with high probability by a simple application of Chernoff bounds.

\begin{thm}
\label{thm:main}
Given a $\theta$ that satisfies Equation \ref{eqn:theta}, Algorithm \ref{alg:nodeselection} returns a $(1 - 1/e - \epsilon)$-approximate solution with at least $1 - n^{-l}$ probability.
\end{thm}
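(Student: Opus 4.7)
The plan is to chain the greedy $(1-1/e)$-approximation for max coverage (applied to lines 4--8 of Algorithm \ref{alg:nodeselection}) with the estimator concentration of Lemma \ref{lem:bound}, which bridges the in-sample coverage fraction $F_{\mathcal{R}}(\cdot)$ and the true expected prevention $\mathbb{E}[\pi(\cdot)]$. Let $S_k^{o}$ denote an optimal size-$k$ seed set (so $\mathbb{E}[\pi(S_k^o)] = OPT_L$) and let $S_k^{*}$ denote the output of Algorithm \ref{alg:nodeselection}.

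First I would apply Lemma \ref{lem:bound} together with a union bound over all $\binom{n}{k}$ size-$k$ subsets of $V$. Since each individual failure occurs with probability at most $n^{-l}/\binom{n}{k}$, with probability at least $1 - n^{-l}$ the inequality
\[
\bigl| n \cdot F_{\mathcal{R}}(S) - \mathbb{E}[\pi(S)] \bigr| < \tfrac{\epsilon}{2} \cdot OPT_L
\]
holds simultaneously for every size-$k$ set $S$; I condition on this event. Specializing to the (deterministic) set $S_k^{o}$ gives $n \cdot F_{\mathcal{R}}(S_k^o) > (1 - \epsilon/2) \cdot OPT_L$, and specializing to the (random) output $S_k^{*}$---which is itself a size-$k$ set and is therefore covered by the union bound---gives $\mathbb{E}[\pi(S_k^*)] > n \cdot F_{\mathcal{R}}(S_k^*) - (\epsilon/2) \cdot OPT_L$.

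Next, since lines 4--8 implement the standard greedy algorithm on the max-coverage instance induced by $\mathcal{R}$, the classical $(1 - 1/e)$-approximation guarantee yields $F_{\mathcal{R}}(S_k^*) \ge (1 - 1/e) \cdot F_{\mathcal{R}}(S_k^o)$. Chaining the three inequalities gives
\[
\mathbb{E}[\pi(S_k^*)] > (1 - 1/e) \cdot n \cdot F_{\mathcal{R}}(S_k^o) - \tfrac{\epsilon}{2} OPT_L \ge (1 - 1/e)(1 - \tfrac{\epsilon}{2}) OPT_L - \tfrac{\epsilon}{2} OPT_L \ge (1 - 1/e - \epsilon) \cdot OPT_L,
\]
which is the desired approximation guarantee.

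Two bookkeeping points require care. First, $S_k^*$ is a random variable determined by $\mathcal{R}$, so Lemma \ref{lem:bound} cannot be invoked pointwise at the output without the union bound over all size-$k$ subsets; as a side effect this union bound delivers the stronger confidence $1 - n^{-l}$ that matches the abstract, while the per-set figure $1 - n^{-l}/\binom{n}{k}$ stated in the theorem is simply what Lemma \ref{lem:bound} supplies for a single set. Second, the final arithmetic hinges on the elementary bound $(1 - 1/e)(\epsilon/2) + \epsilon/2 \le \epsilon$, which is immediate. I do not foresee any substantial obstacle beyond these, since the real technical work has been absorbed into Lemma \ref{lem:bound}.
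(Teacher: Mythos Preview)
Your proposal is correct and follows essentially the same route as the paper: apply Lemma \ref{lem:bound} with a union bound over all $\binom{n}{k}$ size-$k$ sets (yielding success probability $1 - n^{-l}$), combine with the greedy $(1-1/e)$ coverage guarantee, and chain through the same arithmetic $(1-1/e)(1-\epsilon/2) - \epsilon/2 \ge 1-1/e-\epsilon$. The only cosmetic difference is that the paper routes the greedy inequality through the in-sample coverage maximizer $S_k^{+}$ before comparing to the true optimum $S_k^{\circ}$, whereas you collapse that step; since $F_{\mathcal{R}}(S_k^{+}) \ge F_{\mathcal{R}}(S_k^{o})$ trivially, this is immaterial.
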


\begin{proof}
Let $S_k$ be the node set returned by Algorithm \ref{alg:nodeselection}, and $S_k^+$ be the size-$k$ node set that maximizes $F_{\mathcal{R}}(S_k^+)$ (i.e., $S_k^+$ covers the largest number of RRC sets in $\mathcal{R}$). As $S_k$ is derived from $\mathcal{R}$ using a $(1-1/e)$-approximate algorithm for the maximum coverage problem, we have $F_{\mathcal{R}}(S_k) \ge (1-1/e) \cdot F_{\mathcal{R}}(S_k^+)$. Let $S_k^{\circ}$ be the optimal solution for the EIL problem on $\mathcal{G}$, i.e. $\mathbb{E}[\pi(S_k^{\circ})] = OPT_L$. We have $F_{\mathcal{R}}(S_k^+) \ge F_{\mathcal{R}}(S_k^{\circ})$, which leads to $F_{\mathcal{R}}(S_k) \ge (1-1/e) \cdot F_{\mathcal{R}}(S_k^{\circ})$.

Assume that $\theta$ satisfies Equation \ref{eqn:theta}. By Lemma \ref{lem:bound}, Equation \ref{eqn:estimator} holds with at least $1 - n^{-l} / \binom{n}{k}$ probability for any given size-$k$ node set $S$. Thus, by the union bound, Equation \ref{eqn:estimator} should hold simultaneously for all size-$k$ node sets with at least $1 - n^{-l}$ probability. In that case, we have

\begin{align*}
\mathbb{E}[\pi(S_k)] & > n \cdot F_{\mathcal{R}}(S_k) - \epsilon / 2 \cdot OPT_L \\
				& \ge (1-1/e) \cdot n \cdot F_{\mathcal{R}}(S_k^+) - \epsilon / 2 \cdot OPT_L \\
				& \ge (1-1/e) \cdot n \cdot F_{\mathcal{R}}(S_k^{\circ}) - \epsilon / 2 \cdot OPT_L \\
				& \ge (1-1/e) \cdot (1 - \epsilon / 2) \cdot OPT_L - \epsilon / 2 \cdot OPT_L \\
				& > (1-1/e - \epsilon) \cdot OPT_L
\end{align*}

Thus, the theorem is proved.
\end{proof}

\paragraph*{Runtime} First, we will define $EPT$ which captures the expected number of edges traversed when generating a random RRC set. After that, we define the expected runtime of \emph{RPS} in terms of $EPT$ and the parameter $\theta$.

Let $M_R$ be the instance of $R_g(A_C)$ used in computing an RRC set $R$. Then, we define the \emph{width} of an RRC set $R$, denoted as $\omega(R)$, as the number of edges in $G$ that point to nodes in $R$ plus the number of edges in $G$ that originate from nodes in $M_R$. That is

\vspace{-10pt}

\begin{equation}
\label{eqn:width}
\omega(R) = \sum_{u \in M_R} outdegree_G(u) + \sum_{v \in R} indegree_G(v)
\end{equation}


Let $EPT$ be the expected width of a random RRC set, where the expectation is taken over the randomness in $R$ and $M_R$, and observe that Algorithm \ref{alg:nodeselection} has an expected runtime of $O(\theta \cdot EPT)$. This can be observed by noting that $EPT$ captures the expected number of edge traversals required to generate a random RRC set since an edge is only considered in the propagation process (either of the two BFS's) if it points to a node in $R$ or originates from a node in $M_R$. An important consideration is that, since $OPT_L$ is unknown, we cannot set $\theta$ directly from Equation \ref{eqn:theta}. For simplicity, we define

\vspace{-10pt}

\begin{equation}
\label{eqn:lambda}
\lambda = (8 + 2 \epsilon) n \cdot \Big (l \log n + \log \binom{n}{k} + \log 2 \Big ) \cdot \epsilon^{-2}
\end{equation}

and rewrite Equation \ref{eqn:theta} as $\theta \ge \lambda / OPT_L$. In the parameter estimation step we employ the techniques of \cite{Tang2014} to derive a $\theta$ value for \emph{RPS} that is above the threshold but also allows for practical efficiency.

\subsection{Parameter Estimation}
\label{sec:param_est}

Our objective in this section is to identify a $\theta$ that makes $\theta \cdot EPT$ reasonably small, while still ensuring $\theta \ge \lambda / OPT_L$. We begin with some definitions. Let $\mathcal{V^{*}}$ be a probability distribution over the nodes in $G$, such that the probability mass for each node is proportional to its indegree in $G$. Let $v^{*}$ be a random variable following $\mathcal{V^{*}}$ and recall that $M_R$ is a random instance of $R_g(A_C)$ that is equivalent to the influence $I(A_C)$ for a possible world $g$. Furthermore, define $\omega(M_R)$, the number of edges in $G$ that originate from nodes in $M_R$, as $\omega(M_R) = \sum_{u \in M_R} outdegree_G(u)$. Then we prove the following.

\begin{lem}
\label{lem:EPT}
$\frac{m}{n} \cdot \mathbb{E}[\pi(\{ v^{*} \})] = EPT - \mathbb{E}[\omega(M_R)]$, where the expectation of $\pi(\{ v^{*} \})$ and $\omega(M_R)$ is taken over the randomness in $v^{*}$ and the prevention process.
\end{lem}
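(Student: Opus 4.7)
The plan is to reduce the identity to a statement about $\mathbb{E}[\omega_{\pi}(R)]$ and then exploit the symmetry in Lemma \ref{lem:equiv} to rewrite it in terms of single-node preventions weighted by indegree.

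First I would observe that the width decomposes cleanly as $\omega(R) = \omega(M_R) + \omega_{\pi}(R)$, so by linearity of expectation $EPT = \mathbb{E}[\omega(M_R)] + \mathbb{E}[\omega_{\pi}(R)]$. Rearranging, the claim becomes the cleaner identity
\[
\frac{m}{n}\,\mathbb{E}[\pi(\{v^{*}\})] \;=\; \mathbb{E}[\omega_{\pi}(R)].
\]
This is what I would try to prove directly.

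Next I would expand the right-hand side. Since a random RRC set is generated by first drawing $g \sim \mathcal{G}$, then picking $u$ uniformly at random and computing its RRC set $R(u,g)$, linearity and the definition of $\omega_{\pi}$ give
\[
\mathbb{E}[\omega_{\pi}(R)] \;=\; \sum_{v \in V} indegree_G(v)\,\Pr_{g,u}\!\bigl[v \in R(u,g)\bigr].
\]
Conditioning on $u$ and using the fact that $u$ is uniform over $V$, I would rewrite the probability as $\tfrac{1}{n}\sum_{u} \Pr_g[v \in R(u,g)]$. At this point Lemma \ref{lem:equiv}, applied with $S = \{v\}$, swaps the roles of the saviour and the saved: $\Pr_g[v \in R(u,g)]$ equals the probability that a prevention process seeded at $\{v\}$ saves $u$. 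Summing over $u$ yields exactly $\mathbb{E}[\pi(\{v\})]$, so
\[
\mathbb{E}[\omega_{\pi}(R)] \;=\; \frac{1}{n} \sum_{v \in V} indegree_G(v)\,\mathbb{E}[\pi(\{v\})].
\]

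Finally I would connect this expression to $\mathbb{E}[\pi(\{v^{*}\})]$ via the definition of $\mathcal{V}^{*}$. Since $\sum_{v} indegree_G(v) = m$, the distribution $\mathcal{V}^{*}$ assigns mass $indegree_G(v)/m$ to each node $v$, so
\[
\mathbb{E}[\pi(\{v^{*}\})] \;=\; \frac{1}{m}\sum_{v \in V} indegree_G(v)\,\mathbb{E}[\pi(\{v\})].
\]
Multiplying by $m/n$ matches the expression derived above for $\mathbb{E}[\omega_{\pi}(R)]$, which combined with the initial decomposition yields the claim.

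The only subtle point, and the one I would take care to state explicitly, is the correct application of Lemma \ref{lem:equiv}: the lemma is symmetric between ``$v$ lies in the RRC set of $u$'' and ``$\{v\}$ saves $u$ in the prevention process,'' and it is this symmetry that lets a sum weighted by $indegree_G(v)$ collapse to preventions of singletons drawn from the indegree-proportional distribution. Everything else is linearity of expectation and bookkeeping.
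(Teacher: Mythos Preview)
Your proposal is correct and follows essentially the same approach as the paper: both decompose $EPT$ into $\mathbb{E}[\omega(M_R)]$ plus the expected subwidth, then use Lemma~\ref{lem:equiv} together with a swap of summation order to rewrite the subwidth contribution as $(m/n)\,\mathbb{E}[\pi(\{v^*\})]$. The only cosmetic difference is that the paper phrases the subwidth via the probability $p_R$ that a uniformly random edge points into $R$ (i.e., $p_R = \omega_\pi(R)/m$), whereas you work directly with $\omega_\pi(R)$.
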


\begin{proof}

Let $R$ be a random RRC set, $M_R$ be the random instance of $R_g(A_C)$ used to compute $R$, and $p_R$ be the probability that a randomly selected edge from $G$ points to a node in $R$. Then, $EPT = \mathbb{E}[\omega(M_R)] + \mathbb{E}[p_R \cdot m]$, where the expectation is taken over the random choices of $R$ and $M_R$.

Let $v^{*}$ be a sample from $\mathcal{V^{*}}$ and $b(v^{*}, R)$ be a boolean function that returns $1$ if $v^{*} \in R$, and $0$ otherwise. Then, for any fixed $R$, $p_R = \sum_{v^{*}} (\Pr[v^{*}] \cdot b(v^{*},R))$. Now consider that we fix $v^{*}$ and vary $R$. Define $p_{v^{*}, R} = \sum_{R} (\Pr[R] \cdot b(v^{*},R))$ so that by Lemma \ref{lem:equiv}, $p_{v^{*}, R}$ equals the probability that a randomly selected node can be saved in a prevention process when $\{ v^{*} \}$ is used as a seed set. Therefore, $\mathbb{E}[p_{v^{*}, R}] = \mathbb{E}[\pi(\{ v^{*} \})] / n$. This leads to

\vspace{-10pt}

\begin{align*}
\mathbb{E}[p_R] &= \sum_{R} (\Pr[R] \cdot p_R) = \sum_R (\Pr[R] \cdot \sum_{v^{*}} (\Pr[v^{*}] \cdot b(v^{*}, R) ) ) \\
				&= \sum_{v^{*}} (\Pr[v^{*}] \cdot \sum_R (\Pr[R] \cdot b(v^{*}, R) ) ) = \sum_{v^{*}} (\Pr[v^{*}] \cdot p_{v^{*}, R} ) = \mathbb{E}[p_{v^{*}, R}] =  \mathbb{E}[\pi(\{ v^{*} \})] / n
\end{align*}

Therefore, $EPT = \mathbb{E}[\omega(M_R)] + m \cdot \mathbb{E}[p_R] = \mathbb{E}[\omega(M_R)] + \frac{m}{n} \cdot \mathbb{E}[\pi(\{ v^{*} \})]$. This completes the proof.
\end{proof}

Lemma \ref{lem:EPT} shows that if we randomly sample a node from $\mathcal{V^{*}}$ and calculate its expected prevention $p$, then on average we have $p = \frac{n}{m} ( EPT - \mathbb{E}[\omega(M_R)] )$. This implies that $\frac{n}{m} ( EPT - \mathbb{E}[\omega(M_R)] ) \le OPT_L$, since $OPT_L$ is the maximum expected prevention of any size-$k$ node set.

Recall that the expected runtime complexity of Algorithm \ref{alg:nodeselection} is $O(\theta \cdot EPT)$. Now, suppose we are able to identify a parameter $t$ such that $t = \Omega(\frac{n}{m} ( EPT - \mathbb{E}[\omega(M_R)] ))$ and $t \le OPT_L$. Then, by setting $\theta = \lambda / t$, we can guarantee that Algorithm \ref{alg:nodeselection} is correct, since $\theta \ge \lambda / OPT_L$, and has an expected runtime complexity of

\vspace{-10pt}

\begin{equation}
\label{eqn:runtime}
O(\theta \cdot EPT) = O \Big (\frac{\lambda}{t} \cdot EPT \Big ) = O \Bigg (\frac{\lambda \cdot EPT}{\frac{n}{m} ( EPT - \mathbb{E}[\omega(M_R)] )} \Bigg )
\end{equation}

Furthermore, if we define a ratio $\gamma \in (0,1)$ which captures the relationship between $\mathbb{E}[\omega(M_R)]$ and $EPT$ by writing $\mathbb{E}[\omega(M_R)] = \gamma EPT$, we can rewrite Equation \ref{eqn:runtime} as

\vspace{-10pt}

\begin{equation}
\label{eqn:runtime_proper}
O \Big (\frac{m}{n} \Big (\frac{1}{1 - \gamma} \Big ) \lambda \Big ) = O((k + l)(m + n)(1 / (1 - \gamma)) \log n / \epsilon^2 )
\end{equation}

Note that $\gamma$ is a data-dependent approximation factor not present in \cite{Tang2014}, but arises from the MCIC model. In particular, the RRC set generation relies crucially on first computing the spread of misinformation from campaign $C$ in order to determine the set of nodes that can be saved. See Section \ref{sec:lowerbound} for a detailed discussion of $\gamma$.

\vspace{-5pt}


\paragraph*{Computing $t$} Ideally, we seek a $t$ that increases monotonically with $k$ to mimic the behaviour of $OPT_L$. Suppose we take $k$ samples $e_i = (v_i,w_i)$ with replacement over a uniform distribution on the edges in $G$, and use the $v_i$'s to form a node set $S^{*}$. Let $KPT$ be the mean of the expected prevention of $S^{*}$ over the randomness in $S^{*}$ and the prevention process. Due to the submodularity of the prevention function, it can be verified that $KPT$ increases with $k$ and

\vspace{-5pt}

\begin{equation}
\label{eqn:ept_kpt}
\frac{n}{m} \Big ( EPT - \mathbb{E}[\omega(M_R)] \Big ) \le KPT \le OPT_L
\end{equation}

\vspace{5pt}

Additionally,

\begin{lem}
\label{lem:kpt}
Let $R$ be a random RRC set and define the \emph{subwidth} of $R$ ($\omega_{\pi}(R)$), the number of edges in $G$ that point to nodes in $R$, as $\omega_{\pi}(R) = \sum_{v \in R} indegree_G(v)$. Define

\vspace{-5pt}

\begin{equation}
\label{eqn:kappa}
\kappa(R) = 1 - \bigg (1 - \frac{\omega_{\pi}(R)}{m} \bigg )^k
\end{equation}

Then, $KPT = n \cdot \mathbb{E}[\kappa(R)]$, where the expectation is taken over the random choices of $R$.
\end{lem}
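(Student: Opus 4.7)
The plan is to express $KPT$ using Lemma~\ref{lem:equiv} (and the computation underlying Corollary~\ref{cor:equiv}) as $n$ times the probability that $S^{*}$ covers a random RRC set, then condition on the RRC set and exploit the fact that $S^{*}$'s randomness is independent of the RRC-set's randomness to reduce the problem to computing a single Bernoulli-style coverage probability.

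\textbf{Step 1: Rewrite $KPT$ via random RRC sets.} First I would apply the same argument as in Corollary~\ref{cor:equiv} but to the random seed set $S^{*}$. By Lemma~\ref{lem:equiv}, for any fixed $S$ and any fixed $v$, the probability a prevention process from $S$ saves $v$ equals the probability that $S$ hits a random RRC set for $v$. Summing over $v \in V$ and averaging over the randomness of $S^{*}$ (which is independent of the randomness defining the RRC set) gives
\begin{equation*}
KPT \;=\; \mathbb{E}_{S^{*}}\big[\mathbb{E}[\pi(S^{*})]\big] \;=\; n \cdot \Pr_{R,\,S^{*}}\big[S^{*} \cap R \neq \emptyset\big],
\end{equation*}
where $R$ is a random RRC set.

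\textbf{Step 2: Identify the sampling distribution.} Next I would observe that the node-sampling scheme used to define $S^{*}$ (take a uniformly random edge, keep its indegree endpoint) is precisely the distribution $\mathcal{V}^{*}$: the probability that a single sample equals a fixed node $u$ is $\mathrm{indegree}_G(u)/m$. Hence for any fixed RRC set $R$,
\begin{equation*}
\Pr\!\big[v^{*} \in R\big] \;=\; \sum_{u \in R} \frac{\mathrm{indegree}_G(u)}{m} \;=\; \frac{\omega_{\pi}(R)}{m},
\end{equation*}
by the definition of $\omega_{\pi}(R)$.

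\textbf{Step 3: Condition on $R$ and take a product.} Conditioning on $R$ and using that the $k$ samples used to build $S^{*}$ are i.i.d.\ from $\mathcal{V}^{*}$ and independent of $R$, the probability that \emph{none} of the $k$ samples lands in $R$ is $(1-\omega_{\pi}(R)/m)^{k}$, so
\begin{equation*}
\Pr\!\big[S^{*} \cap R \neq \emptyset \,\big|\, R\big] \;=\; 1 - \Big(1 - \tfrac{\omega_{\pi}(R)}{m}\Big)^{k} \;=\; \kappa(R).
\end{equation*}
Taking the expectation over $R$ and combining with Step~1 gives $KPT = n \cdot \mathbb{E}[\kappa(R)]$, as required.

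\textbf{Main obstacle.} The computation itself is short; the only subtle point, and the one I would be careful to state cleanly, is the identification in Step~2 between the edge-endpoint sampling in the definition of $S^{*}$ and the indegree-proportional distribution $\mathcal{V}^{*}$, since this is what makes the ``single-sample hits $R$'' probability collapse exactly to $\omega_{\pi}(R)/m$ and hence produces the clean closed form $\kappa(R)$. Everything else is an application of Lemma~\ref{lem:equiv} together with independence of $S^{*}$ from the RRC-set randomness.
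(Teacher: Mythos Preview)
Your proof is correct and takes essentially the same route as the paper: write $KPT = n\cdot\Pr[S^{*}\cap R\neq\emptyset]$ via Corollary~\ref{cor:equiv}, condition on $R$, and use the i.i.d.\ edge sampling to obtain the conditional overlap probability $1-(1-\omega_\pi(R)/m)^k=\kappa(R)$. The only cosmetic difference is that the paper phrases your Step~2 at the edge level (``some sampled edge points into $R$'') rather than via the node distribution $\mathcal{V}^{*}$.
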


\vspace{-5pt}

\begin{proof}
Let $S^{*}$ be a node set formed by the $v_i$ from $k$ samples $e_i = (v_i,w_i)$ over a uniform distribution on the edges in $G$, with duplicates removed. Let $R$ be a random RRC set, and $\alpha_R$ be the probability that $S^{*}$ overlaps with $R$. Then, by Corollary \ref{cor:equiv},

\vspace{-5pt}

\begin{equation*}
KPT = \mathbb{E}[\pi(S^{*})] = \mathbb{E}[n \cdot \alpha_R]
\end{equation*}

Consider that we sample $k$ times over a uniform distribution on the edges in $G$. Let $E^{*}$ be the set of edges sampled, with duplicates removed. Let $\alpha'_R$ be the probability that one of the edges in $E^{*}$ points to a node in $R$. It can be verified that $\alpha'_R = \alpha_R$. Furthermore, given that there are $\omega_{\pi}(R)$ edges in $G$ that point to nodes in $R$, $\alpha'_R = 1 - (1 - \omega_{\pi}(R)/m)^k = \kappa(R)$. Therefore,

\vspace{-5pt}

\begin{equation*}
KPT  = \mathbb{E}[n \cdot \alpha_R] = \mathbb{E}[n \cdot \alpha'_R] = \mathbb{E}[n \cdot \kappa(R)]
\end{equation*}

Which proves the lemma.
\end{proof}

Lemma \ref{lem:kpt} shows we can estimate $KPT$ by computing $n \cdot \kappa(R)$ on a set of random RRC sets and averaging over a sufficiently large sample size. However, if we want to obtain an estimate of $KPT$ with $\delta \in (0,1)$ relative error with at least $1 - n^{-l}$ probability then Chernoff bounds dictate that the number of samples required is $\Omega(l n \log n \cdot \delta^{-2} / KPT)$ which depends on $KPT$ itself. This issue is also encountered in \cite{Tang2014} and we are able to resolve it by mimicking their adaptive sampling approach, which dynamically adjusts the number of measurements based on the observed values of $\kappa(R)$, under the MCIC model. Importantly, great care is required in order to confirm that the same proof techniques used in \cite{Tang2014} apply to our setting. At a high level, our proofs follow the same logic, but differ in a several subtle ways due to the added complexity of the MCIC model. We will highlight these differences in the analysis to follow.

\begin{algorithm}
\caption{KptEstimation($\mathcal{G}$,$k$,$A_C$)}
\begin{algorithmic}[1]
	\For{$i = 1$ to $2 \log_2n - 1$}
		\State Let $c_i = (6 l \log n + 6 \log (\log_2 n)) \cdot 2^i$
		\State Let $sum = 0$
		\For{$j = 1$ to $c_i$}
			\State Generate a random RRC set $R$
			\State $\kappa(R) = 1 - (1 - \frac{\omega_{\pi}(R)}{m})^k$
			\State $sum = sum + \kappa(R)$
		\EndFor
		\If{$sum / c_i > 1 / 2^i$} \State
			\Return $KPT^{*} = n \cdot sum / (2 \cdot c_i)$
		\EndIf
	\EndFor \State
	\Return $KPT^{*} = 1 / n$
\end{algorithmic}
\label{alg:kptestimation}
\end{algorithm}

\vspace{5pt}

\paragraph*{Estimating \emph{KPT}} The sampling approach for estimating $KPT$ is presented in Algorithm \ref{alg:kptestimation}. We begin with a high level description of how the algorithm proceeds. Since $KPT$ is an unknown quantity, we begin with the assumption that it takes on the value $n/2$. Then, we compute an estimate for the expected value of $\kappa(R)$ based on a relatively few number of samples. Chernoff bounds allow us to determine if the computed value of $KPT = n \cdot \kappa(R)$ is a good estimator and, if so, the algorithm terminates. However, if the estimate is much smaller than $n/2$ we apply the standard doubling approach and generate an increased number of samples to determine if $KPT$ takes on a value close to half the initial estimate. The algorithm continues computing estimates for $KPT$, based on an increasing number of samples, and comparing to values that halve in size until the error bounds dictated by Chernoff bounds indicate we have reached a suitably precise estimation of $KPT$.

In each iteration (Lines 2-7), the goal of Algorithm \ref{alg:kptestimation} is to compute the average value of $\kappa(R)$ over $c_i$ randomly generated RRC sets from $\mathcal{G}$. As described in Lemma's \ref{lem:six} and \ref{lem:seven} below, the $c_i$ are chosen carefully such that if the average computed for $\kappa(R)$ over the $c_i$ samples is greater than $2^{-i}$ then we can conclude that we have a good estimate for $KPT$ with high probability. More precisely, that the expected value of $\kappa(R)$ is at least half of the average computed.  Conversely, if the average computed is too small then Chernoff bounds imply that we have a bad estimate for $KPT$ and the algorithm proceeds to the next iteration.

The IC problem benefits from a lower bound on $KPT$ of $1$ which allows the algorithm to terminate in $\log_2 n - 1$ iterations. However, we do not benefit from this lower bound for the MCIC problem, since a seed node for campaign $C$ is not guaranteed to save any nodes in $M_R$. In the case that the true value of $KPT$ is very small, the algorithm will terminate in the $2 \log_2 n$-th iteration and return $KPT^{*} = 1 / n$, which equals the smallest possible $KPT$ assuming $| M_R | \ge 2$ always holds. This lower bound follows directly from the probability that a randomly selected seed used for computing $KPT$ falls within $M_R$. As we will show in the next section, $KPT^{*} \in [KPT/4,OPT_L]$ holds with a high probability and therefore setting $\theta = \lambda / KPT^{*}$ ensures Algorithm \ref{alg:nodeselection} is correct and achieves the expected runtime complexity in Equation \ref{eqn:runtime_proper}.

\vspace{5pt}

\paragraph*{Performance Bounds} Proving the correctness and demonstrating bounds on the runtime for Algorithm \ref{alg:kptestimation} requires a careful analysis of the algorithm's behaviour. As shown below, we make use of two lemmas to prove that the algorithm's estimate of $KPT^{*}$ is close to $KPT$.

Let $\mathcal{K}$ be the distribution of $\kappa(R)$ over random RRC sets in $\mathcal{G}$ with domain $[0,1]$. Let $\mu = KPT / n$, and $s_i$ be the sum of $c_i$ i.i.d. samples from $\mathcal{K}$, where $c_i$ is defined as $c_i = (6 l \log n + 6 \log (\log_2 n)) \cdot 2^i$. Chernoff bounds give

\begin{lem}
\label{lem:six}
If $\mu \le 2^{-j}$, then for any $i \in [1,j-1]$,

\vspace{-5pt}

\begin{equation}
Pr \Bigg [ \frac{s_i}{c_i} > \frac{1}{2^i} \Bigg ] < \frac{1}{n^l \cdot \log_2 n}
\end{equation}
\end{lem}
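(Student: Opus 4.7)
My plan is a direct application of a multiplicative Chernoff bound, with the constants in $c_i$ calibrated so that the tail estimate lands on exactly $1/(n^l \log_2 n)$.

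First I would exploit the slack between $\mu$ and the threshold. Since $\kappa(R) \in [0,1]$ with $\mathbb{E}[\kappa(R)] = \mu$, the sum $s_i$ of $c_i$ i.i.d.\ copies has mean $\mathbb{E}[s_i] = c_i \mu$. The hypothesis $\mu \le 2^{-j}$ together with $i \le j - 1$ (so $2^j \ge 2 \cdot 2^i$) yields
\[
\mathbb{E}[s_i] \;\le\; \frac{c_i}{2^j} \;\le\; \frac{c_i}{2^{i+1}} \;=\; \frac{1}{2}\cdot\frac{c_i}{2^i}.
\]
Thus the event $\{s_i > c_i/2^i\}$ is the event that $s_i$ exceeds at least twice its mean.

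Next I would invoke the standard Chernoff bound for sums of independent $[0,1]$-valued random variables: for $\delta > 0$,
\[
\Pr\bigl[s_i > (1+\delta)\mathbb{E}[s_i]\bigr] \;\le\; \exp\!\left(-\frac{\delta^2 \,\mathbb{E}[s_i]}{2+\delta}\right).
\]
Setting $(1+\delta)\mathbb{E}[s_i] = c_i/2^i$ gives $\delta \ge 1$ by the previous paragraph, so $\delta^2/(2+\delta) \ge \delta/3$ and therefore
\[
\Pr\!\left[\frac{s_i}{c_i} > \frac{1}{2^i}\right] \;\le\; \exp\!\left(-\frac{\delta\,\mathbb{E}[s_i]}{3}\right) \;=\; \exp\!\left(-\frac{c_i/2^i - \mathbb{E}[s_i]}{3}\right) \;\le\; \exp\!\left(-\frac{c_i}{6\cdot 2^i}\right),
\]
using $\mathbb{E}[s_i] \le c_i/2^{i+1}$ in the last step.

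Finally I would plug in the definition $c_i = (6l \log n + 6 \log(\log_2 n)) \cdot 2^i$ so that $c_i/(6 \cdot 2^i) = l \log n + \log(\log_2 n)$, giving a tail probability of
\[
\exp\!\bigl(-l \log n - \log(\log_2 n)\bigr) \;=\; \frac{1}{n^l \cdot \log_2 n},
\]
which is exactly the claim. The only ``hard'' part is really bookkeeping: the constant $6$ in the definition of $c_i$ has to absorb both the factor of $1/3$ from the Chernoff exponent and the factor of $1/2$ coming from the gap $c_i/2^i - \mathbb{E}[s_i] \ge c_i/2^{i+1}$, which is precisely why the analysis requires $i \le j - 1$ (and not merely $i \le j$) so that $\delta \ge 1$ and the simplified exponent applies.
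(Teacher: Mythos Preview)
Your proof is correct and takes essentially the same approach as the paper: both set up the same multiplicative Chernoff bound with $(1+\delta)\mu = 2^{-i}$, exploit $\mu \le 2^{-i-1}$ (from $i \le j-1$) to lower-bound the exponent by $c_i \cdot 2^{-i}/6$, and then plug in the definition of $c_i$. The only cosmetic difference is that you simplify via $\delta \ge 1 \Rightarrow \delta^2/(2+\delta) \ge \delta/3$, whereas the paper substitutes $\delta = (2^{-i}-\mu)/\mu$ directly and bounds the resulting fraction $(2^{-i}-\mu)^2/(2^{-i}+\mu) \ge 2^{-i-1}/3$; the arithmetic lands in the same place.
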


\vspace{-5pt}

\begin{proof}
Let $\delta = (2^{-i} - \mu)/\mu$. By the Chernoff bounds,

\begin{align*}
Pr \Bigg [ \frac{s_i}{c_i} > 2^{-i} \Bigg ] &\le exp \Bigg ( - \frac{\delta^2}{2 + \delta} \cdot c_i \cdot \mu \Bigg ) \\
		&= exp( - c_i \cdot (2^{-i} - \mu)^2 / (2^{-i} + \mu) ) \\
		&\le exp( - c_i \cdot 2^{-i-1} / 3 ) = \frac{1}{n^l \cdot \log_2 n}
\end{align*}

This completes the proof.
\end{proof}

By Lemma \ref{lem:six}, if $KPT \le 2^{-j}$, then Algorithm \ref{alg:kptestimation} is very unlikely to terminate in any of the first $j - 1$ iterations. This prevents the algorithm from outputting a $KPT^{*}$ too much larger than $KPT$.

\begin{lem}
\label{lem:seven}
If $\mu \ge 2^{-j}$, then for any $i \ge j + 1$,

\vspace{-5pt}

\begin{equation}
Pr \Bigg [ \frac{s_i}{c_i} > \frac{1}{2^i} \Bigg ] > 1 - \Big ( \frac{1}{n^l \cdot \log_2 n} \Big )^{2^{i-j-1}}
\end{equation}
\end{lem}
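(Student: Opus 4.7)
The plan is to establish the desired lower-tail bound on $s_i/c_i$ by a direct application of the multiplicative Chernoff bound, mirroring the structure of the proof of Lemma~\ref{lem:six} but in the opposite direction. The $\kappa(R)$ values lie in $[0,1]$ and are i.i.d.\ with mean $\mu$, so $s_i$ is the sum of $c_i$ independent random variables bounded in $[0,1]$, which puts us squarely in the regime where a standard Chernoff bound applies.

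First, I would set $\delta = 1 - 2^{-i}/\mu$, so that the event $\{s_i/c_i \le 2^{-i}\}$ becomes $\{s_i/c_i \le (1-\delta)\mu\}$. The hypothesis $\mu \ge 2^{-j}$ combined with $i \ge j+1$ yields $2^{-i}/\mu \le 2^{j-i} \le 1/2$, so $\delta \ge 1/2$ (in particular $\delta \in (0,1)$, as required for the lower-tail form). The standard Chernoff lower-tail bound then gives
\begin{equation*}
\Pr\left[\frac{s_i}{c_i} \le 2^{-i}\right] \;\le\; \exp\!\left(-\frac{\delta^2 \mu c_i}{2}\right).
\end{equation*}

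Next, I would plug in $c_i = (6 l \log n + 6 \log(\log_2 n)) \cdot 2^i$ and use $\mu \ge 2^{-j}$ to obtain $\mu c_i \ge 6(l\log n + \log\log_2 n) \cdot 2^{i-j}$. Combined with $\delta^2 \ge 1/4$, this gives
\begin{equation*}
\frac{\delta^2 \mu c_i}{2} \;\ge\; \frac{3}{4}(l\log n + \log\log_2 n) \cdot 2^{i-j} \;=\; \frac{3}{2} \cdot 2^{i-j-1} (l\log n + \log\log_2 n),
\end{equation*}
which is at least $2^{i-j-1}(l\log n + \log\log_2 n) = 2^{i-j-1}\log(n^l \log_2 n)$. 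Exponentiating yields $\exp(-\delta^2 \mu c_i/2) \le (n^l \log_2 n)^{-2^{i-j-1}}$, and taking the complementary event gives the lemma.

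There is no real obstacle here; the work is in choosing the right $\delta$ so that the exponent scales like $2^{i-j-1}$, and in verifying that the constant $6$ in the definition of $c_i$ is large enough to absorb the factor of $1/(2\cdot\delta^2)$ from the Chernoff bound. The bookkeeping between $i$, $j$, and the doubling schedule is the only place where care is needed: one must track that $\delta \ge 1/2$ already at $i = j+1$, which is exactly what the condition $i \ge j+1$ is designed to guarantee, and that the constant $6$ leaves the leftover slack of $3/2$ needed to match the precise form $(n^l \log_2 n)^{-2^{i-j-1}}$.
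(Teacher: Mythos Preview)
Your proposal is correct and follows essentially the same approach as the paper: the paper also sets $\delta = (\mu - 2^{-i})/\mu$, applies the lower-tail Chernoff bound $\exp(-\delta^2 \mu c_i/2)$, and uses the equivalent inequality $(\mu - 2^{-i})^2/(2\mu) \ge \mu/8$ (which is exactly your $\delta \ge 1/2$) together with $\mu \ge 2^{-j}$ to reach $\exp(-c_i\mu/8) < (n^l\log_2 n)^{-2^{i-j-1}}$. Your write-up is somewhat more explicit about the arithmetic, but the argument is identical.
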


\vspace{-5pt}

\begin{proof}
Let $\delta = (\mu - 2^{-i})/\mu$. By the Chernoff bounds,

\begin{align*}
Pr \Bigg [ \frac{s_i}{c_i} \le 2^{-i} \Bigg ] &\le exp \Bigg ( - \frac{\delta^2}{2} \cdot c_i \cdot \mu \Bigg ) \\
		&= exp( - c_i \cdot (\mu - 2^{-i})^2 / (2 \cdot \mu) ) \\
		&\le exp( - c_i \cdot \mu / 8) < \Big ( \frac{1}{n^l \cdot \log_2 n} \Big )^{2^{i-j-1}}
\end{align*}

This completes the proof.
\end{proof}

By Lemma \ref{lem:seven}, if $KPT \le 2^{-j}$ and Algorithm \ref{alg:kptestimation} enters iteration $i > j+1$, then it will terminate in the $i$-th iteration with high probability. This ensures that the algorithm does not output a $KPT^{*}$ that is too much smaller than $KPT$.

Based on Lemmas \ref{lem:six} and \ref{lem:seven}, we prove the following theorem for the correctness and expected runtime of Algorithm \ref{alg:kptestimation}.

\begin{thm}
\label{thm:kpt}
When $n \ge 2$ and $l \ge 1/2$, Algorithm \ref{alg:kptestimation} returns $KPT^{*} \in [KPT/4,OPT_L]$ with at least $1 - 2 n^{-l}$ probability, and runs in $O(l(m+n)(1 / (1 - \gamma)) \log n)$ expected time. Furthermore, $\mathbb{E}[\frac{1}{KPT^{*}}] < \frac{12}{KPT}$.
\end{thm}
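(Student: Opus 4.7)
\textbf{Proof Plan for Theorem \ref{thm:kpt}.} The plan is to introduce a single pivot index $j^{*}$ that captures the ``correct'' termination window dictated by $\mu = KPT/n$, and then use Lemmas \ref{lem:six} and \ref{lem:seven} as two-sided barriers around $j^{*}$. Concretely, let $j^{*}$ be the unique non-negative integer with $2^{-j^{*}-1} < \mu \le 2^{-j^{*}}$ (equivalently, $j^{*} = \lfloor \log_2 (1/\mu) \rfloor$). Observe that whenever Algorithm \ref{alg:kptestimation} returns on Line~9, the acceptance test $s_i/c_i > 1/2^i$ combined with the scaling $KPT^{*} = n \cdot s_i /(2c_i)$ gives the deterministic identity $KPT^{*} > n/2^{i+1}$, which will drive every bound.

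For correctness, I would split into the two tails. \emph{Upper bound $KPT^{*} \le OPT_L$.} Applying Lemma \ref{lem:six} with $j = j^{*}$ and taking a union bound over $i = 1, \dots, j^{*}-1$, the algorithm fails to terminate prematurely with probability at most $(j^{*}-1)/(n^l \log_2 n) \le 1/(2n^l)$, so the realized $i^{*}$ is $\ge j^{*}$ with high probability. On any such iteration, $c_i \mu \ge c_{j^{*}} \cdot 2^{-j^{*}-1} = \Omega(l \log n)$, so a standard Chernoff upper-tail bound on the sum of the i.i.d.\ $[0,1]$-valued variables $\kappa(R)$ yields $\Pr[s_i/c_i > 2\mu] \le n^{-2l}$ at each such $i$. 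Union-bounding over the at most $\log_2 n$ relevant iterations gives $KPT^{*} < n\mu = KPT \le OPT_L$ on the good event (with the trivial fallback $KPT^{*} = 1 \le OPT_L$ handled separately). \emph{Lower bound $KPT^{*} \ge KPT/4$.} Lemma \ref{lem:seven} applied with $j = j^{*}+1$ (valid since $\mu > 2^{-j^{*}-1}$) gives that the algorithm has, with probability at least $1 - 1/(n^l \log_2 n)$, terminated by iteration $j^{*}+1$. On this event $i^{*} \le j^{*}+1$, so $KPT^{*} > n/2^{j^{*}+2} \ge (n\mu)/4 = KPT/4$. Combining both tails with a final union bound yields $KPT^{*} \in [KPT/4, OPT_L]$ with probability at least $1 - n^{-l}$, using $n \ge 2$ and $l \ge 1/2$ to absorb the lower-order constants.

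For the runtime, the per-sample work is the expected width of an RRC set, which was shown to equal $EPT$. Since $c_i$ doubles each iteration, the expected total work is $O(c_{i^{*}} \cdot EPT)$. On the high-probability event $i^{*} \le j^{*}+1$, $c_{i^{*}} = O(l \log n \cdot 2^{j^{*}}) = O(l \log n \cdot n/KPT)$; the tail event contributes only a vanishing correction because $\Pr[i^{*} > j^{*}+t]$ decays doubly-exponentially in $t$ (Lemma \ref{lem:seven}) while $c_i$ only doubles. Plugging $EPT \le (m/n) \cdot KPT/(1-\gamma)$, which comes by combining Lemma \ref{lem:EPT}, Equation \ref{eqn:ept_kpt}, and $\mathbb{E}[\omega(M_R)] = \gamma \, EPT$, the $KPT$ factors cancel and produce the target bound $O(l(m+n)(1/(1-\gamma)) \log n)$.

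Finally, for $\mathbb{E}[1/KPT^{*}] < 12/KPT$, I would decompose
\[
\mathbb{E}[1/KPT^{*}] = \sum_{i \ge 1} \Pr[i^{*} = i] \cdot \mathbb{E}[1/KPT^{*} \mid i^{*} = i],
\]
and use the deterministic bound $1/KPT^{*} < 2^{i+1}/n$ on each term. For $i \le j^{*}+1$ the conditional value is at most $4/KPT$, so these terms contribute at most $4/KPT$. For $i \ge j^{*}+2$, Lemma \ref{lem:seven} supplies $\Pr[i^{*} \ge i] \le (n^l \log_2 n)^{-2^{i-j^{*}-2}}$, and the sum $\sum_{t \ge 0} 2^{t+2}/KPT \cdot (n^l \log_2 n)^{-2^{t}}$ is dominated by its first term and thus is $o(1/KPT)$ for $n \ge 2$, $l \ge 1/2$. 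Adding these contributions yields a constant strictly smaller than $12$, establishing the claim. The main obstacle is the upper-bound half: Lemmas \ref{lem:six} and \ref{lem:seven} alone only control early/late termination, so a separate Chernoff upper-tail argument at every candidate iteration $i \ge j^{*}$ is required to rule out the bad event that $s_i/c_i$ overshoots its mean $\mu$ and pushes $KPT^{*}$ above $OPT_L$.
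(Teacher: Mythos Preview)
Your proposal is correct and follows essentially the same approach as the paper: localize the termination iteration around a pivot index determined by $\mu = KPT/n$ using Lemmas~\ref{lem:six} and~\ref{lem:seven}, invoke a separate Chernoff upper tail to get $KPT^{*}\le OPT_L$, and decompose $\mathbb{E}[1/KPT^{*}]$ by the terminating iteration with the deterministic bound $1/KPT^{*}<2^{i+1}/n$. Your derivation of $EPT \le (m/n)\,KPT/(1-\gamma)$ directly from the definition of $\gamma$ is a bit cleaner than the paper's detour through an auxiliary quantity, and both arguments share the same harmless off-by-one slack in the application of Lemma~\ref{lem:seven}.
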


\vspace{-5pt}

\begin{proof}
Assume that $KPT / n \in [2^{-j}, 2^{-j+1}]$. We first prove the accuracy of the $KPT^{*}$ returned by Algorithm \ref{alg:kptestimation}.

By Lemma \ref{lem:six} and the union bound, Algorithm \ref{alg:kptestimation} terminates in or before the $(j-2)$-th iteration with less than $n^{-l} (j-2) / \log_2 n$ probability. On the other hand, if Algorithm \ref{alg:kptestimation} reaches the $(j+1)$-th iteration, then by Lemma \ref{lem:seven}, it terminates in the $(j+1)$-th iteration with at least $1 - n^{-l} / \log_2 n$ probability. Given the union bound and the fact that Algorithm \ref{alg:kptestimation} has at most $2 \log_2 n - 1$ iterations, Algorithm \ref{alg:kptestimation} should terminate in the $(j-1)$-th, $j$-th, or $(j+1)$-th iteration with a probability at least $1 - n^{-l} (2 \log_2 n - 2) / \log_2 n$. In that case, $KPT^{*}$ must be larger than $n/2 \cdot 2^{-j-1}$, which leads to $KPT^{*} > KPT / 4$. Furthermore, $KPT^{*}$ should be $n/2$ times the average of at least $c_{j-1}$ i.i.d. samples from $\mathcal{K}$. By the Chernoff bounds, it can be verified that

\vspace{-5pt}

\begin{equation*}
Pr[ KPT^{*} \ge KPT ] \le n^{-l} / \log_2 n
\end{equation*}

\vspace{5pt}

By the union bound, Algorithm \ref{alg:kptestimation} returns, with probability at least $1 - 2 n^{-l}$ probability, $KPT^{*} \in [ KPT/4, KPT ] \subseteq [ KPT/4, OPT_L ]$.

Next, we analyze the expected runtime of Algorithm \ref{alg:kptestimation}. Recall that the $i$-th iteration of the algorithm generates $c_i$ RRC sets, and each RRC set takes $O(EPT)$ expected time. Given that $c_{i+1} = 2 \cdot c_i$ for any $i$, the first $j+1$ iterations generate less than $2 \cdot c_{j+1}$ RRC sets in total. Meanwhile, for any $i' \ge j+2$, Lemma \ref{lem:seven} shows that Algorithm \ref{alg:kptestimation} has at most $n^{-l \cdot 2^{i' - j - 1}} / \log_2 n$ probability to reach the $i'$-th iteration. Therefore, when $n \ge 2$ and $l \ge 1/2$, the expected number of RRC sets generated after the first $j+1$ iterations is less than

\vspace{-5pt}

\begin{equation*}
\sum_{i' = j+2}^{2 \log_2 n - 1} \Big ( c_{i'} \cdot \Big ( \frac{1}{n^l \cdot \log_2 n} \Big )^{2^{i-j-1}} \Big ) < c_{j+2}
\end{equation*}

\vspace{5pt}

Hence, the expected total number of RRC sets generated by Algorithm \ref{alg:kptestimation} is less than $2 c_{j+1} + c_{j+2} = 2 c_{j+2}$. Therefore, the expected time complexity of the algorithm is

\begin{align*}
O(c_{j+2} \cdot EPT) &= O(2^j l \log n \cdot EPT) \\
	&= O \Big (2^j l \log n \cdot \Big (1 + \frac{m}{n} \Big ) \cdot KPT \cdot \Big ( \frac{1}{1 - \gamma} \Big ) \Big ) \\
	&= O \Big (2^j l \log n \cdot (m + n) \cdot 2^{-j} \cdot \Big ( \frac{1}{1 - \gamma} \Big ) \Big ) \\
	&= O \Big (l \log n \cdot (m + n) \cdot \Big ( \frac{1}{1 - \gamma} \Big ) \Big ) \\
\end{align*}

\vspace{-5pt}

Here we used Equation \ref{eqn:ept_kpt} in the second equality. Finally, we show that $\mathbb{E}[1/KPT^{*}] < 12/KPT$. Observe that if Algorithm \ref{alg:kptestimation} terminates in the $i$-th iteration, it returns $KPT^{*} \ge n \cdot 2^{-i-1}$. Let $\zeta_i$ denote the event that Algorithm \ref{alg:kptestimation} stops in the $i$-th iteration. By Lemma \ref{lem:seven}, when $n \ge 2$ and $l \ge 1/2$, we have

\vspace{-5pt}

\begin{align*}
\mathbb{E}[1/KPT^{*}] &= \sum_{i=1}^{2 \log_2 n - 1} \Big ( 2^{i+1} / n \cdot \Pr[\zeta_i] \Big ) \\
	& < \sum_{i=j+2}^{2 \log_2 n - 1} \Big ( 2^{i+1} / n \cdot \Big ( n^{-l \cdot 2^{i-j-1}} / \log_2 n \Big ) \Big ) + 2^{j+2} / n \\
	& < (2^{j+3} + 2^{j+2}) / n \le 12 / KPT
\end{align*}

This completes the proof.
\end{proof}




\subsection{Improved Parameter Estimation}

This section describes a new heuristic for improving the practical performance of \emph{RPS}, without affecting its asymptotic guarantees, by improving the estimated lower bound on $OPT_L$. Our heuristic simplifies the one introduced in \cite{Tang2014} and further, is adapted to the MCIC setting.

\begin{algorithm}
\caption{RefineKPT($\mathcal{G}$,$k$,$A_C$, $KPT^{*}$, $\epsilon'$)}
\begin{algorithmic}[1]
	\State Let $\lambda' = (2 + \epsilon') l n \log n \cdot (\epsilon')^{-2}$.
	\State Let $\theta' = \lambda' / KPT^{*}$.
	\State Generate $\theta'$ random RRC sets; put them into a set $\mathcal{R}'$.
	\State Initialize a node set $S'_k = \emptyset$.
	\For{$i = 1$ to $k$}
		\State Identify node $v_i$ that covers the most RRC sets in~$\mathcal{R}'$.
		\State Add $v_i$ to $S'_k$.
		\State Remove from $\mathcal{R}'$ all RRC sets that are covered by $v_i$.
	\EndFor
	\State Let $f$ be the fraction of the original $\theta'$ RRC sets that are covered by $S'_k$.
	\State Let $KPT' = f \cdot n / (1 + \epsilon')$
	\State \Return $KPT^{+} = max\{KPT', KPT^{*}\}$
\end{algorithmic}
\label{alg:refinekpt}
\end{algorithm}

Observe that the $KPT^{*}$ output by Algorithm \ref{alg:kptestimation} largely determines the efficiency of \emph{RPS}. If $KPT^{*}$ is close to $OPT_L$, then $\theta = \lambda / KPT^{*}$ is small and Algorithm \ref{alg:nodeselection} only needs to generate a relatively small number of RRC sets. However, if $KPT^{*} \ll OPT_L$ then the efficiency of Algorithm \ref{alg:nodeselection} degrades rapidly and, in turn, so does the overall performance of \emph{RPS}.

To remedy this issue, we can add an intermediate step before Algorithm \ref{alg:nodeselection} to refine $KPT^{*}$ into a potentially tighter lower-bound of $OPT_L$. The intuition behind this heuristic is to generate a reduced number $\theta'$ of random RRC sets, placing them into a set $\mathcal{R}'$, and then apply the greedy approach (for the maximum coverage problem) on $\mathcal{R}'$ to obtain a good estimator for the maximum expected prevention in $\mathcal{R}'$. Thus, we can use the estimation as a good lower-bound for $OPT_L$.

Algorithm \ref{alg:refinekpt} shows the pseudo-code of the intermediate step. It first generates $\theta'$ random RRC sets and invokes the greedy approach for the maximum coverage problem on $\mathcal{R}'$ to obtain a size-$k$ node set $S'_{k}$. Algorithm \ref{alg:refinekpt} computes the fraction $f$ of RRC sets that are covered by $S'_{k}$ so that, by Corollary \ref{cor:equiv}, $f \cdot n$ is an unbiased estimation of $\mathbb{E}[\pi(S'_{k})]$. We set $\theta'$ based on the $KPT^{*}$ output by Algorithm \ref{alg:kptestimation} to a reasonably large number to ensure that $f \cdot n < (1 + \epsilon') \cdot \mathbb{E}[\pi(S'_{k})]$ occurs with at least $1 - n^{-l}$ probability. Based on this, Algorithm \ref{alg:refinekpt} computes $KPT' = f \cdot n / (1 + \epsilon')$ scaling $f \cdot n$ down by a factor of $1 + \epsilon'$ to ensure that $KPT' \le \mathbb{E}[\pi(S'_{k})] \le OPT_L$. The final output of Algorithm \ref{alg:refinekpt} is $KPT^{+} = max\{KPT', KPT^{*}\}$. Below we give a lemma that shows the theoretical guarantees of Algorithm \ref{alg:refinekpt}.

\begin{lem}
Given that $\mathbb{E}[\frac{1}{KPT^{*}}] < \frac{12}{KPT}$, Algorithm \ref{alg:refinekpt} runs in $O(l (m+n) (1 / (1 - \gamma)) \log n / (\epsilon')^2)$ expected time. In addition, it returns $KPT^{+} \in [KPT^{*}, OPT_L]$ with at least $1 - n^{-l}$ probability, if $KPT^{*} \in [KPT/4, OPT_L]$.
\end{lem}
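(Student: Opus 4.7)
The dominant cost of Algorithm~\ref{alg:refinekpt} is Line~3, which generates $\theta' = \lambda'/KPT^{*}$ random RRC sets at expected cost $O(EPT)$ each, for a total expected sampling time of $\lambda' \cdot \mathbb{E}[EPT/KPT^{*}]$. Writing $\mathbb{E}[\omega(M_R)] = \gamma \cdot EPT$ in Equation~\ref{eqn:ept_kpt} rearranges to $EPT \le (m/n)(1/(1-\gamma)) \cdot KPT$, so by the hypothesis $\mathbb{E}[1/KPT^{*}] < 12/KPT$ we obtain $\lambda' \cdot \mathbb{E}[EPT/KPT^{*}] \le \lambda' \cdot (m/n)(1/(1-\gamma)) \cdot KPT \cdot \mathbb{E}[1/KPT^{*}] < 12 \lambda' m / (n(1-\gamma))$. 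Plugging in $\lambda' = (2+\epsilon')\,ln\log n / (\epsilon')^{2}$ yields the claimed $O(l(m+n) \log n / ((1-\gamma)(\epsilon')^{2}))$ bound. The greedy cover in Lines~5--8 can be implemented in time linear in the sum of RRC-set sizes, which is at most the total number of edges traversed during sampling and is therefore dominated.

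\textbf{Correctness.} Since $KPT^{+} = \max(KPT', KPT^{*}) \ge KPT^{*}$ is trivial, I focus on $KPT^{+} \le OPT_L$ with probability at least $1 - n^{-l}$. Because $KPT^{*} \le OPT_L$ holds by hypothesis, it suffices to show that $KPT' \le OPT_L$, i.e., $n \cdot f \le (1+\epsilon') OPT_L$. For any \emph{fixed} size-$k$ node set $S$, the random variable $\theta' \cdot F_{\mathcal{R}'}(S)$ is a sum of $\theta'$ i.i.d.\ Bernoullis with common mean $p_S = \mathbb{E}[\pi(S)]/n \le OPT_L/n$ (by Corollary~\ref{cor:equiv}). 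Applying a multiplicative Chernoff bound with deviation parameter $\delta_S = (1+\epsilon') OPT_L/\mathbb{E}[\pi(S)] - 1 \ge \epsilon'$ gives $\Pr[n \cdot F_{\mathcal{R}'}(S) > (1+\epsilon') OPT_L] \le \exp\!\bigl(-(\epsilon')^{2} \theta' OPT_L / ((2+\epsilon')n)\bigr)$, and since $\theta' \ge \lambda'/OPT_L$ this bound collapses to $n^{-l}/\binom{n}{k}$ (with the $\log\binom{n}{k}$ term absorbed into the $l \log n$ factor of $\lambda'$). A union bound over all $\binom{n}{k}$ size-$k$ sets extends the inequality uniformly in $S$, so it holds for the random greedy-selected $S'_k$ in particular.

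\textbf{Main obstacle.} The chief difficulty is that $S'_k$ is adaptive --- it is determined by the realized sample $\mathcal{R}'$ --- so a direct concentration argument on $F_{\mathcal{R}'}(S'_k)$ does not go through; the union bound over all $\binom{n}{k}$ candidate $k$-sets is the natural workaround. A secondary subtlety is that $\theta'$ is itself random, depending on $KPT^{*}$, so the Chernoff step should be carried out conditionally on the high-probability event $KPT^{*} \in [KPT/4, OPT_L]$ guaranteed by Theorem~\ref{thm:kpt}, with the small conditioning failure probability folded into the final union bound.
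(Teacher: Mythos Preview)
Your runtime argument is essentially the paper's: both bound the sampling cost by $\lambda' \cdot \mathbb{E}[1/KPT^{*}] \cdot EPT$, invoke the hypothesis $\mathbb{E}[1/KPT^{*}] < 12/KPT$, and relate $EPT$ to $KPT$ through Equation~\ref{eqn:ept_kpt} and the definition of $\gamma$.

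For correctness you take a different route, and there is a genuine gap. The paper applies Chernoff directly to the greedy output $S'_k$, treating it as a fixed set with coverage probability $\rho' = \mathbb{E}[\pi(S'_k)]/n$; with $\lambda' = (2+\epsilon')\, l\, n \log n / (\epsilon')^{2}$ this yields a failure probability of exactly $n^{-l}$. You correctly observe that $S'_k$ is adaptive in $\mathcal{R}'$ and instead propose a union bound over all $\binom{n}{k}$ size-$k$ sets. But that union bound does not go through with $\lambda'$ as defined in Line~1 of Algorithm~\ref{alg:refinekpt}: substituting $\theta' \ge \lambda'/OPT_L$ into the Chernoff tail gives
\[
\exp\!\Big(-\tfrac{(\epsilon')^{2}\, \lambda'}{(2+\epsilon')\, n}\Big) \;=\; \exp(-l \log n) \;=\; n^{-l},
\]
not $n^{-l}/\binom{n}{k}$. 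Your parenthetical claim that the $\log\binom{n}{k}$ term is ``absorbed into the $l\log n$ factor of $\lambda'$'' is false --- $\log\binom{n}{k} = \Theta(k\log n)$ and $\lambda'$ carries no $k$-dependent term (contrast with $\lambda$ in Equation~\ref{eqn:lambda}). A union bound therefore inflates the failure probability to $\binom{n}{k}\, n^{-l}$, nowhere near the stated $n^{-l}$.

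In short, the paper and you diverge precisely on the adaptivity of $S'_k$: the paper (following Tang et~al.) sidesteps it by treating $S'_k$ as fixed, while you try to union-bound it away; but the sample budget $\theta'$ was calibrated only for the former approach. To make your route work you would have to enlarge $\lambda'$ by an additive $\log\binom{n}{k}$, which alters the algorithm and its stated runtime.
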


\vspace{-5pt}

\begin{proof}
We first analyze the expected time complexity of Algorithm \ref{alg:refinekpt}. Observe that the expected time complexity of Lines 1-3 of Algorithm \ref{alg:refinekpt} is $O(\mathbb{E}[\frac{\lambda'}{KPT^{*}}] \cdot EPT)$, since they generate $\frac{\lambda'}{KPT^{*}}$ random RRC sets, each of which takes $O(EPT)$ expected time to generate. By Theorem \ref{thm:kpt}, $\mathbb{E}[\frac{1}{KPT^{*}}] < \frac{12}{KPT}$. 

\vspace{-5pt}

\begin{align*}
O \bigg ( \mathbb{E} \bigg [ \frac{\lambda'}{KPT^{*}} \bigg ] \cdot EPT \bigg )
    &= O \bigg ( \frac{\lambda'}{KPT} \cdot EPT \bigg ) \\
	&= O \bigg ( \frac{\lambda'}{KPT} \cdot \bigg ( 1 + \frac{m}{n} \bigg ) \cdot KPT \cdot \bigg ( \frac{1}{1 - \gamma} \bigg ) \bigg ) \\
	&= O \bigg ( \lambda' \cdot \bigg ( 1 + \frac{m}{n} \bigg ) \cdot  \bigg ( \frac{1}{1 - \gamma} \bigg ) \bigg ) \\
	&= O(l (m + n) (1 / (1 - \gamma)) \log n / (\epsilon')^2)
\end{align*}

\noindent On the other hand, Lines 4-12 run in time linear to the total size of the RRC sets in $\mathcal{R}'$, i.e.\ the set of all RRC sets generated in Lines 1-3 of Algorithm \ref{alg:refinekpt}. Given that the expected total size of the RRC sets in $\mathcal{R}'$ should be no more than $O(l (m+n) (1 / (1 - \gamma)) \log n)$, Lines 4-12 of Algorithm \ref{alg:refinekpt} have an expected time complexity of $O(l (m+n) (1 / (1 - \gamma)) \log n)$. Therefore, the expected time complexity of Algorithm \ref{alg:refinekpt} is $O(l (m + n) (1 / (1 - \gamma)) \log n / (\epsilon')^2)$.

Next, we prove that Algorithm \ref{alg:refinekpt} returns $KPT^{+} \in [KPT^{*}, OPT_L]$ with high probability. First, observe that $KPT^{+} \ge KPT^{*}$ trivially holds, as Algorithm \ref{alg:refinekpt} sets $KPT^{+} = max \{ KPT', KPT^{*} \}$, where $KPT'$ is derived in Line 11 of Algorithm \ref{alg:refinekpt}. To show that $KPT^{+} \in [KPT^{*}, OPT_L]$, it suffices to prove that $KPT' \le OPT_L$.

By Line 11 of Algorithm \ref{alg:refinekpt}, $KPT' = f \cdot n / (1 + \epsilon')$, where $f$ is the fraction of RRC sets in $\mathcal{R}'$ that is covered by $S'_k$, where $\mathcal{R}'$ is a set of $\theta'$ random RRC sets, and $S'_k$ is a size-$k$ node set generated from Lines 4-9 in Algorithm \ref{alg:refinekpt}. Therefore, $KPT' \le OPT_L$ if and only if $f \cdot n \le (1 + \epsilon') \cdot OPT_L$.

Let $\rho'$ be the probability that a random RRC set is covered by $S'_k$. By Corollary \ref{cor:equiv}, $\rho' = \mathbb{E}[\pi(S'_k)] / n$. In addition, $f \cdot \theta'$ can be regarded as the sum of $\theta'$ i.i.d.\ Bernoulii variables with a mean $\rho'$. Therefore, we have

\vspace{-5pt}

\begin{align}
\Pr[f \cdot n & > (1 + \epsilon') \cdot OPT_L] \nonumber \\
    &\le \Pr \big [ n \cdot f - \mathbb{E}[\pi(S'_k)] > \epsilon' \cdot OPT_L \big ] \nonumber \\
	& = \Pr \bigg [ \theta' \cdot f - \theta' \cdot \rho' > \frac{\theta'}{n} \cdot \epsilon' \cdot OPT_L \bigg ] \nonumber \\
	& = \Pr \bigg [ \theta' \cdot f - \theta' \cdot \rho' > \frac{\epsilon' \cdot OPT_L}{n \cdot \rho'} \cdot \theta' \cdot \rho' \bigg ] \label{eqn:refineone}
\end{align}

\noindent Let $\delta = \epsilon' \cdot OPT_L / (n \rho')$. By the Chernoff bounds, we have

\vspace{-5pt}

\begin{align*}
\text{r.h.s.\ of Eqn. \ref{eqn:refineone}} & \le \exp \bigg ( - \frac{\delta^2}{2 + \delta} \cdot \rho' \theta' \bigg ) \\
	& = \exp \bigg ( - \frac{\epsilon'^2 \cdot OPT_{L}^2}{2 n^2 \rho' + \epsilon' n \cdot OPT_L} \cdot \theta' \bigg ) \\
	& \le \exp \bigg ( - \frac{\epsilon'^2 \cdot OPT_{L}^2}{2 n \cdot OPT_L + \epsilon' n \cdot OPT_L} \cdot \theta' \bigg ) \\
	& = \exp \bigg ( - \frac{\epsilon'^2 \cdot OPT_{L}}{(2 + \epsilon') \cdot n} \cdot \frac{\lambda'}{KPT^{*}} \bigg ) \\
	& \le \exp \bigg ( - \frac{\epsilon'^2 \cdot \lambda'}{(2 + \epsilon') \cdot n} \bigg ) \le \frac{1}{n^l} \\
\end{align*}

\vspace{-5pt}

\noindent Therefore, $KPT' = f \cdot n / (1 + \epsilon') \le OPT_L$ holds with at least $1 - n^{-l}$ probability. This completes the proof.
\end{proof}

Note that the time complexity of Algorithm \ref{alg:refinekpt} is smaller than that of Algorithm \ref{alg:kptestimation} by a factor of $k$, since the former only needs to accurately estimate the expected prevention of one node set (i.e. $S'_{k}$), whereas the latter needs to ensure accurate estimations for $\binom{n}{k}$ node sets simultaneously. Additionally, our new approach eliminates the need to first compute a seed set from the RRC sets generated in the last iteration of Algorithm \ref{alg:kptestimation} as in \cite{Tang2014}. 

\paragraph*{Wrapping Up} In summary, we integrate Algorithm \ref{alg:refinekpt} into \emph{RPS} and obtain an improved solution (referred to as \emph{$RPS^{+}$}) as follows. Given $\mathcal{G}$, $k$, $A_C$, $\epsilon$, and $l$, we first invoke Algorithm \ref{alg:kptestimation} to derive $KPT^{*}$. After that, we feed $\mathcal{G}$, $k$, $A_C$, $KPT^{*}$, and a parameter $\epsilon^{*}$ (as defined in \cite{Tang2014}) to Algorithm \ref{alg:refinekpt}, and obtain $KPT^{+}$ in return. Then, we compute $\theta = \lambda / KPT^{+}$, where $\lambda$ is as defined in Equation \ref{eqn:lambda}. Finally, we run Algorithm \ref{alg:nodeselection} with $\mathcal{G}$, $k$, $A_C$, and $\theta$ as the input and get the output $S_k^{*}$ as the final result of prevention maximization. 

By Theorems \ref{thm:main} and \ref{thm:kpt}, Equation \ref{eqn:runtime_proper}, and the union bound, \emph{RPS} runs in $O((k+l)(m+n)(1 / (1 - \gamma)) \log n / \epsilon^2)$ expected time and it can be verified that when $\epsilon' \ge \epsilon / \sqrt{k}$, \emph{$RPS^{+}$} has the same time complexity as \emph{RPS}. Furthermore, \emph{$RPS^{+}$} returns a $(1 - 1/e - \epsilon)$-approximate solution with at least $1 - 4n^{-l}$ probability and the success probability can be increased to $1 - n^{-l}$ by scaling $l$ up by a factor of $1 + \log4 / \log n$.

Finally, we note that the time complexity of \emph{RPS} is \emph{near-optimal} up to the instance-specific factor $\gamma$ under the MCIC model, as it is only a $(\frac{1}{1 - \gamma}) \log n$ factor larger than the $\Omega(m+n)$ lower-bound proved in Section \ref{sec:lowerbound} (for fixed $k$, $l$, and $\epsilon$).

\section{Lower Bounds}
\label{sec:lowerbound}

\vspace{-5pt}

\paragraph*{Comparison with \emph{MCGreedy}} \emph{MCGreedy} runs in $O(kmnr)$ time, where $r$ is the number of Monte Carlo samples used to estimate the expected prevention of each node set. Budak et al. do not provide a detailed analysis related to how $r$ should be set to achieve a $(1 - 1/e - \epsilon)$-approximation ratio in the MCIC model, only pointing out that when each estimation of expected prevention has $\epsilon$ relative error, \emph{MCGreedy} returns a $(1 - 1/e - \epsilon')$-approximate solution for a particular $\epsilon'$ \cite{budak2011limiting}. In the following lemma, we present a more precise characterization of the relationship between $r$ and \emph{MCGreedy}'s approximation ratio in the MCIC model.

\begin{lem}
\emph{MCGreedy} returns a $(1 - 1/e - \epsilon)$-approximate solution with at least $1 - n^{-l}$ probability, if

\vspace{-10pt}

\begin{equation}
\label{eqn:greedy}
r \ge (8k^2 + 2k \epsilon) \cdot n \cdot \frac{(l+1) \log n + \log k}{\epsilon^2 \cdot OPT_L}
\end{equation}
\label{lem:greedy_comp}
\end{lem}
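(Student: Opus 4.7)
The plan is to mirror the Chernoff-bound argument of Lemma~\ref{lem:bound}, combined with a union bound over all set evaluations performed by \emph{Greedy} and the classical analysis of submodular maximization under an approximate value oracle. First I would reformulate each Monte Carlo simulation as producing an i.i.d.\ Bernoulli sample: each of the $r$ runs picks a node $v \in V$ uniformly at random and records an indicator $X_i$ of whether $v$ is saved by $S$ in an MCIC propagation on $\mathcal{G}$. By Lemma~\ref{lem:equiv}, $\mathbb{E}[X_i] = \mathbb{E}[\pi(S)]/n$, so $\hat{\pi}(S) := (n/r) \sum_{i=1}^{r} X_i$ is unbiased, placing us in the same Bernoulli setting as the proof of Lemma~\ref{lem:bound}.

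I would then reuse that Chernoff derivation verbatim, with $r \cdot \hat{\pi}(S)/n$ playing the role of $\theta \cdot F_{\mathcal{R}}(S)$ and using $\mathbb{E}[\pi(S)] \le OPT_L$ to simplify, obtaining
\begin{equation*}
\Pr\!\bigl[\,|\hat{\pi}(S) - \mathbb{E}[\pi(S)]| \ge \gamma \cdot OPT_L \bigr] \le 2 \exp\!\left(- \frac{\gamma^2 \cdot OPT_L \cdot r}{(2 + \gamma) \cdot n}\right).
\end{equation*}
For a noisy greedy to return a $(1 - 1/e - \epsilon)$-approximate solution, the standard submodular recurrence requires every marginal-gain estimate to be within $(\epsilon/k) \cdot OPT_L$ of truth, which is ensured by choosing $\gamma = \epsilon/(2k)$ so that each $\hat{\pi}(S)$ has additive error at most $(\epsilon/(2k)) \cdot OPT_L$ and each estimated marginal gain has error at most $(\epsilon/k) \cdot OPT_L$.

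Since \emph{Greedy} examines at most $n$ candidates in each of its $k$ rounds, it performs at most $kn$ distinct evaluations of $\hat{\pi}(\cdot)$. Demanding each evaluation to fail with probability at most $n^{-l}/(kn) = n^{-l-1}/k$ and applying a union bound yields overall success probability $1 - n^{-l}$. Substituting $\gamma = \epsilon/(2k)$ into the Chernoff bound above, equating its right-hand side to $n^{-l-1}/k$, and solving for $r$ produces exactly the stated coefficient $(8k^2 + 2k\epsilon) \cdot n \cdot ((l+1)\log n + \log k) / (\epsilon^2 \cdot OPT_L)$, with the $\log 2$ coming from the two-sided tail absorbed into the surrounding log terms. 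Conditional on all $kn$ estimates being $\gamma \cdot OPT_L$-accurate, the classical noisy-greedy analysis for monotone submodular functions yields $\mathbb{E}[\pi(S_k)] \ge (1 - 1/e) \cdot OPT_L - k \cdot (\epsilon/k) \cdot OPT_L = (1 - 1/e - \epsilon) \cdot OPT_L$; monotonicity and submodularity of $\pi$ are inherited from the modified proof sketched in Section~\ref{sec:def}.

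The main obstacle I expect is the constant bookkeeping in the noisy-greedy step: naively bounding the marginal-gain estimation error as twice the error on a single $\hat{\pi}(S)$ would inflate the coefficient beyond $(8k^2 + 2k\epsilon)$, so reaching the stated constant appears to depend on estimating each marginal gain as a single coupled quantity (reusing the same random graph realizations for both $\pi(S \cup \{v\})$ and $\pi(S)$) and on noting that the relevant variance is bounded by that of a single indicator variable rather than the difference of two. Modulo this accounting, the rest of the argument is a mechanical combination of Chernoff, union bound, and the textbook noisy-greedy recurrence.
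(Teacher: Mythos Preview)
Your proposal is essentially the paper's own proof: set per-set additive error to $\tfrac{\epsilon}{2k}\cdot OPT_L$, apply the Chernoff bound with $\delta = \epsilon\cdot OPT_L/(2kn\mu)$ to get per-evaluation failure probability at most $1/(k\,n^{l+1})$, union-bound over the at most $kn$ evaluations, and run the standard noisy-greedy recurrence $x_{i-1}-x_i \ge x_{i-1}/k - (\epsilon/k)\,OPT_L$ to conclude.

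Your closing worry, however, is misplaced. The paper does \emph{not} couple the samples or estimate marginal gains as a single quantity; it uses exactly the ``naive'' bound you fear, namely marginal-gain error $\le 2\cdot\tfrac{\epsilon}{2k}\,OPT_L = \tfrac{\epsilon}{k}\,OPT_L$, and this already produces the stated constant $(8k^2+2k\epsilon)$. Indeed, with $\gamma = \epsilon/(2k)$ one has $(2+\gamma)/\gamma^2 = (2+\epsilon/(2k))\cdot 4k^2/\epsilon^2 = (8k^2+2k\epsilon)/\epsilon^2$, which is precisely the coefficient in Equation~\eqref{eqn:greedy}. So there is no obstacle: your own substitution already closes the argument without any coupling trick.
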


\begin{proof}
Let $S$ be any node set that contains no more than $k$ nodes in $G$, and $\xi(S)$ be an estimation of $\mathbb{E}[\pi(S)]$ using $r$ Monte Carlo steps. We first prove that, if $r$ satisfies Equation \ref{eqn:greedy}, then $\xi(S)$ will be close to $\mathbb{E}[\pi(S)]$ with a high probability.

Let $\mu = \mathbb{E}[\pi(S)] / n$ and $\delta = \epsilon OPT_L / (2 k n \mu)$. By the Chernoff bounds, we have

\begin{align}
\Pr \big [ & | \xi(S) - \mathbb{E}[\pi(S)] | > \frac{\epsilon}{2k} OPT_L \big ] \nonumber \\
    &= \Pr \bigg [ \bigg | r \cdot \frac{\xi(S)}{n} - r \cdot \frac{\mathbb{E}[\pi(S)]}{n} \bigg | > \frac{\epsilon}{2 k n} \cdot r \cdot OPT_L \bigg ] \nonumber \\
	&= \Pr \bigg [ \bigg | r \cdot \frac{\xi(S)}{n} - r \cdot \frac{\mathbb{E}[\pi(S)]}{n} \bigg | > \delta \cdot r \cdot \mu \bigg ] \nonumber \\
	&< 2 \exp \bigg ( - \frac{\delta^2}{2 + \delta} \cdot r \cdot \mu \bigg ) \nonumber \\
	&= 2 \exp \bigg ( - \frac{\epsilon^2}{(8 k^2 + 2 k \epsilon) \cdot n} \cdot r \cdot \mu \bigg ) \nonumber \\
	&= 2 \exp ((l + 1) \log n + \log k)  \nonumber\\
	&= \frac{1}{k \cdot n^{l + 1}} \label{eqn:greedyprob}
\end{align}

Observe that, given $\mathcal{G}$, $k$, and $A_C$ \emph{MCGreedy} runs in $k$ iterations, each of which estimates the expected prevention f at most $n$ node sets with sizes no more than $k$. Therefore, the total number of node sets inspected by \emph{MCGreedy} is at most $k n$. By Equation \ref{eqn:greedyprob} and the union bound, with at least $1 - n^{-l}$ probability, we have

\begin{equation}
\label{eqn:greedyone}
| \xi(S') - \mathbb{E}[\pi(S')] | \le \frac{\epsilon}{2k} OPT_L
\end{equation}

\noindent for all those $k n$ node sets $S'$ simultaneously. In what follows, we analyze the accuracy of \emph{MCGreedy}'s output, under the assumption that for any node set $S'$ considered by \emph{MCGreedy}, it obtains a sample of $\xi(S')$ that satisfies Equation \ref{eqn:greedyprob}. For convenience, we abuse notation and use $\xi(S')$ to denote the aforementioned sample.

Let $S_0 = \emptyset$, and $S_i \quad (i \in [1,k])$ be the node set selected by \emph{MCGreedy} in the $i$-th iteration. We define $x_i = OPT_L - \pi(S_i)$, and $y_i(v) = \pi(S_{i-1} \cup \{ v \} ) - \pi(s_{i-1}))$ for any node $v$. Let $v_i$ be the node that maximizes $y_i(v_i)$. Then, $y_i(v_i) \ge x_{i-1} / k$ must hold; otherwise, for any size-$k$ node set $S$, we have

\begin{align*}
\pi(S) & \le \pi(S_{i-1}) + \pi(S \setminus S_{i-1}) \\
	& \le \pi(S_{i-1}) + k \cdot y_i(v_i) \\
	& < \pi(S_{i-1} + x_{i-1}) = OPT_L
\end{align*}

\noindent which contradicts the definition of $OPT_L$.

Recall that, in each iteration of \emph{MCGreedy}, it add into $S_{i-1}$ the node $v$ that leads to the largest $\xi(S_{i-1} \cup \{ v \})$. Therefore,

\begin{equation}
\label{eqn:greedytwo}
\xi(S_i) - \xi(S_{i-1}) \ge \xi(S_i \cup \{ v \}) - \xi(S_{i-1})
\end{equation}

Combining Equations \ref{eqn:greedyone} and \ref{eqn:greedytwo}, we have

\begin{align}
x_{i-1} & - x_i = \pi(S_i) - \pi(S_{i-1}) \nonumber \\
	& \ge \xi(S_i) - \frac{\epsilon}{2 k} OPT_L - \xi(S_{i-1}) + \big ( \xi(S_{i-1}) - \pi(S_{i-1}) \big ) \nonumber \\
	& \ge \xi(S_{i-1} \cup \{ v_i \}) - \xi(S_{i-1}) - \frac{\epsilon}{2 k} OPT_L + \big ( \xi(S_{i-1}) - \pi(S_{i-1}) \big ) \nonumber \\
	& \ge \pi \big ( S_{i-1} \cup \{ v_i \} \big ) - \pi(S_{i-1}) - \frac{\epsilon}{k} OPT_L \nonumber \\
	& \ge \frac{1}{k} x_{i-1} - \frac{\epsilon}{k} OPT_L \label{eqn:greedythree}
\end{align}

\noindent Equation \ref{eqn:greedythree} leads to

\begin{align*}
x_k & \le \big ( 1 - \frac{1}{k} \big ) \cdot x_{k-1} + \frac{\epsilon}{k} OPT_L \nonumber \\
	& \le \big ( 1 - \frac{1}{k} \big )^2 \cdot x_{k-2} + \big ( 1 + \big ( 1 - \frac{1}{k} \big ) \big ) \cdot \frac{\epsilon}{k} OPT_L \\
	& \le \big ( 1 - \frac{1}{k} \big )^k \cdot x_0 + \sum_{i = 0}^{k - 1} \big ( \big ( 1 - \frac{1}{k} \big )^i \cdot \frac{\epsilon}{k} OPT_L \big ) \\
	& = \big ( 1 - \frac{1}{k} \big )^k \cdot OPT_L + \big ( 1 - \big ( 1 - \frac{1}{k} \big )^k \big ) \cdot \epsilon \cdot OPT_L \\
	& \le \frac{1}{e} \cdot OPT_L - \big ( 1 - \frac{1}{e} \big ) \cdot \epsilon \cdot OPT_L
\end{align*}

\noindent Therefore,

\begin{align*}
\pi(S_k) & = OPT_L - x_k \\
	& \le (1 - 1/e) \cdot (1 - \epsilon) \cdot OPT_L \\
	& \le (1 - 1/e - \epsilon) \cdot OPT_L
\end{align*}

\noindent Thus, the lemma is proved.
\end{proof}

\vspace{-10pt}

Assume that we know $OPT_L$ in advance and set $r$ to the smallest value satisfying the above inequality, in \emph{MCGreedy}'s favour. In that case, the time complexity of \emph{MCGreedy} is $O(k^3 l m n^2 \epsilon^{-2} \log n / OPT_L)$. Towards comparing \emph{MCGreedy} to \emph{RPS}, we show the following upper bound on the value of $\gamma$.

\begin{clm}
\label{clm:gamma}
$\gamma \le \frac{n}{n+1}$
\end{clm}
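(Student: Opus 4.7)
The plan is to rewrite the claim in a cleaner algebraic form and then exploit the self-containment structure of RRC sets to establish the resulting inequality.

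First, I would expand the definitions. Since $EPT = \mathbb{E}[\omega(M_R)] + \mathbb{E}[\omega_\pi(R)]$ and $\gamma = \mathbb{E}[\omega(M_R)]/EPT$, elementary algebra shows that $\gamma \le n/(n+1)$ is equivalent to
$$\mathbb{E}[\omega_\pi(R)] \ge \frac{1}{n}\,\mathbb{E}[\omega(M_R)].$$
So the whole task reduces to proving this one inequality between the two pieces that make up $EPT$.

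To lower bound $\mathbb{E}[\omega_\pi(R)]$, I would use the following self-containment observation: for any $g \sim \mathcal{G}$ and any $v \in I_g(v_c)$, the node $v$ cannot be cut off from itself (the cutoff condition $|SP_g(v_c,v_b)| < |SP_G(v,v_b)|$ fails on the trivial length-$0$ path at $v_b = v$), hence $v \in R(v,g)$ always. Therefore $\omega_\pi(R(v,g)) \ge \mathrm{indeg}_G(v)$ whenever $v \in I_g(v_c)$. Taking expectation over the uniformly random source $v \in V$ gives
$$\mathbb{E}[\omega_\pi(R)] \ge \frac{1}{n}\sum_{u \in V} \mathrm{indeg}_G(u)\cdot \Pr[u \in I_g(v_c)],$$
while linearity of expectation immediately yields
$$\mathbb{E}[\omega(M_R)] = \sum_{u \in V} \mathrm{outdeg}_G(u)\cdot \Pr[u \in I_g(v_c)].$$

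The main obstacle is therefore the aggregate edge inequality
$$\sum_{(x,y) \in E} \Pr[y \in I_g(v_c)] \;\ge\; \sum_{(x,y) \in E} \Pr[x \in I_g(v_c)],$$
obtained by re-indexing both sums by edges of $G$. I would attack this using the live-edge interpretation of the IC model together with a coupling: for each edge $(x,y)$, in any realization in which $x \in I_g(v_c)$ one can extend the activating path at $x$ by the (possibly non-live) edge to $y$, and every node of $I_g(v_c) \setminus \{v_c\}$ is witnessed by an incoming live edge from another node of $I_g(v_c)$, so the tree of live in-edges to the influence set can be used to charge each outgoing-edge occurrence to a corresponding incoming-edge occurrence. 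The delicate point, which I expect to be the hardest, is the boundary contribution from $v_c$ itself (always in $I_g(v_c)$ with possibly high outdegree and zero indegree); careful accounting of this term is what I expect produces the exact ratio $n/(n+1)$ rather than a looser bound.
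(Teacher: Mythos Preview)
Your reduction to $\mathbb{E}[\omega_\pi(R)] \ge \frac{1}{n}\mathbb{E}[\omega(M_R)]$ and the self-containment observation are correct and line up with the paper's approach: the paper routes through Lemma~\ref{lem:EPT} to write $\mathbb{E}[\omega_\pi(R)] = \frac{m}{n}\,\mathbb{E}[\pi(\{v^*\})]$ with $v^*$ drawn proportional to in-degree, and then uses exactly the same idea (that $v^*$ saves itself whenever $v^* \in I_g(v_c)$). Unwinding that gives precisely your lower bound $\mathbb{E}[\omega_\pi(R)] \ge \frac{1}{n}\sum_u \mathrm{indeg}_G(u)\Pr[u\in I_g(v_c)]$.

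The genuine gap is in your final step. The aggregate edge inequality you aim for,
\[
\sum_{(x,y)\in E} \Pr[y \in I_g(v_c)] \ \ge\ \sum_{(x,y)\in E} \Pr[x \in I_g(v_c)],
\]
is \emph{false} in general, and no coupling or charging scheme will repair it. Take $G$ to be a star with center $v_c$, $n-1$ leaves, and all edges $v_c\to\text{leaf}$ with activation probability $p<1$: the right-hand side equals $n-1$ (since $v_c$ is always in $I_g(v_c)$) while the left-hand side equals $(n-1)p$. In fact one computes $\gamma = n/(n+p)$ here, so the shortfall is not a boundary correction at $v_c$ that you can absorb --- the inequality fails by a constant factor. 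The paper does \emph{not} attempt this edge inequality; it instead uses the much cruder step $\Pr[v^*\in M_R] = \sum_{v\in M_R}\mathrm{indeg}(v)/m \ge |M_R|/m$ (implicitly assuming every node has in-degree at least one) and asserts $\mathbb{E}[\pi(\{v^*\})]/\mathbb{E}[\omega(M_R)] \ge 1/m$ directly from there. Your more ambitious route via the full edge comparison cannot succeed without additional structural assumptions on $G$.
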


\begin{proof}
Let $M_R$ be the random instance of $R_g(A_C)$ used to compute $R$. Furthermore, let us assume that $|M_R| \ge 2$ so that at least one non-seed node is influenced by campaign $C$. Then, from Lemma \ref{lem:EPT} and the definition of $\gamma$ we have

\begin{equation*}
\frac{1}{\gamma} = 1 + \frac{m}{n} \cdot \frac{\mathbb{E}[\pi(\{ v^* \})]}{\mathbb{E}[\omega(M_R)]}
\end{equation*}

Then, observe that the expected number of nodes saved by $v^*$ is at least $\Pr[v^* \in M_R]$. That is, if $v^* \in M_R$ then campaign $L$ can save at least one node, namely $v^*$ itself. Giving

\vspace{-10pt}

\begin{align*}
\Pr[v^* \in M_R] & = \sum_{v \in M_R} \frac{deg(v)}{m} \ge \sum_{v \in M_R} \frac{1}{m} = \frac{|M_R|}{m}
\end{align*}

Therefore, $\frac{\mathbb{E}[\pi(\{ v^* \})]}{\mathbb{E}[\omega(M_R)]} \ge \frac{1}{m}$. Then, we have $\frac{1}{\gamma} \ge 1 + \frac{m}{n} \cdot \frac{1}{m} = 1 + \frac{1}{n}$. Thus, we get $\gamma \le \frac{n}{n+1}$ proving the claim.
\end{proof}

Claim \ref{clm:gamma} shows that the expected runtime for \emph{RPS} is at most $O((k+l) m n \epsilon^{-2} \log n)$. As a consequence, given that $OPT_L \le n$, the expected runtime of \emph{MCGreedy} is always more than the expected runtime of \emph{RPS}. In practice, we observe that for typical social networks $OPT_L \ll n$ and $\frac{1}{1 - \gamma} \ll n+1$ resulting in superior scalability of \emph{RPS} compared to \emph{MCGreedy}. Table~\ref{tbl:gamma} confirms that $\frac{1}{1 - \gamma} \ll n+1$ on our small datasets.

\vspace{-5pt}

\paragraph*{A Lower Bound for EIL} In the theorem below, we provide a lower bound on the time it takes for any algorithm to compute a $\beta$-approximation for the EIL problem given uniform node sampling and an adjacency list representation. Thus, we rule out the possibility of a sublinear time algorithm for the EIL problem for an arbitrary $\beta$.

\begin{thm}
\label{thm:lowerbound}
Let $0 < \epsilon < \frac{1}{10 e}$, $\beta \le 1$ be given. Any randomized algorithm for EIL that returns a set of seed nodes with approximation ratio $\beta$, with probability at least $1 - \frac{1}{e} - \epsilon$, must have a runtime of at least $\frac{\beta (m + n)}{24 \min \{ k, 1/\beta \}}$.
\end{thm}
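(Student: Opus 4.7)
The plan is to use Yao's minimax principle to reduce this randomized lower bound to a deterministic lower bound against a carefully chosen distribution of hard EIL instances. The high-level strategy is to plant a small set of ``saviour'' nodes in the graph whose discovery is essential for any non-trivial approximation, and then argue that any algorithm making too few queries cannot find these saviours with sufficient probability under a uniform random relabeling of the vertices.

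First I would construct the input distribution. Let $s = \min\{k, \lceil 1/\beta \rceil\}$. Build a graph on $n$ vertices: an adversary $v_c$ has edges of probability $1$ to all other nodes, so $I(v_c)$ deterministically spans everything; the remaining $n-1$ vertices are partitioned into $s$ disjoint ``prevention gadgets,'' each of size $\Theta((m+n)/s)$ and structured so that seeding its centre into $A_L$ saves the entire gadget (using the high-effectiveness property $p_L = 1$). The $s$ centres together give $OPT_L = \Theta(m+n)$, whereas any size-$k$ seed set that misses all of the centres achieves prevention at most a constant plus the $k$ centres it could contribute --- by the gadget design, an algorithm that identifies only $t < s$ centres is capped at prevention $\Theta(t(m+n)/s)$, which falls below $\beta \cdot OPT_L$ unless $t = \Omega(\beta s)$. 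Apply a uniformly random permutation of the non-adversary labels to define the hard distribution.

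Second, I would bound the probability that a deterministic algorithm examining at most $q$ edges identifies $\Omega(\beta s)$ centres. Under the random relabeling, the $q$ queries touch at most $q+1$ distinct vertices, and each of the $s$ centres is visited with probability at most $(q+1)/(n-1)$. By linearity of expectation the expected number discovered is at most $s(q+1)/(n-1)$, and a standard Markov/Chernoff argument shows that discovering the required $\Omega(\beta s)$ centres with probability exceeding $1 - 1/e - \epsilon$ forces $q = \Omega(\beta(n-1))$; accounting for both node and edge traversals gives $q = \Omega(\beta(m+n)/s) = \Omega(\beta(m+n)/\min\{k,1/\beta\})$. The explicit constant $1/24$ in the statement follows from plugging the specific success-probability threshold $1 - 1/e - \epsilon$ into Markov's inequality. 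Yao's principle then converts this into the claimed worst-case lower bound for any randomized algorithm.

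The main obstacle is designing gadgets that are simultaneously (i) symmetric enough to be information-theoretically indistinguishable from generic neighbourhoods after few queries, (ii) sized so the edge and node budgets $m$ and $n$ are respected, (iii) structured so missing a centre truly forfeits $\Theta((m+n)/s)$ in prevention (not just a constant), and (iv) compatible with the MCIC diffusion rules and high-effectiveness property. A clean way is to make every gadget a uniform out-star from its centre with identical edge weights, so that any query reveals only whether the probed node happens to be a centre; this makes the random relabeling argument tight and yields the bound claimed.
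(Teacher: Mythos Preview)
Your high-level strategy---Yao's minimax principle, a planted structure whose discovery is necessary for a $\beta$-approximation, random relabeling---is precisely the paper's. The gap is in the gadget design. With $v_c$ adjacent to every vertex and each gadget an out-star, every leaf has its centre among its in-neighbours; in the adjacency-list model the algorithm can probe \emph{any} leaf's in-list (two entries: $v_c$ and the centre) and locate a centre in $O(1)$ edge traversals, which destroys the lower bound. Your assertion that ``any query reveals only whether the probed node happens to be a centre'' is false unless you restrict to out-adjacency only, which the theorem does not assume. The paper avoids this by making the valuable structure \emph{rare rather than hidden}: it builds $k$ star components each of size $2/\beta$ (so $2k/\beta$ vertices total) and fills the remaining $n - 2k/\beta$ vertices with uninformative connected pairs, each pair containing one adversary node. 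A uniformly random probe lands in a pair with high probability, and traversing a pair reveals nothing about any star; only by sampling one of the $\Theta(k/\beta)$ star vertices does the algorithm reach a hub. A two-case analysis ($k > 12/\beta$ via Chernoff, $k \le 12/\beta$ via a direct hitting-probability bound) then produces the $1/\min\{k,1/\beta\}$ factor.

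Two further issues. First, your Markov step actually yields $q = \Omega(\beta n)$; the clause ``accounting for both node and edge traversals gives $q = \Omega(\beta(m+n)/s)$'' inserts the $1/s$ without justification---in your construction $m = \Theta(n)$ and no step in the argument divides by $s$. Second, you do not say how to realise an arbitrary edge budget $m$; the paper handles dense instances by overlaying a $d$-regular graph with exponentially small edge weights, which multiplies the cost of each probe by $\Theta(d)$ without materially changing the prevention of any seed set, lifting the bound from $\Omega(n)$-type to $\Omega(m+n)$-type.
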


\begin{proof}
We make use of Yao's Minmax Lemma for the performance of Las Vegas (LV) randomized algorithms on a family of inputs \cite{yao1977}. Precisely, the lemma states that the least expected cost of deterministic LV algorithms on a distribution over a family of inputs lower bounds the expected cost of the optimal randomized LV algorithm over that family of inputs. We build such an input family of lower bound graphs via the use of a novel gadget.

Throughout the proof we assume all edge probabilities for both campaigns are $1$. Note, for a graph consisting of $p = n/2$ connected pairs for which each pair contains a node $u \in A_C$, an algorithm must return at least $\beta k$ nodes to obtain an approximation ratio of $\beta$. Doing so in at most $\beta^2 n / 2$ queries requires that $2 \beta k \le \beta^2 n$, which implies $2 k / \beta \le n$. We can therefore assume $2 k / \beta \le n$.

Define the cost of the algorithm as $0$ if it returns a set of seed nodes with approximation ratio better than $\beta$ and $1$ otherwise. As the cost of an algorithm equals its probability of failure, we can think of it as a LV algorithm. Assume for notational simplicity that $\beta = 1 / T$ where $T$ is an integer. We will build a family of lower bound graphs, one for each value of $n$ (beginning from $n = 1 + T$); each graph will have $m \le n$, so it will suffice to demonstrate a lower bound of $\frac{n}{12 T \min \{ k, T \}}$.

We now consider the behaviour of a deterministic algorithm $A$ with respect to the uniform distribution on the constructed family of inputs. For a given value $T$ the graph would be made from $k$ components of size $2 T$ and $p = \frac{n - 2 k T}{2}$ connected pairs (recall that $2 k T = 2 k / \beta \le n$). Specifically, the $k$ components of size $2 T$ are structured as follows: for each component there is a \emph{hub} node $v_h$ that is connected to $2 T - 2$ leaf nodes and a node $u \in A_C$. Furthermore, each of the $p$ pairs also contains one node $u \in A_C$. If algorithm $A$ returns seed nodes from $l$ of the $k$ components of size $2 T$, it achieves a total prevention of $l \cdot (2 T - 1) + (k - l)$ since choosing either the hub node $v_h$ or any leaf node will result in saving all $2 T - 1$ eligible nodes in the component. Thus, to attain an approximation factor better than $\beta = \frac{1}{T}$, we must have $l \cdot (2 T - 1) + (k - l) \ge \frac{1}{T} \cdot k \cdot (2 T - 1)$, which implies $l \ge \frac{k}{2 T}$ for any $T > 1$.

Suppose $k > 12 T$. The condition $l \ge \frac{k}{2 T}$ implies that at least $\frac{k}{2 T}$ of the large components must be queried by the algorithm, where each random query has probability $\frac{k \cdot (2 T - 1)}{n-(p+k)} \ge \frac{k T}{n}$ of hitting a large component. If the algorithm makes fewer than $\frac{n}{6 T^2}$ queries, then the expected number of components hit is $\frac{n}{6 T^2} \cdot \frac{k T}{n} = \frac{k}{6 T}$. Chernoff bounds then imply that the probability of hitting more than $\frac{k}{2 T}$ components is no more than $e^{- \frac{k}{6 T} \cdot 2 / 3} \le \frac{1}{e^{4 / 3}} < 1 - \frac{1}{e} - \epsilon$, a contradiction.

If $k \le 12 T$ then we need that $l \ge 1$, which occurs only if the algorithm queries at least one of the $k \cdot (2 T - 1)$ vertices in the large components. With $\frac{n}{k \cdot (2 T - 1)}$ queries, for $n$ large enough, this happens with probability less than $\frac{1}{e} - \epsilon$, a contradiction.

We conclude that, in all cases, at least $\frac{n}{ 24 T \min \{ k, T \}}$ queries are necessary to obtain an approximation factor better than $\beta = \frac{1}{T}$ with probability at least $1 - \frac{1}{e} - \epsilon$, as required. By Yao's Minmax Principle this gives a lower bound of $\Omega(\frac{n}{24 T \min \{ k, T \}})$ on the expected performance of any randomized algorithm, on at least one of the inputs.

Finally, the construction can be modified to apply to non-sparse networks. For any $d \le n$, we can augment our graph by overlaying a $d$-regular graph with exponentially small weight on each edge. This does not significantly impact the prevention of any set, but increases the time to decide if a node is in a large component by a factor of $O(d)$ (as edges must be traversed until one with non-exponentially-small weight is found). Thus, for each $d \le n$, we have a lower bound of $\Omega(\frac{n d}{24 T \min \{ k, T \}})$ on the expected performance of $A$ on a distribution of networks with $m = n d$ edges.
\end{proof}

\section{Generalization to the Multi-Campaign Triggering Model}

The \emph{triggering model} is an influence propagation model that generalizes the IC and LT models. It assumes that each node $v$ is associated with a triggering distribution $\mathcal{T}(v)$ over the power set of $v$'s incoming neighbors. An influence propagation process under the triggering model works as follows: (1) for each node $v$, take a sample from $\mathcal{T}(v)$ and define the sample as the triggering set of $v$, then (2) at timestep 1 activate the seed set $S$, and (3) in subsequent timesteps, if an active node appears in the triggering set of $v$, then $v$ becomes active. The propagation terminates when no more nodes can be activated.

We can define a \emph{multi-campaign} version of the triggering model (MCT) that generalizes the MCIC model by associating each node with a \emph{campaign-specific} triggering distribution $\mathcal{T}_Z(v)$ where $Z \in \{ C, L \}$. The propagation process under MCT proceeds exactly as under MCIC with the exception that activation between rounds (step 2) is determined by $\mathcal{T}_C(v)$ and $\mathcal{T}_L(v)$. To the best of our knowledge, we are the first to formally define a multi-campaign version of the triggering model.

The key aspect of the MCIC model that enabled the existence of obstructed nodes is that the two campaigns are allowed to propagate along \emph{different sets of edges} in a possible world $X$. This is exactly the intuition captured by the example in Figure \ref{fig:blocked} and is caused by $L$ and $C$ having separate propagation probabilities in $\mathcal{G}$. As a result, the campaigns traverse potentially unique graphs in $X$ and results in the possibility of the obstruction of $L$ by $C$. This observation holds under the more general setting of MCT due to the campaign-specific triggering sets and so the obstruction phenomenon exists under the MCT model.

\begin{rem}
Propagation under the multi-campaign triggering model requires the notion of obstruction to correctly characterize the conditions required to save an arbitrary node.
\end{rem}

Following the observations made in \cite{Tang2014}, our solutions can be easily extended to operate under the multi-campaign triggering (MCT) model with a modified high effectiveness property. Under MCT, the high effectiveness property asserts that $\mathcal{T}_L(v) = in(v)$ where $in(v)$ is the set of in-neighbours of $v$ in $G$. Observe that Algorithm \ref{alg:nodeselection} does not rely on anything specific to the MCIC model, except a subroutine to generate random RRC sets. Thus, we can revise the definition of RRC sets to accommodate the MCT model.

Suppose that we generate a possible world $g$ from $G$ for $C$ by sampling a node set $T$ for each node $v$ from its triggering distribution $\mathcal{T}_C(v)$ and removing any outgoing edge of $v$ that does not point to a node in $T$. Let $\mathcal{G}$ be the distribution of $g$ induced by the random choices of triggering sets. We refer to $\mathcal{G}$ as the \emph{triggering graph distribution} for campaign $C$ in $G$. Similar to the MCIC setting, a possible world for $L$ under the high effectiveness property is the whole graph $G$. For any given node $v$ and a possible world $g$ for $C$ sampled from $\mathcal{G}$, we define the reverse reachable without cutoff (RRC) set for $v$ in $g$ as the set of nodes that can save $v$ in $g$. In addition, we define a random RRC set as one that is generated on an instance of $g$ randomly sampled from $\mathcal{G}$, for a node selected from $g$ uniformly at random. These random RRC sets are constructed in the same fashion as Algorithm \ref{alg:genRRC} with the modification that the forward randomized BFS samples triggering sets to determine which edges to traverse. By incorporating such an updated forward BFS into Algorithm \ref{alg:nodeselection}, our solution extends to the multi-campaign triggering model. 

Random RRC sets, as defined above, are constructed by a randomized BFS as follows. Let $v$ be a randomly selected node. Given $v$, we first take a sample $T$ from $v$’s triggering distribution $\mathcal{T}_C(v)$, and then put all nodes in $T$ into a queue. Then, we iteratively extract the node at the top of the queue; for each node $u$ extracted, we sample a set $T'$ from $u$'s triggering distribution, and we insert any unvisited node in $T'$ into the queue. When the queue becomes empty, we terminate the process, and form a random RRC set with the nodes visited during the process. The expected cost of the whole process is $O(EPT)$, where $EPT$ denotes the expected number of edges in $G$ that point to the nodes in a random RRC set. This expected time complexity is the same as Algorithm \ref{alg:genRRC} for generating random RRC sets under the MCIC model.


Our next step is to show that the revised solution retains the performance guarantees of \emph{RPS}. We first present an extended version of Lemma \ref{lem:equiv} for the MCT model. (The proof of the lemma is almost identical to that of Lemma \ref{lem:equiv}.)

\begin{lem}
\label{lem:equiv2}
Let $S$ be a fixed set of nodes, $v$ be a fixed node, and $\mathcal{G}$ be the triggering graph distribution for $G$. Suppose that we generate an RRC set $R$ for $v$ on a graph $g$ sampled from $\mathcal{G}$. Let $\rho_1$ be the probability that $S$ overlaps with $R$, and $\rho_2$ be the probability that $S$ (as a seed set for campaign $L$) can save $v$ in an prevention process on $G$ under the MCT model. Then, $\rho_1 = \rho_2$.
\end{lem}

\begin{proof}
Let $S$ be a fixed set of nodes, and $v$ be a fixed node. Suppose that we generate an RRC set $R$ for $v$ on a graph $g \sim \mathcal{G}$. Let $\rho_1$ be the probability that $S$ overlaps with $R$ and let $\rho_2$ be the probability that $S$, when used as a seed set, can save $v$ in a prevention process on $\mathcal{G}$. By the definition of RRC sets under the MCT model, if $v \in R_g(A_C)$ then $\rho_1$ equals the probability that a node $u \in S$ is a saviour for $v$. That is, $\rho_1$ equals the probability that $G$ contains a directed path from $u \in S$ to $v$ and $u \not\in \tau(v)$ and $0$ if $v \not\in R_g(A_C)$. Meanwhile, if $v \in R_g(A_C)$ then $\rho_2$ equals the probability that a node $u \in S$ can save $v$ (i.e. $v \in ( cl_g(u) \setminus obs_g(u) )$) and $0$ if $v \not\in R_g(A_C)$. It follows that $\rho_1 = \rho_2$ due to the equivalence between the set of saviours for $v$ and the ability to save $v$.
\end{proof}

We note that all of the theoretical analysis of \emph{RPS} is based on the Chernoff bounds and Lemma \ref{lem:equiv}, without relying on any other results specific to the MCIC model. Therefore, once we establish an equivalent to Lemma \ref{lem:equiv} (Lemma \ref{lem:equiv2}), it is straightforward to combine it with the Chernoff bounds to show that, under the MCT model, \emph{RPS} provides the same performance guarantees as in the case of the MCIC model. Thus, we have the following theorem:

\begin{thm}
Under MCT, \emph{RPS} runs in $O((k+l)(m+n)(1 / (1 - \gamma)) \log n / \epsilon^2)$ expected time, and returns a $(1 - 1/e - \epsilon)$-approximate solution with at least $1 - 3n^{-l}$ probability.
\end{thm}

\section{Experiments}

The focus of our experiments is \emph{algorithm efficiency} measured in runtime where our goal is to demonstrate the superior performance of \emph{RPS} compared to \emph{MCGreedy}. We observe that \emph{RPS} provides a significant improvement of several orders of magnitude over \emph{MCGreedy}. Further, we confirm that $\frac{1}{1 - \gamma} \ll n+1$ on our small datasets which is strong evidence that \emph{RPS} will outperform \emph{MCGreedy} on typical social networks. Finally, we observe that the vast majority of the computation time is spent on generating the RRC sets for $\mathcal{R}$. A detailed experimental analysis and discussion is presented in this section.

\begin{table}
\caption{Dataset Statistics}
\label{tbl:datasets}
\centering
\begin{tabular}{lrrr} \hline 
Name&           $|V|$& 	        $|E|$&          Average degree \\ \hline \hline
nethept&        15,229& 	    62,752&     4.1\\
word\_assoc&    10,617& 	    72,172&     6.8\\ \hline
dblp-2010&      326,186& 	    1,615,400&  6.1\\
cnr-2000&       325,557& 	    3,216,152&  9.9\\ \hline
ljournal-2008&  5,363,260& 	    79,023,142& 28.5\\ \hline
\end{tabular}
\end{table}

\begin{table}
\caption{$\frac{1}{1 - \gamma}$ values for small datasets.}
\label{tbl:gamma}
\centering
\begin{tabular}{l|cccc} \hline 
\multicolumn{1}{c|}{}&
\multicolumn{2}{c}{word\_assoc}&   
\multicolumn{2}{c}{nethept} \\
k&      top1&   top5&   top1&   top5 \\ \hline
1&      23.4471&      25.9804& 	    48.3194&      48.619 \\
10&     24.8521&      26.6518& 	    60.5&      43.1875 \\
20&     24.7509&      25.2607& 	    57.0111&      61.4167 \\ \hline
\emph{max}&    10,618& 	10,618&  15,230&      15,230 \\ \hline
\end{tabular}
\end{table}

In this section, we present our experimental results. All of our algorithms are implemented in C++ (available at \url{https://github.com/stamps}) and tested on a machine with dual 6 core 2.10GHz Intel Xeon CPUs, 128GB RAM and running Ubuntu 14.04.2.

\vspace{5pt}

\noindent \textbf{Datasets. } The network statistics for all of the datasets we consider are shown in Table~\ref{tbl:datasets}. We obtained the datasets from Laboratory of Web Algorithmics.\footnote{\url{http://law.di.unimi.it/datasets.php}} We divide the datasets by horizontal lines according to their size, small (S), medium (M), and large (L).

\vspace{5pt}

\noindent \textbf{Propagation Model. } We consider the MCIC model (see Section 2.1) of Budak et al. We set the propagation probability of each edge $e$ as follows: we first identify the node $v$ that $e$ points to, and then set $p(e) = 1/i$, where $i$ denotes the in-degree of $v$. This setting of $p(e)$ is widely adopted in prior work \cite{Chen2010, Chen2009EIM, jung2012irie, wang2012scalable}.

\vspace{5pt}

\noindent \textbf{Parameters. } Unless otherwise specified, we set $\epsilon = 0.1$ in our experiments. We set $l$ in a way that ensures a success probability of $1 - 1/n$. For \emph{MCGreedy}, we set the number of Monte Carlo steps to $r = 10000$, following the standard practice in the literature. Note that this choice of $r$ is to the advantage of \emph{MCGreedy} because the value of $r$ required to achieve the same theoretical guarantees as \emph{RPS} in our experiments is always much larger than $10000$. In each experiment, we repeat each run five times and report the average result.

We are interested in simulating the misinformation prevention process when the bad campaign $C$ has a sizable influence on the network to best demonstrate how the techniques could be used in real world settings. That is, we believe the scenario in which we are attempting to prevent the spread of misinformation when the bad campaign has the ability to influence a large fraction of the network to be more relevant than when only a very small number of users would adopt the bad campaign. Towards this end, we first compute the \emph{top-k} influential vertices for each network and then randomly select the seed set $A_C$ from the \emph{top-1} and \emph{top-5} vertices for each experiment. This process ensures the misinformation has a large potential influence in the network.

The focus of our experiments is \emph{algorithm efficiency} measured in runtime where our goal is to demonstrate the superior performance of \emph{RPS} compared to \emph{MCGreedy}. Meanwhile, we observed that the \emph{algorithm accuracy} (measured in percentage of nodes saved) of \emph{RPS} matches \emph{MCGreedy} very closely. We observe that, consistent with the results reported in \cite{budak2011limiting}, \emph{RPS} quickly approaches a maximal expected prevention value as $k$ increases across all datasets. This is natural since both \emph{RPS} and \emph{MCGreedy} are maximizing a submodular objective function in a greedy fashion. The novelty of \emph{RPS} addresses the scalability hurdle in a similar sense to Borgs et. al.\ \cite{borgs2012} in relation to Kempe et. al.\ \cite{kempe2003}.  For a detailed comparison of the accuracy of \emph{MCGreedy} compared to a number of natural heuristics we refer the interested reader to \cite{budak2011limiting}.

\vspace{5pt}

\noindent \textbf{Plots.} First, we plot the runtimes of \emph{MCGreedy} and \emph{RPS} for a single seed in Figure \ref{fig:greedy_comparison} and observe that \emph{RPS} provides a significant improvement of several orders of magnitude over \emph{MCGreedy}. Note, we only compare \emph{RPS} to \emph{MCGreedy} on the smallest networks due to the substantial runtime required for \emph{MCGreedy}. Furthermore, for similar scaling issues of \emph{MCGreedy}, we restrict our comparison to $k = 1$. However, since both approaches scale linearly with $k$ we can conclude that \emph{RPS} offers a tremendous runtime improvement over the approach of \cite{budak2011limiting}.

Next, we show the total runtimes (Figures \ref{fig:small_time}, \ref{fig:med_time}) and computation breakdowns (Figures \ref{fig:small_time_breakdown}, \ref{fig:ljournal_time_breakdown}, \ref{fig:med_time_breakdown}) for each dataset. We observe that the vast majority of the computation time is spent on generating the RRC sets for $\mathcal{R}$. Furthermore, the amount of time spent on computing a lower bound increases across all datasets though remains a small fraction of the overall runtime. As expected, the time spent refining the lower bound estimate remains a very small fraction of total computation time due to the small number of iterations of the algorithm that improves the lower bound estimate and only takes up a relatively large fraction of the computation time on the cnr-2000 top5 dataset (Figure \ref{fig:med_time_breakdown}c). The density of the cnr-2000 network leads to larger RRC sets that results in a larger fraction of time spent on computing and refining a lower bound. Finally, we plot the memory consumption statistics in Figure \ref{fig:small_med_mem}.

\vspace{5pt}

\noindent \textbf{Running-time Results.} We compare the runtime trends of our results for the EIL problem to those of \cite{Tang2014} for the IM problem. Tang et al.\ report that, when $k$ increases, the runtime of their approach (\emph{TIM}) tends to decrease before eventually increasing. They explain this by considering the breakdown of the computation times required by each algorithm in \emph{TIM}. They observe that the computation time is mainly incurred by their analog to Algorithm 1 (the node selection phase) which is primarily determined by the number $\theta$ of RR sets that need to be generated. They have $\theta = \lambda/KPT^{+}$, where $\lambda$ is analogous to ours, and $KPT^{+}$ is a lower-bound on the optimal influence of a size-$k$ node set. In both the IC and MCIC models, the analogs of $\lambda$ and $KPT^{+}$ increase with $k$, and it happens that for the IM problem, Tang et al.\ observe that the increase of $KPT^{+}$ is more pronounced than that of $\lambda$ for smaller values of $k$, which leads to the decrease in \emph{TIM}’s runtime. 

On the contrary, for the EIL problem, the increase of $KPT^{+}$ does not dominate to a point that the runtime of \emph{RPS} decreases as $k$ increases. Instead, we see a linear increase in runtime as $k$ increases for all the networks considered. To explain, consider how $KPT^{+}$ grows in each setting. In the MCIC model we see that $KPT^{+}$ rapidly approaches its maximal value which corresponds to the growth of $KPT^{+}$ plateauing much sooner. In contrast, in the IC model, the analogous $KPT^{+}$ value continues to grow at a significant rate for a wider range of $k$ values since the ceiling for the maximal influence is not tied to a second campaign, as it is in the MCIC model. As such, the influence estimates do not level off as quickly. This translates to the growth of $KPT^{+}$ outpacing the growth of $\lambda$.

\vspace{5pt}

\noindent \textbf{Memory Consumption. } Another set of experiments monitors the memory consumption required to store the RRC set structure $\mathcal{R}$. We observe that the size of $\mathcal{R}$ for the EIL problem is larger than that required by the IM problem. Using the ``hypergraph'' nomenclature due to Borgs et al.\ $\mathcal{R}$ is viewed as a hypergraph with each RRC set in $\mathcal{R}$ corresponding to a hyperedge. We observe that the hyperedges generated for the IM problem are non-empty in every iteration of the algorithm. Additionally, each hyperedge has relatively small size. The result is that the hypergraph generated for the IM problem is very dense, but each hyperedge is relatively ``light'' (i.e. it contains few nodes).

In contrast, in each iteration of \emph{RPS} we have a substantial probability to produce an empty RRC set, since we require that a randomly selected node is in the randomized BFS tree resulting from the influence propagation process initialized at $A_C$. These empty RRC sets are necessary for the computation of expected prevention to be accurate, but results in a hypergraph that differs significantly in structure from those of the IM problem.

In particular, since the \emph{generateRRC} algorithm is a deterministic BFS (with specialized stopping conditions to account for cutoff nodes) it reaches a much larger fraction of the network. Therefore, while there are far fewer non-empty hyperedges generated, they are much large in size: often on the order of half the network. Thus, the resulting hypergraph is sparse, but contains very ``heavy'' edges. These two opposing metrics, a dense hypergraph with ``light'' hyperedges versus a sparse hypergraph with ``heavy'' hyperedges, result in the latter requiring more memory to store. Despite a larger memory requirement compared to the single campaign setting we show that our approach has the ability to scale far beyond what was achieved by Budak et al.\ and provides orders of magnitude improvement for the runtime.

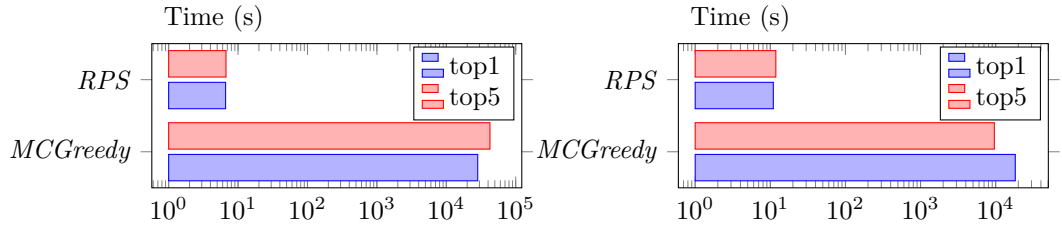
\begin{figure}
\centering
\begin{subfigure}{.5\linewidth}
\centering
\begin{tikzpicture}
\begin{semilogxaxis}[
    xbar, xmin=1, xmax=70000,
    width=6.5cm, height=3.5cm, enlarge y limits=0.5, enlarge x limits=0.05,
    xlabel={Time (s)},
    symbolic y coords={\emph{MCGreedy},\emph{RPS}},
    ytick=data,
    every axis x label/.style={at={(current axis.north west)},above right=0.5mm},
    nodes near coords align={horizontal}]
\addplot coordinates {(28366.7,\emph{MCGreedy}) (6.619217,\emph{RPS})};
\addplot coordinates {(42557.2,\emph{MCGreedy}) (6.70506,\emph{RPS})};
\legend{top1,top5}
\end{semilogxaxis}
\end{tikzpicture}
\end{subfigure}%
\begin{subfigure}{.5\linewidth}
\centering
\begin{tikzpicture}
\begin{semilogxaxis}[
    xbar, xmin=1, xmax=30000,
    width=6.5cm, height=3.5cm, enlarge y limits=0.5, enlarge x limits=0.05,
    xlabel={Time (s)},
    symbolic y coords={\emph{MCGreedy},\emph{RPS}},
    ytick=data,
    every axis x label/.style={at={(current axis.north west)},above right=0.5mm},
    nodes near coords align={horizontal}]
\addplot coordinates {(10.964486,\emph{RPS}) (18237.7,\emph{MCGreedy})};
\addplot coordinates {(11.855423,\emph{RPS}) (9594.78,\emph{MCGreedy})};
\legend{top1,top5}
\end{semilogxaxis}
\end{tikzpicture}
\end{subfigure}%
\caption{Runtimes comparison between \emph{RPS} and \emph{MCGreedy} for wordassociation-2011 (left) and nethept (right) datasets. \label{fig:greedy_comparison}}
\end{figure}

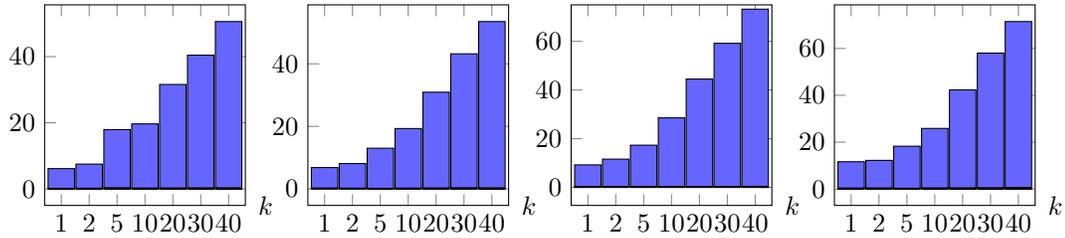
\begin{figure}
\centering
\begin{subfigure}{.25\linewidth}
\centering
\begin{tikzpicture}
	\begin{axis}[
	width=4.25cm, height=4.25cm,
	legend pos=north west,
	legend style={font=\tiny},
	ybar stacked, 
	xtick=data,
	xlabel=$k$,
	every axis x label/.style={at={(current axis.south east)},right=0.5mm},
	every axis y label/.style={at={(current axis.north west)},above right=0.5mm},
	symbolic x coords={1,2,5,10,20,30,40}
	]
	\addplot[fill=green] coordinates
		{(1,0.10154052) (2,0.11900546) (5,0.14432278) (10,0.11257432) (20,0.10763686) (30,0.11594178) (40,0.09383006)};
	\addplot[fill=red] coordinates
		{(1,0.10563084) (2,0.10978426) (5,0.17384058) (10,0.14736236) (20,0.16580838) (30,0.18340876) (40,0.21169554)};
	\addplot[fill=blue!60] coordinates
		{(1,5.957708) (2,7.298486) (5,17.62837) (10,19.41752) (20,31.310244) (30,40.138534) (40,50.281834)};
	\end{axis}
\end{tikzpicture}
\end{subfigure}%
\begin{subfigure}{.25\linewidth}
\centering
\begin{tikzpicture}
	\begin{axis}[
	width=4.25cm, height=4.25cm,
	legend pos=north west,
	legend style={font=\tiny},
	ybar stacked, 
	xtick=data,
	xlabel=$k$,
	every axis x label/.style={at={(current axis.south east)},right=0.5mm},
	every axis y label/.style={at={(current axis.north west)},above right=0.5mm},
	symbolic x coords={1,2,5,10,20,30,40}
	]
	\addplot[fill=green] coordinates
		{(1,0.13133926) (2,0.11201688) (5,0.06937584) (10,0.08819788) (20,0.09644416) (30,0.10176842) (40,0.07523506)};
	\addplot[fill=red] coordinates
		{(1,0.10896942) (2,0.10720152) (5,0.1206947) (10,0.1440813) (20,0.16294378) (30,0.19363872) (40,0.20372228)};
	\addplot[fill=blue!60] coordinates
		{(1,6.593928) (2,7.882254) (5,12.827928) (10,19.077142) (20,30.658168) (30,42.904254) (40,53.22744)};
	\end{axis}
\end{tikzpicture}
\end{subfigure}%
\begin{subfigure}{.25\linewidth}
\centering
\begin{tikzpicture}
	\begin{axis}[
	ymax=75,
	width=4.25cm, height=4.25cm,
	legend pos=north west,
	legend style={font=\tiny},
	ybar stacked, 
	xtick=data,
	xlabel=$k$,
	every axis x label/.style={at={(current axis.south east)},right=0.5mm},
	every axis y label/.style={at={(current axis.north west)},above right=0.5mm},
	symbolic x coords={1,2,5,10,20,30,40}
	]
	\addplot[fill=green] coordinates
		{(1,0.2496012) (2,0.210817) (5,0.1227278) (10,0.1561882) (20,0.11058208) (30,0.1246392) (40,0.1627258)};
	\addplot[fill=red] coordinates
		{(1,0.16798664) (2,0.1570053) (5,0.1668211) (10,0.20812752) (20,0.23561366) (30,0.2601144) (40,0.2962596)};
	\addplot[fill=blue!60] coordinates
		{(1,8.860476) (2,11.273142) (5,17.07482) (10,28.225528) (20,44.175866) (30,58.82455) (40,72.77688)};
	\end{axis}
\end{tikzpicture}
\end{subfigure}%
\begin{subfigure}{.25\linewidth}
\centering
\begin{tikzpicture}
	\begin{axis}[
	width=4.25cm, height=4.25cm,
	legend pos=north west,
	legend style={font=\tiny},
	ybar stacked, 
	xtick=data,
	xlabel=$k$,
	every axis x label/.style={at={(current axis.south east)},right=0.5mm},
	every axis y label/.style={at={(current axis.north west)},above right=0.5mm},
	symbolic x coords={1,2,5,10,20,30,40}
	]
	\addplot[fill=green] coordinates
		{(1,0.1702292) (2,0.1616758) (5,0.135761) (10,0.1820038) (20,0.1540508) (30,0.1857244) (40,0.2135116)};
	\addplot[fill=red] coordinates
		{(1,0.2013276) (2,0.2297169) (5,0.18481668) (10,0.2203396) (20,0.26765) (30,0.2808096) (40,0.3449596)};
	\addplot[fill=blue!60] coordinates
		{(1,11.277608) (2,11.840194) (5,17.953984) (10,25.46388) (20,41.905752) (30,57.560964) (40,70.91841)};
	\end{axis}
\end{tikzpicture}
\end{subfigure}%
\caption{Breakdown of computation time (s) for small datasets. Blue stack corresponds to Algorithm \ref{alg:nodeselection}, red to improving the lower bound estimation, and green (which is almost invisible) to computing the initial lower bound estimate. Listed left to right: word\_assoc top1, word\_assoc top5, nethept top1, nethept top5. \label{fig:small_time_breakdown}}
\end{figure}

\begin{figure}
\centering
\begin{subfigure}{.3\linewidth}
\centering
\begin{tikzpicture}
    \begin{axis}[
        width=4cm, height=4cm,
        xlabel=$k$,
        every axis x label/.style={at={(current axis.south east)},right=0.5mm},
        every axis y label/.style={at={(current axis.north west)},above right=0.5mm}
    ]
    \addplot[mark=*,blue] plot coordinates {
        (1,6.09105)
        (2,7.45459)
        (5,12.7498)
        (10,19.6026)
        (20,31.5223)
        (30,40.3709)
        (40,50.5107)
    };
    \addplot[mark=*,red] plot coordinates {
        (1,6.77399)
        (2,8.02968)
        (5,12.9438)
        (10,19.2366)
        (20,30.8592)
        (30,43.1432)
        (40,53.4381)
    };
    \end{axis}
\end{tikzpicture}
\end{subfigure}%
\begin{subfigure}{.3\linewidth}
\centering
\begin{tikzpicture}
    \begin{axis}[
        width=4cm, height=4cm,
        xlabel=$k$,
        every axis x label/.style={at={(current axis.south east)},right=0.5mm},
        every axis y label/.style={at={(current axis.north west)},above right=0.5mm}
    ]
    \addplot[mark=*,blue] plot coordinates {
        (1,9.19081)
        (2,11.5597)
        (5,17.2726)
        (10,28.4985)
        (20,44.424)
        (30,59.1056)
        (40,73.1218)
    };
    \addplot[mark=*,red] plot coordinates {
        (1,11.5619)
        (2,12.1272)
        (5,18.1834)
        (10,25.7845)
        (20,42.2212)
        (30,57.9243)
        (40,71.378)
    };
    \end{axis}
\end{tikzpicture}
\end{subfigure}%
\caption{Runtimes (s) for small datasets. word\_assoc on the left and nethept on the right with blue for top1 and red for top5. \label{fig:small_time}}
\end{figure}
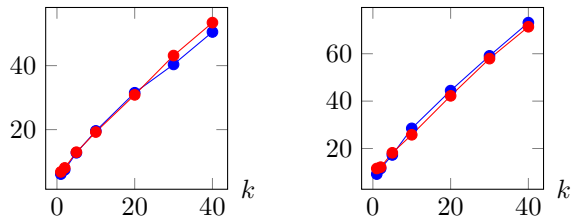

\begin{figure}
\centering
\begin{subfigure}{.3\linewidth}
\centering
\begin{tikzpicture}
	\begin{axis}[
	width=4.5cm, height=4.5cm,
	legend pos=north west,
	legend style={font=\tiny},
	ybar stacked, 
	xtick=data,
	xlabel=$k$,
	every axis x label/.style={at={(current axis.south east)},right=0.5mm},
	every axis y label/.style={at={(current axis.north)},above right=0.5mm},
	symbolic x coords={1,2,3,5,7,9,10}
	]
	\addplot[fill=green] coordinates
		{(1,184.7496) (2,185.802) (3,125.0443) (5,120.322) (7,184.654) (9,119.713) (10,144.9586)};
	\addplot[fill=red] coordinates
		{(1,61.27783) (2,73.37079) (3,82.60082) (5,106.7732) (7,118.3351) (9,139.8171) (10,165.19884)};
	\addplot[fill=blue!60] coordinates
		{(1,16401.75866) (2,20385.38) (3,25876.4666) (5,34410.08) (7,41319.68) (9,51909.88) (10,53080.17)};
	\end{axis}
\end{tikzpicture}
\end{subfigure}%
\begin{subfigure}{.3\linewidth}
\centering
\begin{tikzpicture}
	\begin{axis}[
	width=4.5cm, height=4.5cm,
	legend pos=north west,
	legend style={font=\tiny},
	ybar stacked, 
	xtick=data,
	xlabel=$k$,
	every axis x label/.style={at={(current axis.south east)},right=0.5mm},
	every axis y label/.style={at={(current axis.north)},above right=0.5mm},
	symbolic x coords={1,2,3,5,7,9,10}
	]
	\addplot[fill=green] coordinates
		{(1,150.042) (2,127.726) (3,146.322) (5,149.431) (7,138.265) (9,141.705) (10,145.854)};
	\addplot[fill=red] coordinates
		{(1,61.29615) (2,89.99468) (3,104.90273) (5,115.70864) (7,151.24381) (9,146.53681) (10,172.8951)};
	\addplot[fill=blue!60] coordinates
		{(1,16798.983) (2,23245.52) (3,30576.6) (5,38463.35) (7,53742.4) (9,59308.59) (10,62090.74)};
	\end{axis}
\end{tikzpicture}
\end{subfigure}%
\caption{Breakdown of computation time (s) for ljournal-2008. Blue stack corresponds to Algorithm \ref{alg:nodeselection}, red to improving the lower bound estimation, and green (which is almost invisible) to computing the initial lower bound estimate. Left is top1 and right is top5. \label{fig:ljournal_time_breakdown}}
\end{figure}
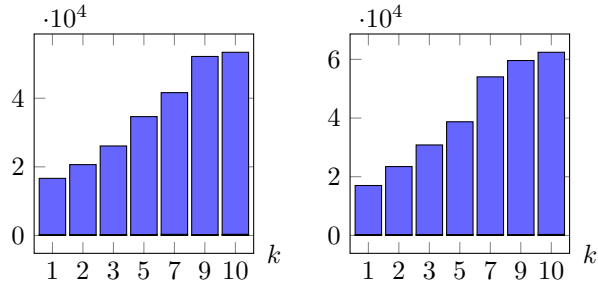

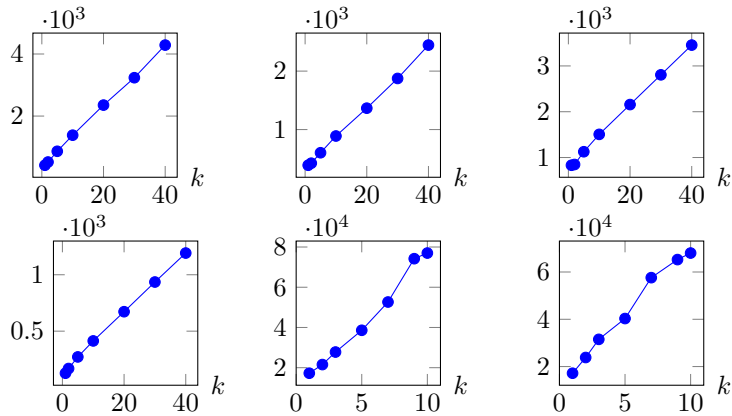
\begin{figure}
\centering
\begin{subfigure}{.25\linewidth}
\begin{tikzpicture}
    \begin{axis}[
        width=3.5cm, height=3.5cm,
        xlabel=$k$,
        scaled y ticks=base 10:-3,
        every axis x label/.style={at={(current axis.south east)},right=0.5mm},
        every axis y label/.style={at={(current axis.north west)},above right=0.5mm}
    ]
    \addplot[mark=*,blue] plot coordinates {
        (1,415.283)
        (2,519.109)
        (5,865.174)
        (10,1384.01)
        (20,2353.37)
        (30,3235.82)
        (40,4287.17)
    };
    \end{axis}
\end{tikzpicture}
\end{subfigure}%
\begin{subfigure}{.25\linewidth}
\begin{tikzpicture}
    \begin{axis}[
        width=3.5cm, height=3.5cm,
        xlabel=$k$,
        scaled y ticks=base 10:-3,
        every axis x label/.style={at={(current axis.south east)},right=0.5mm},
        every axis y label/.style={at={(current axis.north west)},above right=0.5mm}
    ]
    \addplot[mark=*,blue] plot coordinates {
        (1,389.673)
        (2,424.891)
        (5,603.546)
        (10,889.725)
        (20,1365.87)
        (30,1874.33)
        (40,2445.04)
    };
    \end{axis}
\end{tikzpicture}
\end{subfigure}%
\begin{subfigure}{.25\linewidth}
\begin{tikzpicture}
    \begin{axis}[
        width=3.5cm, height=3.5cm,
        xlabel=$k$,
        scaled y ticks=base 10:-3,
        every axis x label/.style={at={(current axis.south east)},right=0.5mm},
        every axis y label/.style={at={(current axis.north west)},above right=0.5mm}
    ]
    \addplot[mark=*,blue] plot coordinates {
        (1,831.133)
        (2,847.479)
        (5,1126.15)
        (10,1505.47)
        (20,2156.14)
        (30,2805.46)
        (40,3455.06)
    };
    \end{axis}
\end{tikzpicture}
\end{subfigure}%

\begin{subfigure}{.25\linewidth}
\begin{tikzpicture}
    \begin{axis}[
        width=3.5cm, height=3.5cm,
        xlabel=$k$,
        scaled y ticks=base 10:-3,
        every axis x label/.style={at={(current axis.south east)},right=0.5mm},
        every axis y label/.style={at={(current axis.north west)},above right=0.5mm}
    ]
    \addplot[mark=*,blue] plot coordinates {
        (1,127.12)
        (2,168.632)
        (5,271.544)
        (10,413.409)
        (20,672.362)
        (30,935.525)
        (40,1192.32)
    };
    \end{axis}
\end{tikzpicture}
\end{subfigure}%
\begin{subfigure}{.25\linewidth}
\begin{tikzpicture}
    \begin{axis}[
        width=3.5cm, height=3.5cm,
        xlabel=$k$,
        xmin=0,
        every axis x label/.style={at={(current axis.south east)},right=0.5mm},
        every axis y label/.style={at={(current axis.north)},above right=0.5mm}
    ]
    \addplot[mark=*,blue] plot coordinates {
        (1,17253.5)
        (2,21553.1)
        (3,27780.9)
        (5,38598.6)
        (7,52654)
        (9,74192.2)
        (10,76994.16)
    };
    \end{axis}
\end{tikzpicture}
\end{subfigure}%
\begin{subfigure}{.25\linewidth}
\begin{tikzpicture}
    \begin{axis}[
        width=3.5cm, height=3.5cm,
        xlabel=$k$,
        xmin=0,
        every axis x label/.style={at={(current axis.south east)},right=0.5mm},
        every axis y label/.style={at={(current axis.north)},above right=0.5mm}
    ]
    \addplot[mark=*,blue] plot coordinates {
        (1,17183.3)
        (2,23830)
        (3,31511.6)
        (5,40308.7)
        (7,57607.8)
        (9,65249)
        (10,68014)
    };
    \end{axis}
\end{tikzpicture}
\end{subfigure}%
\caption{Runtimes (s) for medium \& large datasets. Listed left to right: dblp top1, dblp top5, cnr top1, cnr top5, ljournal-2008 top1, ljournal-2008 top5. \label{fig:med_time}}
\end{figure}

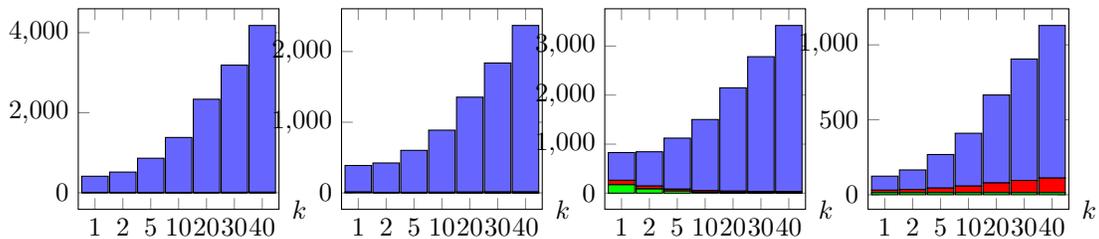
\begin{figure}
\centering
\begin{subfigure}{.25\linewidth}
\begin{tikzpicture}
	\begin{axis}[
	width=4.25cm, height=4.25cm,
	legend pos=north west,
	legend style={font=\tiny},
	ybar stacked, 
	xtick=data,
	xlabel=$k$,
	every axis x label/.style={at={(current axis.south east)},right=0.5mm},
	every axis y label/.style={at={(current axis.north west)},above right=0.5mm},
	symbolic x coords={1,2,5,10,20,30,40}
	]
	\addplot[fill=green] coordinates
		{(1,7.94428) (2,6.543536) (5,5.021438) (10,4.433742) (20,4.258454) (30,4.352406) (40,4.302964)};
	\addplot[fill=red] coordinates
		{(1,3.029352) (2,2.986437) (5,3.9417574) (10,4.773761) (20,6.7428282) (30,7.9309048) (40,9.9757478)};
	\addplot[fill=blue!60] coordinates
		{(1,402.87862) (2,508.13108) (5,854.7302) (10,1372.7451) (20,2326.3304) (30,3176.0656) (40,4165.3668)};
	\end{axis}
\end{tikzpicture}
\end{subfigure}%
\begin{subfigure}{.25\linewidth}
\begin{tikzpicture}
	\begin{axis}[
	width=4.25cm, height=4.25cm,
	legend pos=north west,
	legend style={font=\tiny},
	ybar stacked, 
	xtick=data,
	xlabel=$k$,
	every axis x label/.style={at={(current axis.south east)},right=0.5mm},
	every axis y label/.style={at={(current axis.north west)},above right=0.5mm},
	symbolic x coords={1,2,5,10,20,30,40}
	]
	\addplot[fill=green] coordinates
		{(1,8.185192) (2,5.150426) (5,4.403582) (10,4.393818) (20,4.302406) (30,4.406432) (40,4.161)};
	\addplot[fill=red] coordinates
		{(1,3.537287) (2,3.068092) (5,4.1991762) (10,5.3121472) (20,7.0291446) (30,8.709089) (40,9.4447158)};
	\addplot[fill=blue!60] coordinates
		{(1,376.45626) (2,415.171) (5,593.40394) (10,878.0222) (20,1343.0562) (30,1824.9268) (40,2355.0002)};
	\end{axis}
\end{tikzpicture}
\end{subfigure}%
\begin{subfigure}{.25\linewidth}
\begin{tikzpicture}
	\begin{axis}[
	width=4.25cm, height=4.25cm,
	legend pos=north west,
	legend style={font=\tiny},
	ybar stacked, 
	xtick=data,
	xlabel=$k$,
	every axis x label/.style={at={(current axis.south east)},right=0.5mm},
	every axis y label/.style={at={(current axis.north west)},above right=0.5mm},
	symbolic x coords={1,2,5,10,20,30,40}
	]
	\addplot[fill=green] coordinates
		{(1,171.8056) (2,86.07618) (5,43.45636) (10,21.8096) (20,15.39546) (30,11.09) (40,11.1384)};
	\addplot[fill=red] coordinates
		{(1,89.85303) (2,58.495158) (5,38.530678) (10,30.51092) (20,25.75966) (30,25.837188) (40,24.17963)};
	\addplot[fill=blue!60] coordinates
		{(1,566.35536) (2,699.85542) (5,1041.10178) (10,1449.83114) (20,2107.3028) (30,2746.5882) (40,3384.4528)};
	\end{axis}
\end{tikzpicture}
\end{subfigure}%
\begin{subfigure}{.25\linewidth}
\begin{tikzpicture}
	\begin{axis}[
	width=4.25cm, height=4.25cm,
	legend pos=north west,
	legend style={font=\tiny},
	ybar stacked, 
	xtick=data,
	xlabel=$k$,
	every axis x label/.style={at={(current axis.south east)},right=0.5mm},
	every axis y label/.style={at={(current axis.north west)},above right=0.5mm},
	symbolic x coords={1,2,5,10,20,30,40}
	]
	\addplot[fill=green] coordinates
		{(1,14.1815) (2,14.15882) (5,14.18462) (10,14.1687) (20,14.2806) (30,14.17542) (40,14.21618)};
	\addplot[fill=red] coordinates
		{(1,16.011972) (2,20.111222) (5,30.109782) (10,43.516602) (20,64.43484) (30,81.20456) (40,96.94218)};
	\addplot[fill=blue!60] coordinates
		{(1,93.8155) (2,131.27174) (5,224.12474) (10,352.4913) (20,586.8226) (30,810.297) (40,1018.2152)};
	\end{axis}
\end{tikzpicture}
\end{subfigure}
\caption{Breakdown of computation time (s) for medium datasets. Blue stack corresponds to Algorithm \ref{alg:nodeselection}, red to improving the lower bound estimation, and green (which is almost invisible) to computing the initial lower bound estimate. Listed left to right: dblp top1, dblp top5, cnr top1, cnr top5. \label{fig:med_time_breakdown}}
\end{figure}

\begin{figure}
\centering
\begin{subfigure}{.25\linewidth}
\begin{tikzpicture}
    \begin{axis}[
        legend pos=south east,
        width=3.75cm, height=3.75cm,
        xlabel=$k$,
        every axis x label/.style={at={(current axis.south east)},right=0.5mm},
        every axis y label/.style={at={(current axis.north west)},above right=0.5mm}
    ]
    \addplot[mark=*,blue] plot coordinates {
        (1,0.04407912975)
        (2,0.05403362895)
        (5,0.131181263175)
        (10,0.14628792705)
        (20,0.235589372025)
        (30,0.300812257275)
        (40,0.380422770075)
    };
    \addplot[mark=*,red] plot coordinates {
        (1,0.04786621365)
        (2,0.05662417695)
        (5,0.091813427175)
        (10,0.13711271565)
        (20,0.2205692367)
        (30,0.31021865385)
        (40,0.38529358425)
    };
    \end{axis}
\end{tikzpicture}
\end{subfigure}%
\begin{subfigure}{.25\linewidth}
\begin{tikzpicture}
    \begin{axis}[
        legend pos=south east,
        width=3.75cm, height=3.75cm,
        xlabel=$k$,
        every axis x label/.style={at={(current axis.south east)},right=0.5mm},
        every axis y label/.style={at={(current axis.north west)},above right=0.5mm}
    ]
    \addplot[mark=*,blue] plot coordinates {
        (1,0.07012726065)
        (2,0.088900429675)
        (5,0.133805420525)
        (10,0.22281016885)
        (20,0.3488430885)
        (30,0.465675027075)
        (40,0.57654024345)
    };
    \addplot[mark=*,red] plot coordinates {
        (1,0.095937750575)
        (2,0.09771763995)
        (5,0.146447013425)
        (10,0.20861826375)
        (20,0.342994010325)
        (30,0.473710608925)
        (40,0.587839780725)
    };
    \end{axis}
\end{tikzpicture}
\end{subfigure}%
\begin{subfigure}{.25\linewidth}
\begin{tikzpicture}
    \begin{axis}[
        legend pos=south east,
        width=3.75cm, height=3.75cm,
        xlabel=$k$,
        every axis x label/.style={at={(current axis.south east)},right=0.5mm},
        every axis y label/.style={at={(current axis.north west)},above right =0.5mm}
    ]
    \addplot[mark=*,blue] plot coordinates {
        (1,1.526776)
        (2,1.882301)
        (5,2.7808671)
        (10,3.848836)
        (20,5.574980)
        (30,7.27829)
        (40,8.93966)
    };
    \addplot[mark=*,red] plot coordinates {
        (1,0.801065)
        (2,1.115871)
        (5,1.921811)
        (10,3.08979)
        (20,5.19601)
        (30,7.1556411)
        (40,9.003388)
    };
    \end{axis}
\end{tikzpicture}
\end{subfigure}%

\begin{subfigure}{.25\linewidth}
\begin{tikzpicture}
    \begin{axis}[
        legend pos=south east,
        width=3.75cm, height=3.75cm,
        xlabel=$k$,
        every axis x label/.style={at={(current axis.south east)},right=0.5mm},
        every axis y label/.style={at={(current axis.north west)},above right=0.5mm}
    ]
    \addplot[mark=*,blue] plot coordinates {
        (1,0.887221)
        (2,1.124224)
        (5,1.877000)
        (10,3.032771)
        (20,5.119721)
        (30,6.987006)
        (40,9.170743)
    };
    \addplot[mark=*,red] plot coordinates {
        (1,1.501678)
        (2,1.658125)
        (5,2.335703)
        (10,3.420936)
        (20,5.214368)
        (30,7.1000865)
        (40,9.130239)
    };
    \end{axis}
\end{tikzpicture}
\end{subfigure}%
\begin{subfigure}{.25\linewidth}
\begin{tikzpicture}
    \begin{axis}[
        legend pos=south east,
        width=3.75cm, height=3.75cm,
        xlabel=$k$,
        ymax=150,
        xmin=0,
        every axis x label/.style={at={(current axis.south east)},right=0.5mm},
        every axis y label/.style={at={(current axis.north west)},above right=0.5mm}
    ]
    \addplot[mark=*,blue] plot coordinates {
        (1,44.22968)
        (2,54.8534)
        (3,69.40997)
        (5,91.70146)
        (7,108.58634)
        (9,129.6161)
        (10,136.0799)
    };
    \addplot[mark=*,red] plot coordinates {
        (1,32.9478)
        (2,45.48916)
        (3,59.68772)
        (5,74.9220)
        (7,104.0023)
        (9,114.269)
        (10,119.5658)
    };
    \end{axis}
\end{tikzpicture}
\end{subfigure}%
\caption{Memory consumption (Gb) for all datasets. Blue corresponds to top1 and red to top5. Listed left to right: word\_assoc, nethept, cnr, dblp \& ljournal-2008. \label{fig:small_med_mem}}
\end{figure}
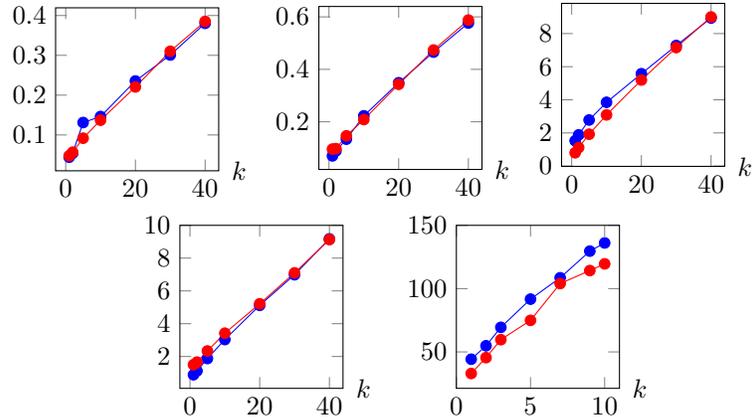

\section{Conclusion \& Future Work}

In this work we presented \emph{RPS}, a novel and scalable approach to the EIL problem. We showed the correctness and a detailed running-time analysis of our approach. Furthermore, we provided two lower bound results: one on the running-time requirement for any approach to solve the EIL problem and another on the number of Monte Carlo simulations required by \emph{MCGreedy} to return a correct solution with high probability. As a result, the expected runtime of \emph{RPS} is always less than the expected runtime of \emph{MCGreedy}. Finally, we describe how our approach can be generalized to a multi-campaign triggering model. In future work we plan to investigate how to adapt our approach to a scenario where the source of the misinformation is only partially known.

\balance

\bibliographystyle{plainurl}
\bibliography{misinformation_prevention_arxiv_full}

\end{document}